\documentclass[10pt,twocolumn]{article}

\textwidth 7.4in \textheight 9.65in
\oddsidemargin -0.45in \topmargin -0.175in \headheight 0in \headsep -.25in
\columnsep=.25in

\usepackage{times}

\input{main.sty}

\hyphenation{bi-product}

\allowdisplaybreaks

%%%%%%%%%%%%%%%%%%%%%%%%%%%%%%%%%%%%%%%%%%%%%%%%%%%%%%%%%%%%%%%%%%%%%%%%%%%%%%%
\begin{document}

\title{\bf Applying Quantitative Semantics\\
  to Higher-Order Quantum Computing\thanks{Partially 
    founded by French ANR project \textsc{Coquas} 
    (number 12 JS02 006 01) and CNRS chair ``Logique lin\'eaire et calcul''.}}

\author{\begin{tabular}{c}
    Michele Pagani\\
    \normalsize Universit\'e Paris 13, Sorbonne Paris Cit\'e\\
    \normalsize Villetaneuse, France\\
    \normalsize michele.pagani@lipn.univ-paris13.fr
  \end{tabular}
  \begin{tabular}{c}
    Peter Selinger\\
    \normalsize Dalhousie University\\ 
    \normalsize Halifax, Canada\\
    \normalsize selinger@mathstat.dal.ca
  \end{tabular}
  \begin{tabular}{c}
    Beno\^{i}t Valiron\\
    \normalsize CIS Dept, University of Pennsylvania\\ 
    \normalsize Philadelphia, U.S.A.\\
    \normalsize benoit.valiron@monoidal.net
  \end{tabular}
  }

\date{}

\maketitle

\begin{abstract}
 Finding a denotational semantics for higher order quantum
  computation is a long-standing problem in the semantics of quantum
  programming languages. Most past approaches to this problem fell
  short in one way or another, either limiting the language to an
  unusably small finitary fragment, or giving up important features of
  quantum physics such as entanglement. In this paper, we propose a
  denotational semantics for a quantum lambda calculus with recursion
  and an infinite data type, using constructions from quantitative
  semantics of linear logic.
\end{abstract}

%%%%%%%%%%%%%%%%%%%%%%%%%%%%%%%%%%%%%%%%%%%%%%%%%%%%%%%%%%%%
%%%%%%%%%%%%%%%%%%%%%%%%%%%%%%%%%%%%%%%%%%%%%%%%%%%%%%%%%%%%
%%%%%%%%%%%%%%%%%%%%%%%%%%%%%%%%%%%%%%%%%%%%%%%%%%%%%%%%%%%%
\section{Introduction}

Type theory and denotational semantics have been successfully used to
model, design, and reason about programming languages for almost half
a century. The application of such methods to quantum computing is
much more recent, going back only about 10 years {\cite{Selinger04}}.

An important problem in the semantics of quantum computing is how to
combine quantum computing with higher-order functions, or in other
words, how to design a functional quantum programming language. A
syntactic answer to this question was arguably given with the design
of the quantum lambda calculus
{\cite{valiron08phd,selinger06lambda}}. The quantum lambda calculus
has a well-defined syntax and operational semantics, with a strong
type system and a practical type inference algorithm. However, the
question of how to give a {\em denotational} semantics to the quantum
lambda calculus turned out to be difficult, and has remained open for
many years {\cite{Selinger04b,SV09}}. One reason that designing such a
semantics is difficult is that quantum computation is inherently
defined on {\em finite dimensional} Hilbert spaces, whereas the
semantics of higher-order functional programming languages, including
such features as infinite data types and recursion, is inherently
infinitary.

In recent years, a number of solutions have been proposed to the
problem of finding a denotational semantics of higher-order quantum
computation, with varying degrees of success. The first
approach~\cite{valiron06fully} was to restrict the language to strict
linearity, meaning that each function had to use each argument exactly
once, in the spirit of linear logic. In this way, all infinitary
concepts (such as infinite types and recursion) were eliminated from
the language. Not surprisingly, the resulting finitary language
permitted a fully abstract semantics in terms of finite dimensional
spaces; this was hardly an acceptable solution to the general problem.
The second approach~\cite{malherbe2010} was to construct a semantics
of higher-order quantum computation by methods from category theory;
specifically, by applying a presheaf construction to a model of
first-order quantum computation. This indeed succeeds in yielding a
model of the full quantum lambda calculus, albeit without
recursion. The main drawbacks of the presheaf model are the absence of
recursion, and the fact that such models are relatively difficult to
reason about. The third
approach~\cite{GoIquantum} was based on the Geometry of
Interaction. Starting from a traced monoidal category of basic quantum
operations, Hasuo and Hoshino applied a sequence of categorical
constructions, which eventually yielded a model of higher-order
quantum computation. The problem with this approach is that the tensor
product constructed from the geometry-of-interaction construction does
not coincide with the tensor product of the underlying physical data
types. Therefore, the model drops the possibility of entangled states,
and thereby fails to model one of the defining features of quantum
computation. 

\noindent
{\bf Our contribution.}
\quad 
In this paper, we give a novel denotational semantics of higher-order
quantum computation, based on methods from {\em quantitative
  semantics}. Quantitative semantics refers to a family of semantics
of linear logic that interpret proofs as linear mappings between
vector spaces (or more generally, modules), and standard lambda terms
as power series.  The original idea comes from Girard's normal functor
semantics \cite{Girard88c}. More recently, quantitative semantics has
been used to give a solid, denotational semantics for various
algebraic extensions of lambda calculus, such as probabilistic and
differential lambda calculi (e.g.\ \cite{danosehrhard}, \cite{finsp}).

One feature of our model is that it can represent {\em infinite
  dimensional} structures, and is expressive enough to describe
recursive types, such as lists of qubits, and to model recursion. This
is achieved by providing an exponential structure {\em \`{a} la}
linear logic. Unlike the Hasuo-Hoshino model, our model permits
general entanglement. We interpret (a minor variant of) the quantum
lambda calculus in this model.  Our main result is the adequacy of the
model with respect to the operational semantics.

The model is the juxtaposition of a simple, finite-dimensional model
of quantum computation together with a canonical completion yielding
the structures of linear logic.  Our model demonstrates that the
quantum and the classical ``universes'' work well together, but also
-- surprisingly -- that they do not mix too much, even at higher order
types.

\noindent
{\bf Outline.}
\quad 
In Section~\ref{sec:background}, we briefly review some background.
Section~\ref{sec:qlc} presents the version of the quantum lambda
calculus that we use in this paper, including its operational
semantics.  Section~\ref{sec:sem} presents the denotational semantics
of the quantum lambda calculus, and Section~\ref{sec:adequacy} proves
the adequacy theorem. Section~\ref{subsect:discussion} concludes with
some properties of the representable elements.

%%%%%%%%%%%%%%%%%%%%%%%%%%%%%%%%%%%%%%%%%%%%%%%%%%%%%%%%%%%%
%%%%%%%%%%%%%%%%%%%%%%%%%%%%%%%%%%%%%%%%%%%%%%%%%%%%%%%%%%%%
%%%%%%%%%%%%%%%%%%%%%%%%%%%%%%%%%%%%%%%%%%%%%%%%%%%%%%%%%%%%
\section{Background}\label{sec:background}

%-----------------------------------------------------------
\subsection{Quantum computation in a nutshell}

Quantum computation is a computational paradigm based on the laws of
quantum physics. We briefly recall some basic notions; please see
{\cite{nielsen02quantum}} for a more complete treatment. The basic
unit of information in quantum computation is a {\em quantum bit} or
{\em qubit}, whose state is given by a normalized vector in the
two-dimensional Hilbert space $\C^2$. It is customary to write the
canonical basis of $\C^2$ as $\{\ket0,\ket1\}$, and to identify these
basis vectors with the booleans false and true, respectively. The
state of a qubit can therefore be thought of as a complex linear
combination $\alpha\ket0 + \beta\ket1$ of booleans, called a {\em
  quantum superposition}. More generally, the state of $n$ qubits is
an element of the $n$-fold tensor product
$\C^2\otimes\ldots\otimes\C^2$.

There are three kinds of basic operations on quantum data:
initializations, unitary maps and measurements. Initialization
prepares a new qubit in state $\ket0$ or $\ket1$. A unitary map, or
\define{gate}, is an invertible linear map $U$ such that $U^*=U\inv$;
here $U^*$ denotes the complex conjugate transpose of $U$. Finally,
the operation of measurement consumes a qubit and returns a classical
bit. If $n$ qubits are in state $\alpha\ket0\tensor\phi_0 +
\beta\ket1\tensor\phi_1$, where $\phi_0$ and $\phi_1$ are normalized
states of $n-1$ qubits, then measuring the leftmost qubit yields
 false with probability $|\alpha|^2$, leaving the remaining
qubits in state $\phi_0$, and true with probability $|\beta|^2$,
leaving the remaining qubits in state $\phi_1$.

\begin{example}\label{ex:cointoss}
  A small algorithm is the simulation of an unbiased coin toss:
  initialize one quantum bit to $\ket{0}$, apply the Hadamard gate
  sending $\ket0$ to $\frac1{\sqrt2}(\ket0+\ket1)$ and $\ket1$ to
  $\frac1{\sqrt2}(\ket0-\ket1)$, then measure. The result is
  true with probability $\frac12$ and false with probability
  $\frac12$.
\end{example}

\begin{figure}
\begin{center}
\scalebox{0.8}{$\xymatrix@=.7pc{
  &  &  &  &  &\ar@{}|*\txt{qubit 1:~~~~}  & \ket{\phi}\ar@{-}[r] & 
          *-={\bullet}\ar@{-}[r]\ar@{-}[dd] & 
          *+[F]{H}\ar@{-}[r] &
          *+<10pt>{\;}\\
  &  & \ar@{}|*!/lu8pt/\txt{(i)} &  &  &  && 
          \ar@{}|*!<15pt,5pt>\txt{(ii)}  &  & 
          *+{M}\ar@{}
          \ar@/^2.2pc/@{.>}[rrddddd]|!{[llllddd];
            [rrrddd]}\hole^<(0.2){x,y}\\
  \ar@{}|*\txt{qubit 2:~~~~}&\ket{0}\ar@{-}[r] & *+[F]{H}\ar@{-}[r] & 
          *-={\bullet}\ar@{-}[rrrr]\ar@{-}[dd] 
          & & & & *-={\oplus}\ar@{-}[rr] &  &
          *+<10pt>{\;}\\
  &  &  &  &  &  &  &  &  & *+<10pt>{\;}\\
  \ar@{}|*\txt{qubit 3:~~~~} & \ket{0}\ar@{-}[rr] &  & *-={\oplus}
          \ar@{-}`r[rd][rd]&&
          \ar@3{.}[rrrrrrrr]_(.6)*+\txt{location
            B}^(.4)*+\txt{location A}
          &&&&&&&&\\
  & & & &*-={}\ar@{-}`d[rd][rd]&& & & & & & & & & \\
  &  &  &  &  & *-={}\ar@{-}[rrrrrr] & & & & & & 
          *+[F]{U_{xy}}\ar@{-}[r] \ar@{}|*!<10pt,15pt>\txt{(iii)}
          & \ket{\phi}  
  \save
  "1,10"."3,10"*[F--]\frm{}
  \restore
  \save
  "3,3"."5,4"*!<3pt,-3ptpt>++[F.]\frm{}
  \restore
  \save
  "1,8"."3,10"*!<2pt,-6pt>++[F.]\frm{}
  \restore
}$}
\end{center}
\vspace{-10pt}
\caption{\footnotesize The quantum teleportation protocol.}\label{fig:telep}
\end{figure}
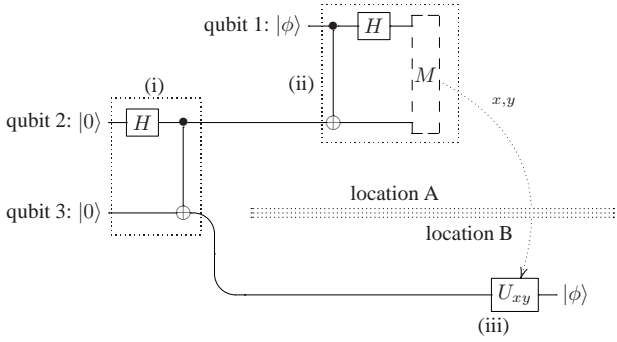
  
\begin{example}\label{ex:telep-presentation}
  A slightly more involved algorithm is the {\em quantum teleportation
    algorithm} (see {\cite{nielsen02quantum}} for details). The
  procedure is summarized in Figure~\ref{fig:telep}. Wires represent
  the path of quantum bits in the computation, and time flows from
  left to right. The gate \scalebox{0.8}{$\xy(0,0)*+[F]{H}\endxy$}
  stands for an application of the Hadamard gate, whereas the gate
  \raisebox{1ex}{\scalebox{0.8}{$\xy(0,0)*{\bullet};
      (0,-3)*{\oplus}**\dir{-}\endxy$}}
  is a controlled-not: it negates the bottom qubit if the upper one is
  in state $\ket1$. The box $M$ is a measurement. The unitaries
  $U_{xy}$ are
  \[
  U_{00} = \left(\begin{smallmatrix}1&0\\0&1\end{smallmatrix}\right),
  ~ U_{01} =
  \left(\begin{smallmatrix}0&1\\1&0\end{smallmatrix}\right), ~ U_{10}
  =
  \left(\begin{smallmatrix}1&0\\0&\textrm{-}1\end{smallmatrix}\right),
  ~ U_{11} =
  \left(\begin{smallmatrix}0&1\\\textrm{-}1&0\end{smallmatrix}\right).
  \]
  The goal is to send a quantum bit in an unknown state $\ket\phi$
  from Location A to Location B using two classical bits. The
  procedure can be reversed to send two classical bits using a quantum
  bit. In this case it is called the \define{dense coding algorithm}
  {\cite{nielsen02quantum}}.
  
  The algorithm consists of three parts. In (i), two quantum bits
  (qubits 2 and 3) are entangled in state
  $\frac1{\sqrt2}(\ket{00}+\ket{11})$. In (ii), the input qubit 1 in
  state $\ket\phi$ is entangled with qubit 2, then both are
  measured. The result is sent over location B, where in (iii) an
  correction $U_{xy}$ is applied on qubit 3, setting it to state
  $\ket\phi$.
\end{example}

%-----------------------------------------------------------
\subsection{Density matrices and completely positive maps}

If we identify $\ket0$ and $\ket1$ with the standard basis vectors
$\zzmatrix{c}{1\\0}$ and $\zzmatrix{c}{0\\1}$, the
state of a qubit can be expressed as a two-dimensional vector
$v=\alpha\ket0 + \beta\ket1=\zzmatrix{c}{\alpha\\\beta}$. Similarly,
the state of an $n$-qubit system can be expressed as an
$2^n$-dimensional column vector.
Often, it is necessary to consider {\em probability distributions} on
quantum states; these are also known as {\em mixed states}. Consider a
quantum system that is in one of several states $v_1,\ldots,v_k$ with
probabilities $p_1,\ldots,p_k$, respectively. The {\em density matrix}
of this mixed state is defined to be
$ A = \sum_i p_i v_iv_i^*,
$
where $(-)^*$ denotes the adjoint operator.
By a theorem of Von Neumann, the density matrix is a good
representation of mixed states, in the following sense: two mixed
states are indistinguishable by any physical experiment if and only if
they have the same density matrix {\cite{nielsen02quantum}}. Note that
$\tr A = p_1+\ldots+p_k$. For our purposes, it is often convenient to
permit sub-probability distributions, so that $p_1+\ldots+p_k\leq 1$.

Let us write $\C^{n\times n}$ for the space of $n\times
n$-matrices. Recall that a matrix $A\in\C^{n\times n}$ is called {\em
  positive} if $v^*Av\geq 0$ for all $v\in\C^n$. Given
$A,B\in\C^{n\times n}$, we write $A\sqleq B$ iff $B-A$ is positive;
this is the so-called {\em {\Lowner} partial order}. A linear map
$F:\C^{n\times n}\to\C^{m\times m}$ is called {\em positive} if
$A\sqgeq 0$ implies $F(A)\sqgeq 0$, and {\em completely positive} if
$F\otimes\id_k$ is positive for all $k$, where $\id_k$ is the identity
function on $\C^{k\times k}$. If $F$ moreover satisfies
$\tr(F(A))\leq\tr A$ for all positive $A$, then it is called a {\em
  superoperator}. The density matrices are precisely the positive
matrices $A$ of trace $\leq 1$. Moreover, the superoperators
correspond precisely to those functions from mixed states to mixed
states that are physically possible {\cite{nielsen02quantum,Selinger04}}.

%-----------------------------------------------------------
\subsection{The category \texorpdfstring{$\CPM$}{CPM}}

The category $\CPMs$ is defined as follows: the objects are natural
numbers, and a morphism $F:n\to m$ is a completely positive map
$F:\C^{n\times n}\to\C^{m\times m}$. Let $\CPM$ be the free completion
of $\CPMs$ under finite biproducts; specifically, the objects of
$\CPM$ are sequences $\vec n=(n_1,\ldots,n_k)$ of natural numbers,
and a morphism $F:\vec n\to\vec m$ is a matrix $(F_{ij})$ of morphisms
$F_{ij}:n_j\to m_i$ of $\CPMs$. The categories $\CPMs$ and $\CPM$ are
symmetric monoidal, and in fact, compact closed {\cite{Selinger04}}.

%-----------------------------------------------------------
\subsection{Limitations of \texorpdfstring{$\CPM$}{CPM} as a model}
\label{sec:rec-type}

The category $\CPM$ can serve as a
fully abstract model for a simple, strictly linear, finitary quantum
lambda calculus {\cite{valiron06fully}}. For example, the type $\bit$
is interpreted as $(1,1)$, and the type $\qubit$ is interpreted as
$(2)$. Measurement, as a map from $\qubit$ to $\bit$, sends
$(\begin{smallmatrix}a&b\\c&d\end{smallmatrix})$ to $(a,d)$. The coin
toss is a map $(1)\to(1,1)$ sending $(p)$ to $(\frac{p}2,\frac{p}2)$.
Function spaces are interpreted via the compact closed structure.

As mentioned in the introduction, the semantics of
{\cite{valiron06fully}} is extremely limited, because it is completely
finitary. Thus recursion, infinite data types, and non-linear
functions (i.e., those that can use their argument more than once) had
to be completely removed from the language in order to fit the
model. For example, even the simple squaring function $f\mapsto
\lambda x.f(f\,x)$ is not representable in $\CPM$.

The purpose of the present paper is to remove all of these
restrictions. As an example, consider the following pseudo-code (in
ML-style):
\begin{Verbatim}[fontsize=\footnotesize,commandchars=\\\{\}]
val\textrm{\bf qlist} : qubit -> qubit list
let rec\textrm{\tt\bf qlist} q = if (cointoss) then [q]
            else let (x,y) = entangle q in x::(\textrm{\bf{}qlist} y)
\end{Verbatim}
Here, {\tt cointoss} is a fair coin toss, and the function
{\tt entangle} sends $\alpha\ket0+\beta\ket1$ to $\alpha\ket{00} +
\beta\ket{11}$.

So if the function ${\bf qlist}$ is applied to a qubit
$\alpha\ket0+\beta\ket1$, the output is $\alpha\ket0+\beta\ket1$ with
probability $\frac12$, $\alpha\ket{00}+\beta\ket{11}$ with probability
$\frac14$, $\alpha\ket{000}+\beta\ket{111}$ with probability
$\frac18$, and so on. Its semantics should be of type $2 \to
(2,4,8,\ldots)$, mapping
\[
 \left(\begin{smallmatrix}a&b\\c&d\end{smallmatrix}\right)\mapsto
 \left(
    \frac12\left(\begin{smallmatrix}a&b\\c&d\end{smallmatrix}\right),
    \frac14\left(\begin{smallmatrix}a&0&0&b\\0&0&0&0\\0&0&0&0\\
        c&0&0&d\end{smallmatrix}\right)
    ,\ldots
  \right).
\]
The category $\CPM$ is ``almost'' capable of handling this case, but
not quite, because it cannot express infinite tuples of matrices. The
model we propose in this paper is essentially an extension of $\CPM$
to infinite biproducts, using methods developed
in~\cite{Girard99coherentbanach,MelliesTT09,LairdMM12,LairdMMP13}.

%%%%%%%%%%%%%%%%%%%%%%%%%%%%%%%%%%%%%%%%%%%%%%%%%%%%%%%%%%%%
%%%%%%%%%%%%%%%%%%%%%%%%%%%%%%%%%%%%%%%%%%%%%%%%%%%%%%%%%%%%
%%%%%%%%%%%%%%%%%%%%%%%%%%%%%%%%%%%%%%%%%%%%%%%%%%%%%%%%%%%%
\section{A quantum lambda calculus}\label{sec:qlc}

\begin{table}
\[
\begin{array}{l}
{\it Terms}\quad M,N,P
  \quad {:}{:}{=}
\\[1ex]
\qquad  x\bor \lambda x^A.M \bor MN\bor
        \punit\bor\letunitterm{M}{N}\bor\\[1ex]
\qquad  \tensterm{M}{N}\bor 
        \lettensterm{x^A}{y^B}{M}{N}\bor\\[1ex]
\qquad  \injl M\bor\injr M\bor\match P{x^A}M{y^B}N\bor\\[1ex]
\qquad  \splitlist[\!A]\bor
\letrec{f^{A{\multimap} B}}{\!x}{M\!}{\!N}
        \bor\meas \bor \new \bor U
\\[3ex]
{\it Values}\quad V,W
\quad {:}{:}{=}
\\[1ex]
\qquad  x \bor c \bor \lambda x^A.M \bor 
\tensterm{V}{W}\bor\injl{V}\bor\injr{W}
\\[3ex]
 {\it Types}\quad A, B, C
  \quad{:}{:}{=}
  \\[1ex]
\qquad  \qubit\bor A{\loli}B\bor \bang{(A\,{\loli}\, B)}\bor
  \tunit\bor A\,{\tensor}\,B \bor  A\,{\sumtype}\, B\bor\tlist{A}.
\end{array}
\]
\caption{\footnotesize Grammars of terms, values and types. 
  }\label{table:terms_grammar}
\end{table}

We define a variant of the typed quantum lambda calculus
of~\cite{SV09}.  The main difference is that the language in this
present paper is a true extension of linear logic (see the type
assignment system of Table~\ref{fig:typing rules}). In particular, in
contrast with~\cite{SV09}, $\oc( A\otimes B)\multimap\oc A\otimes \oc
B$ is not provable and there is no need for a subtyping relation.
The operational semantics implements a call-by-value strategy. An
untyped call-by-name variant has been studied in~\cite{LagoMZ11}.

The classes of \emph{terms}, \emph{values} and \emph{types} are
defined in Table~\ref{table:terms_grammar}. The symbol $c$ ranges over
the set of term constants $\{\punit, {\tt split}^A,$ $\meas, \new,
U\}$. The constant $U$ ranges over a set of elementary unitary
transformations on quantum bits. In the examples below, we will be
using the Hadamard gate $H$ and the controlled-not gate $N_c$, defined
as follows {\cite{nielsen02quantum}}:
\begin{align}
  H&=\frac{1}{\sqrt 2}
  	\left(\begin{smallmatrix}
   	 1&1\\
   	 1&-1
   	\end{smallmatrix}\right)
  &
  N_c&=
  	\left(\begin{smallmatrix}
   	 1&0&0&0\\
   	 0&1&0&0\\
	 0&0&0&1\\
	 0&0&1&0
   	\end{smallmatrix}
      \right)\label{eq:h_and_nc}
\end{align}  
 
Notice that bound variables
are given in Church style, i.e., with a type annotation. This enables
Proposition~\ref{prop:unicity_derivation}, and simplifies the semantic
interpretation of the typed terms. We omit such
annotations  in the sequel if uninteresting or obvious.

We have two kinds of arrows: the linear arrow $A{\multimap}B$, and the
intuitionistic arrow $\bang{(A\multimap B)}$, which is obtained by the
call-by-value translation of the intuitionistic implication into
linear logic \cite{ll}. Intuitively, only the terms of type
$\bang{(A\multimap B)}$ represent functions that can be used
repeatedly, whereas terms of type $A{\multimap}B$ must be used exactly
once.
A type of the form $\oc{A}$ is called a \emph{$\oc$-type} or
\emph{non-linear} type, and all other types are called {\em linear}.
The distinction between linear and non-linear types is crucial for
allowing the type system to enforce the no-cloning property of quantum
physics.

By convention, $\multimap$ is associative to the right, while
application and tensor are associative to the left. We use the
notation $A^{\tensor n}$ for $A$ tensored $n$ times.  The type
$\tlist{A}$ denotes finite lists of type $A$.  When doing structural
induction on types, we assume that $\tlist A$ is greater than
$A^{\otimes n}$, for any $n\in\N$.

The set of terms and types is somewhat spartan; however it
can be easily extended by introducing syntactic sugar. Note that,
for technical convenience, we have only allowed types of the form 
$\bang{A}$ when $A$ is an arrow type. However,
for an arbitrary type $A$, the type $\bang{A}$ can be simulated 
by using $\bang(\tunit\loli A)$ instead.
\begin{notation}\label{notation:syntax}
We write $\bit=\tunit\oplus\tunit$, 
$\ttrue=\injr\punit$,
$\ffalse=\injl\punit$,
$\nil=\injl\punit$ and
$\cons MN=\injr (M\otimes N)$. We
  write $\lambda\punit.M$ for the term $\lambda
  z^{\tunit}.(\letunitterm zM)$, where $z$ is a fresh variable, and
    $\iftermx PMN$ for $\match P{x^\tunit}{N}{y^\tunit}{M}$.
\end{notation}

\begin{table*}
\scalebox{0.9}{
\begin{minipage}{1.1\textwidth}
\[
\infer[ax]{\bang{\Delta},x:A\entail x:A}{\text{$A$ linear}}
\qquad
\infer[axd]{\bang{\Delta},x:\bang (A\loli B)\entail x:A\loli B}{}
\qquad
\infer[p]{\bang\Delta\entail V:\bang{(A\loli B)}}{
  \bang\Delta\entail V:A\loli B
  &
  V\textrm{ value}
}
\qquad
\infer[\tunit_I]{
  \bang{\Delta}\entail \punit:\tunit
}{}
\]

\[
\infer[\loli_I]{\Delta\entail\lambda x^A.M:A\loli B}{
  \Delta,x:A\entail M:B}
\quad
\infer[\loli_E]{\bang{\Delta},\Gamma,\Sigma\entail MN:B}{
  \bang\Delta,\Gamma\entail M:A\loli B
  &
  \bang\Delta,\Sigma\entail N:A
}
\quad
\infer[\tunit_E]{
  \bang\Delta,\Gamma,\Sigma\entail
  \letunitterm{M}{N}:A
}{
  \bang\Delta,\Gamma\entail M:\tunit
  &
  \bang\Delta,\Sigma\entail N:A
}
\]

\[
\infer[\tensor_I]{
  \bang{\Delta},\Gamma,\Sigma\entail \tensterm{M}{N}:A\tensor B
}{
  \bang\Delta,\Gamma\entail M:A
  &
  \bang\Delta,\Sigma\entail N:B
}
\quad
\infer[\tensor_E]{
  \bang{\Delta},\Gamma,\Sigma\entail 
  \lettensterm{x^{A}}{y^{B}}{M}{N}:C
}{
  \bang\Delta,\Gamma\entail M:A\tensor B
  &
  \bang\Delta,\Sigma,x:A,y:B\entail N:C
}
\]

\[
\infer[\oplus_{I}^\ell]{
  \bang\Delta,\Gamma\entail
  \injl{M}:A\sumtype B
}{
  \bang\Delta,\Gamma\entail M:A
}
\quad
\infer[\oplus_{I}^r]{
  \bang\Delta,\Gamma\entail
  \injr{M}:A\sumtype B
}{
  \bang\Delta,\Gamma\entail M:B
}
\quad
\infer[\oplus_E]{
  \bang{\Delta},\Gamma,\Sigma\entail \match{P}{x^A}{M}{y^B}N:C
}{
  \bang\Delta,\Gamma\entail P:A\sumtype B
  &
  \deduce{\bang\Delta,\Sigma,y:B\entail N:C}{\bang\Delta,\Sigma,x:A\entail M:C}
}
\]
\vspace{-1ex}
\[
\infer[\tlist{-\!}_{\!\!I}]{
  \bang\Delta,\Gamma\entail
  M:\tlist{A}
}{
  \bang\Delta,\Gamma\entail M:\tunit\oplus(A{\otimes}\tlist A)
}
\quad
\infer[\splitlist]{
  \bang\Delta\entail\splitlist[A] : 
  \tlist A {\multimap}\tunit\oplus(A{\otimes}\tlist A)
}{}
\quad
\infer[{\tt rec}]{
  \bang{\Delta},\Gamma\entail \letrec{f^{A\multimap B}}{x}{M}{N}:C
}{
  \bang\Delta,f:\bang{(A\multimap B)},x:A\entail M:B
  &
  \bang\Delta,\Gamma,f:\bang{(A\multimap B)}\entail N:C
}
\]

\[
\infer[\meas]{
  \bang\Delta\entail
  \meas:\qubit\multimap \bit
}{}
\quad
\infer[\new]{
  \bang\Delta\entail
  \new:\bit\multimap\qubit
}{}
\quad
\infer[U]{
  \bang\Delta\entail
  U:\qubit^{\otimes n}\multimap\qubit^{\otimes n}
}{U \text{ of arity $n$}}
\]
\end{minipage}
}
\caption{\footnotesize Typing rules. The contexts $\Gamma$ and 
  $\Sigma$ are assumed to be linear.}\label{fig:typing rules}
\end{table*}

A \emph{context} $\Delta$ is a function from a finite set of variables
to types. We denote the domain of $\Delta$ by $\supp\Delta$, and we
write $\Delta=x_1:A_1,\dots, x_n:A_n$ whenever
$\supp\Delta=\{x_1,\dots,x_n\}$ and $\Delta(x_i)=A_i$.  We call
$\Delta$ \emph{exponential} (resp.\ \emph{linear}) whenever all $A_i$
are $\oc$-types (resp.\ no $A_i$ is a $\oc$-type). We write
$\oc\Delta$ for a context that is exponential. The notation $\Gamma,
\Sigma$ refers to the union of the two contexts $\Gamma$ and $\Sigma$
and assumes that $\supp{\Gamma}$ and $\supp\Sigma$ are disjoint.

A \emph{judgement} is a triple $\Gamma\vdash M:A$ of a context
$\Gamma$, a term $M$ and a type $A$. A judgement is called \emph{valid}
if it can be inferred from the typing rules in
Figure~\ref{fig:typing rules}, using the convention that the contexts
$\Gamma$ and $\Sigma$ are linear.

\begin{proposition}\label{prop:unicity_derivation}
  There is at most one derivation inferring a given typing judgement
  $\Gamma\vdash M:A$.
  \qed
\end{proposition}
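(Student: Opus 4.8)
The plan is to prove the statement by structural induction on the term $M$, simultaneously over all contexts $\Gamma$ and types $A$. The key observation that makes this work is that in the typing system of Table~\ref{fig:typing rules} each term former corresponds to exactly one typing rule --- there is no overlap between the conclusions of distinct rules once we look at the outermost syntactic shape of $M$. So the first step is to tabulate, for each production in the grammar of terms, which typing rule(s) could have been applied last in a derivation of $\Gamma\vdash M:A$. The only genuinely ambiguous-looking case is the variable case $M=x$, where both $\mathit{ax}$ and $\mathit{axd}$ apply; but these are distinguished by the type $A$ assigned (and by whether $\Gamma(x)$ is a $\oc$-type or a linear type), and in any event the premises carry no term content, so there is nothing further to recurse into. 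Similarly the constants $\punit,\meas,\new,U,\splitlist$ are each the conclusion of a unique axiom-shaped rule with no term-bearing premises.

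Next I would handle the compound terms. For a term of the form $MN$, the last rule must be $\loli_E$; what needs checking is that the \emph{splitting} of the context and the \emph{choice of intermediate type $A$} are forced. The type $A$ is recovered uniquely because $M$ must be typed $A\loli B$ and, by the induction hypothesis applied to $M$, its derivation --- hence the type it receives --- is unique; so $A$ is determined. The partition of the non-exponential part of the context into $\Gamma$ and $\Sigma$ is where one must argue a little: here the Church-style type annotations do \emph{not} directly help, but the linearity discipline does. Because $\bang\Delta$ is exactly the exponential part of the ambient context and the linear variables must be used exactly once, each linear variable of $\Gamma,\Sigma$ occurs free in exactly one of $M$, $N$, and this occurrence determines on which side it goes; so the partition is a function of $M$ and $N$, not a free choice. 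The same pattern --- unique rule, intermediate types pinned down by the bound-variable annotations (as in $\lettensterm{x^A}{y^B}{M}{N}$, $\match P{x^A}M{y^B}N$, $\lambda x^A.M$, $\letrec{f^{A\loli B}}{x}{M}{N}$), context splitting pinned down by linearity --- recurs for every remaining term former. The promotion rule $p$ deserves a separate remark: a value $V$ can be given type $\bang(A\loli B)$ only via $p$ from $A\loli B$, and it can also be given the type $A\loli B$ directly; but these are different judgements (different $A$), so no conflict arises for a \emph{fixed} target type.

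The main obstacle, and the place I would spend the most care, is precisely the context-splitting arithmetic in the multiplicative rules ($\loli_E$, $\tunit_E$, $\tensor_I$, $\tensor_E$, $\oplus_E$, and the recursion rule). One has to make precise the claim ``the partition is forced by which side the free linear variables occur on,'' and to do so one needs the standard auxiliary fact that in any valid derivation every linear variable in the context occurs free in the subject term, and every exponential variable may or may not. This is itself a routine induction on derivations, but it must be stated and used; I would isolate it as a preliminary lemma. Granting it, uniqueness of the splitting follows: the exponential zone $\bang\Delta$ is common to both premises and equals the exponential zone of the conclusion, while each linear variable lands in the unique premise whose subject it occurs in. With that lemma in hand, every inductive case reduces to: (i) the last rule is forced by the head constructor of $M$; (ii) the premise contexts, subject terms and types are forced (by annotations and by the splitting lemma); (iii) apply the induction hypothesis to each premise. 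Assembling these gives a unique derivation, completing the induction.
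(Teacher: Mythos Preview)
The paper states this proposition without proof (the \qed\ is immediate), so there is no argument to compare yours against; your outline is the natural one and handles most rules correctly, including the silent promotion rule $p$, which you rightly dispatch by the shape of the target type. There is, however, a genuine gap: the list introduction rule $\tlist{-\!}_{\!\!I}$ in Table~\ref{fig:typing rules} is also silent --- it takes $M:\tunit\oplus(A\otimes\tlist A)$ to the \emph{same} term $M$ at type $\tlist A$ --- and, unlike $p$, it is not the only rule that can conclude with a list type (any elimination form can). This produces distinct derivations of the same judgement. Concretely, the closed term $(\lambda z^{\tunit}.\,\letunitterm{z}{\injl{\punit}})\,\punit$ can be given type $\tlist A$ either by $\loli_E$ with the abstraction at type $\tunit\loli\tlist A$ (so that $\tlist{-\!}_{\!\!I}$ is applied inside the body), or by first deriving the whole application at type $\tunit\oplus(A\otimes\tlist A)$ and then applying $\tlist{-\!}_{\!\!I}$ at the root. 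Your linearity and annotation arguments cannot separate these, since the rule commutes freely with every elimination form; as stated the proposition appears to be literally false, and any repair requires either restricting where $\tlist{-\!}_{\!\!I}$ may be applied or identifying derivations modulo these commutations (which is harmless semantically, since the interpretation of $\tlist{-\!}_{\!\!I}$ is an iso).

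A smaller point: in the $\loli_E$ case you recover the cut type ``by the induction hypothesis applied to $M$, its derivation --- hence the type it receives --- is unique.'' But the induction hypothesis gives uniqueness of derivations for a \emph{fixed} target type, not uniqueness of the type of $M$; invoking it here is circular. What actually pins down the cut type is a separate (routine) observation that the Church annotations determine the domain of any arrow type a subterm can receive --- true, but it is not the IH and deserves its own lemma.
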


\begin{example}\label{ex:term_type}
  In Section~\ref{sec:rec-type}, we wrote the informal program ${\bf
    qlist}$. Our language is expressive enough to represent it. The
  term $\mathtt{cointoss}$ can be defined as $\meas (H (\new\,
  \ttrue))$, and it has type $\bit$. The term $\mathtt{entangle}$ is
  $\lambda x^\qubit.N_c(x\otimes(\new\,\ffalse))$, which has type
  $\qubit\multimap\qubit\otimes\qubit$. Then, ${\bf qlist}$ is
\begin{multline*}
\mathtt{letrec}\; f^{\qubit\multimap\qubit^\ell} q =\\ 
     \mathtt{if}\; \mathtt{cointoss}\; \mathtt{then}\; 
     \cons q\nil\hspace{2.5cm}\\
     \mathtt{else}\; \mathtt{let}\; x^\qubit\otimes y^\qubit 
     =\mathtt{entangle}\; q\; \mathtt{in} \;\cons {x}{fy}
\end{multline*}
which has type $\qubit\multimap\qubit^\ell$. In
Examples~\ref{ex:term_red} and~\ref{ex:term_sem} we discuss its
operational and denotational semantics, respectively.
\end{example}

\begin{example}\label{ex:telep-term}
  In Example~\ref{ex:telep-presentation} and Figure~\ref{fig:telep},
  we sketched the quantum teleportation algorithm. We said that the
  algorithm can be decomposed into 3 parts. Each of these parts can be
  described and typed in the quantum lambda calculus, yielding a
  higher-order term. This is an adaptation of an example provided
  in~\cite{selinger06lambda}.
  \begin{itemize}
  \item[(i)] generates an EPR pair of entangled quantum bits. Its type
    is therefore $\tunit\loli\qubit\tensor\qubit$. The corresponding
    term is 
    \[ 
    {\bf EPR} = \lambda\punit. N_c\left((H(\new\,\ffalse))\tensor(\new\,
      \ffalse)\right). 
      \]
  \item[(ii)] performs a Bell measurement on two quantum bits and
    outputs two classical bits $x,y$. Its type is thus
    $\qubit\loli\qubit\loli\bit\tensor\bit$, and the term {\bf
      BellMeasure} is defined as
    \[
     \lambda q_1.\lambda q_2.
     \left(\begin{array}{l}
         {\tt let}\ x\tensor y = N_c\,(q_1\tensor q_2)
         \\
         {\tt in}\ (\meas\,(H\,x))\tensor(\meas\,y)
       \end{array}\right).
   \]
  \item[(iii)] performs a correction. It takes one quantum bit, two
    classical bits, and outputs a quantum bit. It has a type of the form
    $\qubit\loli\bit\tensor\bit\loli\qubit$. The term is
    \[ \begin{array}{r}
     {\bf U} = \lambda q.\lambda x\tensor y.
     \mbox{\tt if\,$x$\,then\,$($if\,$y$\,then\,$U_{11}\,q$\,else\,$U_{10}\,q)$}
     \\ 
     \mbox{\tt
       else\,$($if\,$y$\,then\,$U_{01}\,q$\,else\,$U_{00}\,q)$}.
     \hspace{-.8ex}
   \end{array}
   \]
  \end{itemize}
  We can now write the term
  \[
  {\bf telep} = \begin{array}[t]{l}
    \begin{array}[t]{l@{}l@{~}l}
      \lambda\punit.&{\tt let}~{x\tensor y}&= {\bf EPR}~\punit~{\tt in}\\
      &{\tt let}\ {f} &= {\bf BellMeasure}\ x~{\tt in}\\
      &{\tt let}\ {g} &= {\bf U}\ y
    \end{array}\\
    \quad\qquad\hspace{1.3ex}{\tt in}\ {f\tensor g}.
  \end{array}
  \]
  It can then be shown that
  \[
  \entail {\bf telep}:!(\tunit\loli (\qubit \loli
  \bit\otimes\bit)\tensor(\bit\otimes\bit \loli \qubit))
  \]
  is a valid typing judgement.
  In other words, the teleportation algorithm produces a pair of
  entangled functions $f:\qubit\to\bit\tensor\bit$ and
  $g:\bit\tensor\bit\to\qubit$. These functions have the property that
  $g(f(\ket\phi))=\ket\phi$ for
  all qubits $\ket\phi$, and $f(g(x\tensor y))=(x\tensor y)$ for all
  booleans $x$ and $y$. These two functions are each other's inverse,
  but because they contain an embedded qubit each, they can only be
  used once. They can be said to form a ``single-use isomorphism''
  between the (otherwise non-isomorphic) types $\qubit$ and
  $\bit\tensor\bit$. However, the whole procedure is duplicable: one
  can generate as many one-time-use isomorphism pairs as desired.
\end{example}

%-----------------------------------------------------------
\subsection{Operational semantics}
\label{sec:op-semant}

The operational semantics is defined in terms of an abstract machine
simulating the behavior of Knill's QRAM model~\cite{knill}. It is
similar to the semantics given in \cite{SV09}.

\begin{definition}
  \label{def:qclos}
  A \emph{quantum closure} is a triple $\am{\qarray,\qlist,M}$ where
  \begin{itemize}
  \item $\qarray$ is a normalized vector of $\C^{2^n}$, for some
    integer $n\geq 0$. The vector $\qarray$ is called the \emph{quantum
      state};
  \item $M$ is a term, not necessarily closed;
  \item $\qlist$ is a one-to-one map from the set of free variables of
    $M$ to the set $\{1,\ldots, n\}$. It is called the \emph{linking
      function}.
  \end{itemize}
  We write $\supp{\qlist}$ for the domain of $\qlist$.
  By abuse of language we may call a closure
  $\am{\qarray,\qlist,V}$ a \emph{value} when the term $V$ is a value.
  We denote the set of quantum closures by  $\Cl$ and the set of
  quantum closures that are values by $\Val$.
  We write $\ell|_M$ for the linking function whose domain is
  restricted to the set of free variables of $M$.
  We say that the quantum closure $\am{\qarray,\qlist,M}$ is
  \emph{total} when $|\qlist |$ has cardinality $n$, the size of the
  quantum state. In that case, if $|\qlist |=\{x_1,\ldots, x_n\}$ and
  $\qlist(x_i)=i$, we write $\qlist$ as $\ket{x_1,\ldots, x_n}$.  A
  quantum closure $\am{\qarray,\ket{x_1,\ldots, x_n},M}$ \emph{has a
    type $A$}, whenever $x_1:\qubit,\dots,x_n:\qubit\vdash M:A$. In
  case $\qlist=\ket{x_1,\ldots, x_n}$ we can also write $\qlist\vdash
  M:A$.
\end{definition}

The purpose of a quantum closure is to provide a mechanism to talk
about terms with embedded quantum data. The idea is that a variable
$y\in\FV(M)$ is bound in the closure $\am{\qarray,\qlist,M}$ to qubit
number $\qlist(y)$ of the quantum state $\qarray$. So for example,
the quantum closure
$
\am{\frac1{\sqrt2}(\ket{00}+\ket{11}),\ket{x_1,x_2},\lambda y^A.yx_1x_2}
$
denotes a term $\lambda y^A.yx_1x_2$ with two embedded qubits $x_1$, $x_2$ in
the entangled state $\ket{x_1x_2}=\frac1{\sqrt2}(\ket{00}+\ket{11})$. 

The notion of $\alpha$-equivalence extends naturally to quantum
closures, for instance, the states $\am{\qarray,\ket{x},\lambda
  y^A.x}$ and $\am{\qarray,\ket{z},\lambda y^A.z}$ are
equivalent. From now on, we tacitly
identify quantum closures up to renaming of bound variables.

The evaluation of a term is defined as a probabilistic rewriting
procedure on quantum closures, using a call-by-value reduction
strategy.  We use the notation $\am{\qarray,\qlist,
  M}\redto[p]\am{\qarray',\qlist', M'}$ to mean that the left-hand side
closure reduces in one step to the right-hand side with probability
$p\in[0,1]$.

\begin{table*}
  \footnotesize
  \centering
  \subfloat[Classical control.]{
    \label{subtable:reduction_classical}
    \centering
    \parbox{.95\textwidth}{
      \centering
      \begin{align*}
        \am{\qarray{},\qlist{},(\lambda x^A.M)\,V}
        &{}\redto[1]
        \am{\qarray{},\qlist{},M\{V/x\}}
        &
        \am{\qarray{},\qlist{},\lettensterm{x^A}{y^B}{V\tensor W}{N}}
        &{}\redto[1]
        \am{\qarray{},\qlist{},N\{V/x, W/y\}}
        \\
        \am{\qarray{},\qlist{},\letunitterm{\punit}{N}}
        &{}\redto[1]
        \am{\qarray{},\qlist{},N}
        &
        \am{\qarray{},\qlist{},\match{(\injl V)}{x^{\!A}}{M}{y^{\!B}}{N}}
        &{}\redto[1] \am{\qarray{},\qlist{},M\{V/x\}}        
        \\
        \am{\qarray{},\qlist{},\splitlist V}
        &{}\redto[1] \am{\qarray{},\qlist{},V}
        &
        \am{\qarray{},\qlist{},\match{(\injr V)}{x^{\!A}}{M}{y^{\!B}}{N}}
        &{}\redto[1] \am{\qarray{},\qlist{},N\{V/y\}}
      \end{align*}
      \vspace{-4ex}
      \begin{align*}
        \am{\qarray{},\qlist{},\letrec{f^{A\multimap B}}{x}{M}{N}}
        &{}\redto[1] \am{\qarray{},\qlist{},N\{(\lambda
          x^A.\letrec{f^{A\multimap B}}{x}{M}{M})/f\}}
      \end{align*}
    }
  }
  
  \subfloat[Quantum data. The variable $y$ is fresh. The decomposition
  of the quantum array in the case of $\meas\, x$ is explained in 
  Definition~\ref{def:rw}.]{
    \label{subtable:reduction_quantum}
    \centering
    \parbox{.95\textwidth}{
      \centering
      \begin{align*}
        \am{\qarray{},\qlist, U(x_{1}\otimes\dots\otimes x_{k})}
        &{}\redto[1]
        \am{\qarray{}',\qlist,x_{1}\otimes\dots\otimes x_{k}}
      \end{align*}
      \vspace{-4ex}
      \begin{align*}
        \am{\qarray,\emptyset,\new~\ffalse}
        &{}\redto[1]
        \am{\qarray\otimes\ket0,\{y\mapsto n+1\}, y}
        &
        \am{\alpha\qarray_0+\beta\qarray_1,\{x\mapsto i\},\meas~x}
        &{}\xredto[\abs{\beta}^2]
        \am{\qarray'_1,\emptyset, \ttrue}
        \\
        \am{\qarray,\emptyset,\new~\ttrue}
        &{}\redto[1]
        \am{\qarray\otimes\ket{1},\{y\mapsto n+1\}, y}
        &
        \am{\alpha\qarray_0+\beta\qarray_1,\{x\mapsto i\},\meas~x}
        &{}\xredto[\abs{\alpha}^2]
        \am{\qarray'_0,\emptyset, \ffalse}
      \end{align*}}
  }
  \\
  \subfloat[Congruence rules, under the hypothesis that for some
  $\ell_0$ we have $\ell=\ell_0 \uplus \ell|_M$, $\ell'=\ell_0 \uplus
  \ell'|_{M'}$ and
  $\am{\qarray,\qlist|_M,M}{\redto[p]}\am{\qarray',\qlist'|_{M'},M'}$.
  ]{\label{subtable:reduction_congruence}
    \centering
    \parbox{.95\textwidth}{
      \centering
      \begin{align*}
        \am{\qarray,\qlist{},MN}&\redto[p]\am{\qarray',\qlist{}',M'N}
        &
        \am{\qarray,\qlist{},M\otimes N}&
        \redto[p]\am{\qarray',\qlist{}',M'\otimes N}
        &
        \am{\qarray,\qlist{},\injl M}&
        \redto[p]\am{\qarray',\qlist{}',\injl M'}
        \\
        \am{\qarray,\qlist{},VM}&\redto[p]\am{\qarray',\qlist{}',VM'}
        &
        \am{\qarray,\qlist{},V\otimes M}&
        \redto[p]\am{\qarray',\qlist{}',V\otimes M'}
        &
        \am{\qarray,\qlist{},\injr M}&
        \redto[p]\am{\qarray{}',\qlist{}',\injr M'}
      \end{align*}
      \vspace{-4ex}
      \begin{align*}
        \am{\qarray,\qlist{},\letunitterm{M}{N}}&
        \redto[p]\am{\qarray',\qlist{}',\letunitterm{M'}{N}}
        &
        \am{\qarray,\qlist{},\lettensterm{x^A}{y^B}{M}{N}}&
        \redto[p]\am{\qarray',\qlist{}',\lettensterm{x^A}{y^B}{M'}{N}}
      \end{align*}
      \vspace{-4ex}
      \begin{align*}
        \am{\qarray,\qlist{},\match
          M{x^A}P{y^B}{N}}&\redto[p]\am{\qarray{}',
          \qlist{}',\match{M'}{x^A}P{y^B}{N}}
      \end{align*}
    }
  }
  \caption{\footnotesize Reduction rules on closures.}
  \label{table:reduction}
\end{table*}

\begin{definition}\label{def:rw}
The reduction rules are shown in Table~\ref{table:reduction}. The
rules split into three categories:
\subref{subtable:reduction_classical} rules handling the classical
part of the calculus; \subref{subtable:reduction_quantum} rules
dealing with quantum data; and
\subref{subtable:reduction_congruence} congruence rules for the
call-by-value strategy. Note that in the statement of the rules, $V$
and $W$ refer to values.

In the rules in
Table~\ref{table:reduction}\subref{subtable:reduction_quantum},
the quantum state $q$
has size $n$. The
quantum state $\qarray'$ in the first rule is obtained by applying the
$k$-ary unitary gate $U$ to the qubits
$\qlist(x_{1}),\dots,\qlist(x_{k})$. Precisely, $\qarray'=(\sigma\circ
(U\otimes\id)\circ\sigma^{-1})(\qarray)$, where $\sigma$ is the action
on $\C^{2^n}$ of any permutation over $\{1,\dots, n\}$ such that
$\sigma(i)=\qlist(x_i)$ whenever $i\leq k$. In the rules about
measurements, we assume that if $\qarray_0$ and $\qarray_1$ are
normalized quantum states of the form
\begin{equation}\label{eq:def-op-meas1}
{\textstyle\sum_j}\alpha_j\ket{\phi_j}\otimes\ket0\otimes\ket{\psi_j},
 ~~
{\textstyle\sum_j}\beta_j\ket{\phi_j}\otimes\ket1\otimes\ket{\psi_j},
\end{equation}
then $\qarray'_0$ and $\qarray'_1$ are respectively
\begin{equation}\label{eq:def-op-meas2}
{\textstyle\sum_j}\alpha_j\ket{\phi_j}\otimes\ket{\psi_j},
 ~~
{\textstyle\sum_j}\beta_j\ket{\phi_j}\otimes\ket{\psi_j},
\end{equation}
where the vectors $\phi_j$ have dimension $\qlist(x)-1$ (so
that the measured qubit is $\ell(x)$).  
\end{definition}

In summary, the quantum state acts as a shared global store that is
updated destructively by the various quantum operations.

Note that the only probabilistic reduction step is the one
corresponding to measurement. Also, we underline that the hypothesis
associated with a congruence rule
$\am{\qarray,\qlist,C[M]}{\redto[p]}\am{\qarray',\qlist',C[M']}$ takes
into account the whole quantum states $\qarray$ and $\qarray'$. In
fact, because of the entanglement, the evaluation of
$\am{\qarray,\qlist|_M,M}$ may have a side-effect on the state of the
qubits pointed to by the variables occurring in the context $C[\,]$.

The rules assume that the involved
closures are well-defined. In particular, whenever
$\am{\qarray,\qlist, M}\redto[p]\am{\qarray,\qlist, M'}$, the two
terms $M$ and $M'$ have the same free variables. For example, the
closure $\am{\ket{00},\ket{yz},(\lambda x.y)z}$ cannot reduce and it
represents an error: it would reduce to the erroneous quantum closure
$\am{\ket{00},\ket{yz},z}$,
where the domain of the linking function is not the set of free
variables, as specified by Definition~\ref{def:qclos}.
The type system will prevent such an error as
proven in Proposition~\ref{prop:safety}.

\begin{example}\label{ex:term_red}
  Recall Example~\ref{ex:term_type}. We have
  $\am{\ket{},\ket{},\mathtt{cointoss}}\redto[1]\am{\ket{1},\ket{x},\meas(H
    x)}\redto[1]\am{\frac1{\sqrt 2}(\ket{0}+\ket{1}),\ket{x},\meas\;
    x}$, the latter reducing to either $\am{\ket{},\ket{},\ttrue}$ or
  $\am{\ket{},\ket{},\ffalse}$, with equal probability $\frac12$. As
  for $\mathtt{entangle}$, we have that
  \begin{align*}
    &\am{\alpha\ket0+\beta\ket{1},\ket x,\mathtt{entangle}\;
      x}
    \\
    \redto[1]~~&
    \am{\alpha\ket0+\beta\ket{1},\ket
      x,N_c(x\otimes(\new\;\ffalse))}
    \\
    \redto[1]~~&
    \am{\alpha\ket{00}+\beta\ket{10},\ket{xy},N_c(x\otimes
      y)}
    \\
    \redto[1]~~&
    \am{\alpha\ket{00}+\beta\ket{11},\ket{xy},x\otimes
      y}.
  \end{align*}
  Similarly, one can check that
  $\am{\alpha\ket0+\beta\ket1,\ket q,\mathtt{qlist}\, q}$ behaves as
  described in Section~\ref{sec:rec-type}, reducing to
  $\am{\alpha\ket0+\beta\ket1,\ket q,\cons q\nil}$ with probability
  $\frac 12$, to
  $\am{\alpha\ket{00}+\beta\ket{11},\ket{qq'},\cons{q'}{\cons q
      \nil}}$ with probability $\frac 14$, etc. In particular, notice
  that in any single reduction sequence the variable $q$ has not been
  duplicated, as correctly asserted by the type of $\mathtt{qlist}$.
\end{example}

\begin{lemma}[Substitution]\label{lemma:substitution}
  Suppose $\oc\Delta,\Gamma,x:A\entail M:B$ and $\oc\Delta,\Sigma\entail
  V:A$, where $\Gamma$ and $\Sigma$ are linear contexts with disjoint
  domain. Then $\oc\Delta,\Gamma,\Sigma\entail M\{V/x\}:B$.\qed
\end{lemma}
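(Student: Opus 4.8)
The plan is to prove the Substitution Lemma by induction on the derivation of $\oc\Delta,\Gamma,x:A\entail M:B$, using Proposition~\ref{prop:unicity_derivation} to know that this derivation is uniquely determined by the judgement, so that there is a well-defined ``last rule'' to case on. The crucial bookkeeping point is that $x$ is a linear variable (it occurs in the linear part of the context, not among the $\oc$-types of $\oc\Delta$), so by the shape of the typing rules $x$ occurs \emph{exactly once} in $M$, and in each binary rule $x$ is sent down into exactly one of the two premises. This is what makes the substitution of a value $V$ whose context $\oc\Delta,\Sigma$ shares only the exponential part $\oc\Delta$ with the rest type-check: we never duplicate $V$, and $\Sigma$ never needs to be split.

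First I would dispatch the base cases. If the last rule is $ax$, there are two sub-cases according to whether the displayed variable is $x$ or some other variable $y$. If it is $x$, then $M=x$, $B=A$, $\Gamma$ is empty, and $M\{V/x\}=V$; we must produce $\oc\Delta,\Sigma\entail V:A$, which is exactly the hypothesis. If it is some $y\neq x$, then $x$ cannot appear (it would have to be in the $\oc\Delta$ part, contradicting that $x:A$ with $A$ linear), so $M\{V/x\}=M$ and we reweaken $\Sigma$ into the $\oc\Delta'$ slot; here I use that $V$ being a value typed in $\oc\Delta,\Sigma$ with $\Sigma$ linear and the conclusion being an axiom means $\Sigma$ must actually be empty, so no real weakening is needed --- or, more cleanly, I note that the $ax$ and $axd$ rules already build in arbitrary exponential weakening, and an analogous remark handles the constant/$\punit$ rules. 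The rule $axd$ is similar. For the rules with no term premises involving $x$ ($\tunit_I$, $\splitlist$, $\meas$, $\new$, $U$, and the value-promotion rule $p$ when $x$ does not occur), $x\notin\FV(M)$ forces the statement to be vacuous in the same way.

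For the inductive step I would go rule by rule through Figure~\ref{fig:typing rules}. The representative case is $\loli_E$: the last rule splits the context as $\oc\Delta,\Gamma',\Sigma'$ with premises $\oc\Delta,\Gamma'\entail M_1:A'\loli B$ and $\oc\Delta,\Sigma'\entail M_2:A'$. Since $x$ is linear, $x$ lies in exactly one of $\Gamma'$, $\Sigma'$; say $x\in\Gamma'$, so $\Gamma'=\Gamma_1,x:A$. Apply the induction hypothesis to the first premise with $V$ (legal since $\Gamma_1$ and $\Sigma$ are linear with disjoint domain --- renaming bound variables of $V$ if necessary to keep domains disjoint from $\Sigma'$, which is harmless by our standing $\alpha$-equivalence convention), obtaining $\oc\Delta,\Gamma_1,\Sigma\entail M_1\{V/x\}:A'\loli B$; then reapply $\loli_E$ with the untouched second premise to get $\oc\Delta,\Gamma_1,\Sigma,\Sigma'\entail (M_1\{V/x\})M_2:B$, and observe $(M_1 M_2)\{V/x\}=(M_1\{V/x\})M_2$ because $x\notin\FV(M_2)$. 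The other binary rules ($\tunit_E$, $\tensor_I$, $\tensor_E$, $\oplus_E$, ${\tt rec}$) are handled identically, with the only extra care being the bound-variable side conditions in $\tensor_E$, $\oplus_E$, ${\tt rec}$ (we may assume the bound variables are fresh for $V$). The unary rules ($\loli_I$, $\oplus_I^\ell$, $\oplus_I^r$, $\tlist{-}_I$, and $p$ in the case where $x$ does occur) just push the induction hypothesis through one constructor, again after renaming the bound variable of a $\lambda$-abstraction away from $\FV(V)$.

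The main obstacle --- really the only subtle point --- is getting the context management exactly right: making sure that in every rule the variable $x$ goes into precisely one premise and that the linear contexts $\Gamma$ and $\Sigma$ stay disjoint as we descend, while the shared exponential context $\oc\Delta$ is propagated intact into both premises of every binary rule. One has to be slightly careful in the value-promotion rule $p$, where the context is required to be entirely exponential: there $x:A$ with $A$ linear simply cannot appear, so this rule never produces a genuine substitution case, but one should say so explicitly rather than wave it through. Once the discipline ``$x$ linear $\Rightarrow$ $x$ occurs exactly once and threads down a unique branch'' is made precise, each case is a mechanical rewriting of the derivation tree, and the lemma follows.
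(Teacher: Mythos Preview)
The paper states this lemma with an immediate \qed{} and gives no proof, so there is nothing to compare against; induction on the typing derivation is indeed the standard (and intended) route.

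There is, however, a real gap in your argument. You repeatedly assume that $A$ is a \emph{linear} type (``$x$ is a linear variable\ldots so $x$ occurs exactly once'', ``in the value-promotion rule $p$\ldots $x:A$ with $A$ linear simply cannot appear''), but the lemma makes no such restriction on $A$, and the paper genuinely needs the non-linear case: in the subject-reduction step for ${\tt letrec}$ one substitutes for $f:\oc(A\multimap B)$, a $\oc$-type. When $A$ is a $\oc$-type the bookkeeping changes completely: $x$ sits in the exponential zone, it may occur zero or many times in $M$, in every binary rule it is propagated to \emph{both} premises (not threaded down a unique branch), and the $p$ rule is no longer vacuous. Your single-branch argument for $\loli_E$ etc.\ simply does not apply.

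The fix is short but should be made explicit. Observe first that if $A$ is a $\oc$-type then the only rule deriving $\oc\Delta,\Sigma\vdash V:A$ is $p$, whose conclusion has a purely exponential context; hence $\Sigma=\emptyset$. With $\Sigma$ empty, the induction for the $\oc$-case goes through by applying the hypothesis to \emph{every} premise of a binary rule (the shared $\oc\Delta$ already carries $x:A$ into each), and the $p$ case is handled by one use of the IH followed by $p$ again. So you should split the proof into two cases on whether $A$ is linear or a $\oc$-type; your current write-up is the first case only.
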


\begin{proposition}[Subject reduction]\label{prop:subject_reduction}
  When $\am{\qarray,\ket{y_1\dots y_n},
    M}\redto[p]\am{\qarray',\ket{x_1\dots x_{n'}}, M'}$ and
  $y_1:\qubit,\dots, y_n:\qubit\entail M:A$, then $x_1:\qubit,\dots,
  x_{n'}:\qubit\entail M':A$.\qed
\end{proposition}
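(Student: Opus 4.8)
The plan is to prove subject reduction by a case analysis on the reduction rule applied in $\am{\qarray,\ket{y_1\dots y_n},M}\redto[p]\am{\qarray',\ket{x_1\dots x_{n'}},M'}$, using the unicity of typing derivations (Proposition~\ref{prop:unicity_derivation}) to invert the typing of $M$ and the Substitution Lemma (Lemma~\ref{lemma:substitution}) to handle the $\beta$-like steps. The induction is on the derivation of the reduction: base cases are the rules in Table~\ref{table:reduction}\subref{subtable:reduction_classical} and~\subref{subtable:reduction_quantum}, and the inductive step handles the congruence rules of Table~\ref{table:reduction}\subref{subtable:reduction_congruence}.

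For the classical base cases, I would proceed rule by rule. For $(\lambda x^A.M)V\redto[1]M\{V/x\}$: since $M$ is well-typed with only qubit variables in the context, inverting the rules $\loli_E$ and $\loli_I$ gives that the body of the abstraction is typed with $x:A$ added to (part of) the context and that $V:A$; the Substitution Lemma then yields the type $B$ for $M\{V/x\}$. Here one must check that the partition of the qubit context induced by $\loli_E$ is compatible with the hypotheses of Lemma~\ref{lemma:substitution}; this is automatic because a pure-qubit context is linear, so $\oc\Delta$ is empty and the contexts $\Gamma,\Sigma$ are disjoint linear contexts as required. The cases for $\lettensterm{x^A}{y^B}{V\tensor W}{N}$, $\letunitterm{\punit}{N}$, the two $\match$ rules, $\splitlist V\redto[1]V$, and $\letrec$ are analogous: invert the elimination rule, invert the introduction rule producing the value, and apply substitution (iterated twice for the tensor case, and used to justify unfolding the recursive definition for $\letrec$ after re-deriving the type of the fixed-point body). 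The type annotations carried on bound variables (Church style) ensure that the $A$ recovered by inversion is exactly the one needed.

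For the quantum base cases, the point is that these rules change the quantum state $\qarray$ and the linking function, but the type is manifestly preserved. For $U(x_1\tensor\dots\tensor x_k)\redto[1]x_1\tensor\dots\tensor x_k$, inverting $\loli_E$ and the $U$-axiom shows the left side has type $\qubit^{\tensor n}$, and the right side clearly has the same type with the same qubit context; only $\qarray$ changes, which is irrelevant to typing. For $\new~\ffalse\redto[1]y$ with $\{y\mapsto n+1\}$: the left side has type $\qubit$ in the empty context, and the right side is the variable $y$ typed by the $ax$ rule with $y:\qubit$ — here $n'=n+1$ and the new variable $y$ is exactly the one mapped to the freshly appended qubit, so the invariant $\qlist=\ket{x_1\dots x_{n'}}$ is maintained. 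The measurement rules send $\meas~x$ of type $\bit$ in context $x:\qubit$ to $\ttrue$ or $\ffalse$ of type $\bit=\tunit\oplus\tunit$ in the empty context, with the measured qubit removed from the state; again the type is preserved.

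For the congruence rules, I would use the inductive hypothesis on the sub-reduction $\am{\qarray,\qlist|_M,M}\redto[p]\am{\qarray',\qlist'|_{M'},M'}$ together with inversion of the relevant typing rule for the enclosing term constructor, and the observation that the free variables of $M$ and $M'$ coincide (noted in the text after Definition~\ref{def:rw}), so the surrounding context is unchanged. For instance, for $\am{\qarray,\qlist,MN}\redto[p]\am{\qarray',\qlist',M'N}$: inverting $\loli_E$ gives $M:A\loli B$; the induction hypothesis gives $M':A\loli B$ over the same qubit context; reapplying $\loli_E$ with the unchanged typing of $N$ yields $M'N:B$. The other congruence cases are identical in spirit. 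I expect the main obstacle to be bookkeeping the contexts correctly: one must verify in each congruence case that the restriction/decomposition $\ell=\ell_0\uplus\ell|_M$ at the level of linking functions corresponds exactly to the split of the qubit context dictated by the typing rule, so that the inductive hypothesis applies to a genuinely well-formed closure $\am{\qarray,\qlist|_M,M}$ of the appropriate type — but since qubit contexts are linear and the typing rules split linear contexts disjointly, this matches the $\uplus$ on linking functions cleanly, and no real difficulty arises.
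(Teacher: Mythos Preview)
Your approach is the standard one, and since the paper states the proposition with \qed{} and no proof, there is nothing further to compare against. Two small points deserve tightening, however.

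First, your claim that ``the free variables of $M$ and $M'$ coincide'' is false in general: $\new$ introduces a fresh variable and $\meas$ removes one. The passage you cite after Definition~\ref{def:rw} only covers rules where $\qarray$ and $\qlist$ are unchanged. What is true, and what the congruence rules actually guarantee, is that the portion $\ell_0$ of the linking function covering the surrounding context is preserved; that is what you need in order to re-apply the relevant typing rule to the reduct.

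Second, the sub-closure $\am{\qarray,\qlist|_M,M}$ appearing as the premise of a congruence rule is in general not total: the quantum state $\qarray$ still has size $n$, whereas $\qlist|_M$ only covers $\FV(M)$. Hence the inductive hypothesis, as literally stated for closures written $\am{\qarray,\ket{y_1\dots y_n},M}$, does not apply to this premise. The routine fix is to prove the slightly stronger statement in which the typing context is the domain of $\qlist$ (regardless of whether the closure is total), and then recover the proposition as the total special case. With these two adjustments your argument goes through.
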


\begin{proposition}[Type safety]\label{prop:safety}
  If $\am{\qarray,\qlist, M}$ is typable then either $M$ is a value or
  there is a closure $\am{\qarray',\qlist', M'}$ such that
  $\am{\qarray,\qlist, M}\redto[p]\am{\qarray',\qlist', M'}$. 
  Moreover, if $M$ is not a value, the total probability of all 
  possible single-step
  reductions from $\am{\qarray,\qlist, M}$ is $1$. \qed
\end{proposition}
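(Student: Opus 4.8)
The plan is to prove both claims simultaneously by structural induction on the derivation of the typing judgement $\oc\Delta, \Gamma \entail M : A$ for the underlying term $M$, where $\Gamma$ is linear (in a typable closure $\am{\qarray,\qlist,M}$, the context has the special form $x_1:\qubit,\dots,x_n:\qubit \entail M:A$, but the induction needs the general statement). The key observation is that the grammar of values in Table~\ref{table:terms_grammar} is exactly designed so that a term is either a value or has a redex exposed by the call-by-value congruence rules of Table~\ref{table:reduction}\subref{subtable:reduction_congruence}. So the induction amounts to a careful case analysis on the last typing rule.

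First I would dispose of the easy cases. If the last rule is $ax$, $axd$, $\tunit_I$, $\loli_I$, $\oplus_I^\ell$ with $M$ a value, $\splitlist$, $\meas$, $\new$, or $U$ (as a bare constant), then $M$ is already a value and there is nothing to prove — the total-probability clause is vacuous. If the last rule is $p$, then $M$ is a value by the side condition of the rule. The interesting cases are the elimination rules and the recursion rule, where $M = C[M_0]$ for an evaluation context $C[-]$ determined by the call-by-value strategy. Here I would argue as follows: by inversion on the typing derivation, the immediate subterm in the "head" position (e.g.\ $M$ in $MN$, then $N$ once $M$ is a value; $P$ in $\match P{x^A}M{y^B}N$; $M$ in $\letunitterm MN$; etc.) is itself typable in a context of qubit variables obtained by splitting, so the induction hypothesis applies to it. If that subterm is not a value, a congruence rule fires and we are done, with total probability $1$ inherited from the subterm. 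If it is a value, then its type — combined with the inversion lemma for values at that type — forces it into the shape matched by one of the classical reduction rules in Table~\ref{table:reduction}\subref{subtable:reduction_classical} or the quantum rules in \subref{subtable:reduction_quantum}; e.g.\ a value of type $A \loli B$ in head position of an application is a $\lambda$-abstraction, a value of type $A \sumtype B$ is $\injl V$ or $\injr V$, a value of type $\qubit$ bound in a closure is a variable $x$ with $\qlist(x)$ defined, so $\meas\,x$ and $U(x_1\tensor\dots\tensor x_k)$ can fire. In each such case exactly one reduction rule applies; for $\meas$ the two measurement outcomes have probabilities $\abs\alpha^2 + \abs\beta^2 = 1$ since $\qarray$ is normalized, and every other rule has probability $1$; this gives the total-probability clause.

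The main obstacle, and the place where the $\qarray,\qlist$ bookkeeping must be checked rather than waved at, is the quantum fragment: one must verify that whenever a value of type $\qubit$ sits in the relevant head position, the linking function $\qlist$ really is defined on the corresponding variable, so that the operations $U(x_1\tensor\dots\tensor x_k)$, $\meas\,x$, and $\new\,b$ produce \emph{well-defined} closures in the sense of Definition~\ref{def:qclos} (domain of $\qlist$ equals the free variables, quantum state stays normalized, freshness of the new variable $y$). This is exactly the content of the typability hypothesis on the closure — $\qlist = \ket{x_1,\dots,x_n}$ with $x_1:\qubit,\dots,x_n:\qubit \entail M:A$ — and it is preserved along the induction because the congruence rules carry the hypothesis over the \emph{whole} quantum state (as emphasized after Definition~\ref{def:rw}). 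One also needs the auxiliary fact that a value typed in a purely linear context of qubit variables cannot itself be a variable of $\loli$- or $\oplus$- or $\tensor$-type unless that type is $\qubit$ — i.e.\ the inversion-for-values lemma — which is a routine induction on value typing derivations using Proposition~\ref{prop:unicity_derivation}. With these pieces in place the case analysis closes, establishing both the progress statement and the probability-$1$ statement.
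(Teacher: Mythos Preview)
The paper does not actually prove this proposition; it is stated with a bare \qed, as are the neighboring Lemma~\ref{lemma:substitution} and Proposition~\ref{prop:subject_reduction}. Your plan is the standard progress argument for a call-by-value linear calculus and is what the authors evidently have in mind.

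Two minor tightenings. First, your inversion claim ``a value of type $A\loli B$ in head position is a $\lambda$-abstraction'' is not quite right: in a qubit-only context such a value may also be one of the constants $\meas$, $\new$, $U$, $\splitlist$. You handle $\meas$ and $U$ explicitly; for completeness you should also check that $\new\,V$ fires (a closed value of type $\bit$ is $\ttrue$ or $\ffalse$ by inversion) and that $\splitlist\,V$ fires (a value of type $\tlist A$ comes, via inversion on $\tlist{-}_I$, from a value of type $\tunit\oplus(A\tensor\tlist A)$, and the rule $\splitlist\,V\redto[1]V$ applies). Second, your generalization to arbitrary contexts is harmless but unnecessary: since $\qubit$ is linear, the qubit-only context splits cleanly across the premises of every elimination rule, and the induction never recurses into a premise binding a non-qubit variable (bodies of $\lambda$, $\mathtt{let}$, $\mathtt{match}$, $\mathtt{letrec}$ are not evaluated until after the head redex fires).
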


\begin{lemma}[Totality]\label{lemma:totality}
If $\am{\qarray,\qlist, M}\redto[p]\am{\qarray',\qlist', M'}$ 
and $\am{\qarray,\qlist, M}$ is total, then $\am{\qarray',\qlist',
  M'}$ 
is total too.
\end{lemma}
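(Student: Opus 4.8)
The plan is to prove the Totality Lemma by a straightforward case analysis on the reduction rules in Table~\ref{table:reduction}, checking in each case that if the left-hand closure has its linking function's domain exhausting the qubits of the quantum state, then so does the right-hand closure. Recall that ``total'' means $|\qlist|$ has cardinality $n$, the size of $\qarray$; equivalently, every qubit of $\qarray$ is pointed to by some free variable of $M$. The key observation driving all the cases is that the reduction rules never leave a ``dangling'' qubit: whenever the quantum state grows (the $\new$ rules), a fresh variable is added to the linking function pointing to the new qubit; whenever it shrinks (the $\meas$ rules), the variable pointing to the measured qubit disappears from the term; and whenever the state is merely updated in place (the $U$ rule), neither the size of $\qarray$ nor the set of free variables changes.

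First I would dispose of the purely classical rules of Table~\ref{table:reduction}\subref{subtable:reduction_classical}: here $\qarray'=\qarray$ and $\qlist'=\qlist$, so there is nothing to check beyond the fact that the free-variable sets on the two sides coincide, which is exactly the well-definedness condition already built into the rules (and which follows from the Substitution Lemma~\ref{lemma:substitution} together with Proposition~\ref{prop:subject_reduction} in the $\beta$-like cases). Next, the quantum rules of Table~\ref{table:reduction}\subref{subtable:reduction_quantum}: for $U(x_1\otimes\cdots\otimes x_k)$ the state $\qarray'$ has the same dimension $2^n$ as $\qarray$ and the term keeps the same free variables $x_1,\dots,x_k$ (plus the qubits untouched by the redex, carried along by $\ell_0$ in the congruence case), so totality is preserved. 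For $\new~\ffalse$ and $\new~\ttrue$, the state goes from size $n$ to size $n+1$ and the linking function is extended by $\{y\mapsto n+1\}$ with $y$ fresh, so the domain grows by exactly one element matching the new qubit. For $\meas~x$, the measured qubit $\ell(x)$ is removed from the state (the state $\qarray'_0$ or $\qarray'_1$ of equation~(\ref{eq:def-op-meas2}) has dimension halved relative to~(\ref{eq:def-op-meas1})) and simultaneously the variable $x$ disappears from the term, whose residual is the constant $\ttrue$ or $\ffalse$ with empty linking function; since the redex $\meas~x$ had $x$ as its only free variable, totality is again maintained.

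Finally I would handle the congruence rules of Table~\ref{table:reduction}\subref{subtable:reduction_congruence}, which is the only place requiring a little care. Here $\ell=\ell_0\uplus\ell|_M$ and $\ell'=\ell_0\uplus\ell'|_{M'}$, and the inductive hypothesis gives $\am{\qarray,\qlist|_M,M}\redto[p]\am{\qarray',\qlist'|_{M'},M'}$. The point to check is that $\ell_0$ still makes sense as (part of) the linking function for $\qarray'$: the sub-reduction only touches the qubits in the range of $\ell|_M$ and, in the $\new$/$\meas$ cases, adds or deletes qubits at the end of the array, so the qubits indexed by $\ell_0$ are unaffected (up to the harmless re-indexing forced by a deletion, which is absorbed into the permutation bookkeeping of Definition~\ref{def:rw}). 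The subtle point — and the step I expect to be the main obstacle — is that one must verify the arithmetic of the quantum-state size works out: applying the inductive hypothesis to the sub-closure $\am{\qarray,\qlist|_M,M}$ presupposes that \emph{it} is total, which is not automatic, since the qubits pointed to by $\ell_0$ are part of $\qarray$ but not part of $M$. The clean way around this is to strengthen the statement, or rather to observe that what one really proves by induction is: if $\am{\qarray,\qlist,M}$ is total then $\am{\qarray',\qlist',M'}$ is total, \emph{and} moreover the reduction affects the state in one of the three controlled ways (size unchanged, grown by one fresh qubit at the end, or shrunk by one qubit). With that refined invariant in hand, the congruence case goes through by tracking how $|\qlist'|=|\ell_0|+|\qlist'|_{M'}|$ compares to the new state size $n'$, using the refined invariant for the sub-reduction on $M$ and the disjointness of $\ell_0$ from the qubits touched by that sub-reduction. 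Everything else is routine.
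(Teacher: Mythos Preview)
Your case analysis is on the right track, and you correctly spot the real difficulty: in the congruence rules the sub-closure $\am{\qarray,\qlist|_M,M}$ is \emph{not} total, so an inductive hypothesis of the form ``if total then total'' cannot be invoked on it. The problem is that your proposed fix does not actually resolve this. Your ``refined invariant'' still reads ``if $\am{\qarray,\qlist,M}$ is total then $\am{\qarray',\qlist',M'}$ is total, and moreover the reduction affects the state in one of the three controlled ways''; since the hypothesis ``total'' is still present, you still cannot apply it to the non-total sub-closure in the congruence case. So the congruence step, as written, does not go through.

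The paper's proof is exactly the clean strengthening you are reaching for but don't quite state: it \emph{drops the totality hypothesis altogether} and proves, by induction on the reduction derivation, the unconditional arithmetic identity
\[
  \dim(\qarray') \;=\; \dim(\qarray) + \dim(\qlist') - \dim(\qlist).
\]
This holds for \emph{every} reduction step, total or not, and in the congruence case it passes cleanly to the sub-reduction (since $\dim(\qlist') - \dim(\qlist) = \dim(\qlist'|_{M'}) - \dim(\qlist|_M)$ thanks to the shared $\ell_0$). Totality preservation is then a one-line corollary: $\dim(\qarray)=\dim(\qlist)$ implies $\dim(\qarray')=\dim(\qlist')$. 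Your ``three controlled ways'' is morally this same equation; you just need to detach it from the totality assumption so that the induction closes.
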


\begin{proof}
By induction on a derivation of $\am{\qarray,\qlist, M}
\redto[p]\am{\qarray',\qlist', M'}$, one proves that 
$
\dim(\qarray')=\dim(\qarray)+\dim(\qlist')-\dim(\qlist)
$
where $\dim(\qarray)$ is the size of the quantum state $\qarray$ 
and $\dim(\qlist)$ is the cardinality of the domain set of the 
linking function $\qlist$. Then, one gets the statement, since 
$\am{\qarray,\qlist, M}$ is total iff $\dim(\qarray)=\dim(\qlist)$.
\end{proof}

\begin{notation}
  The reduction relation $\redto$ defines the probability that a
  closure reduces to another one in a single step. We extend this
  relation to an arbitrary large (but finite) number of reduction
  steps with the notation
  $\Red^n_{\am{\qarray,\qlist,M},\am{q',\qlist',V}}$: it is the total
  probability of $\am{\qarray,\qlist,M}$ reducing to a value 
  $\am{q',\qlist',V}$. It is defined as the sum of all
  $\prod_{i=1}^mp_i$, where $
  \am{\qarray,\qlist,M}\redto[p_1]\am{\qarray_1,\qlist_1,M_1}
  \cdots\redto[p_m]\am{\qarray,\qlist',V}
  $ is a finite reduction sequence of $m\leq n$ steps. We write
  $\Red^\infty_{\am{\qarray,\qlist,M},\am{q',\qlist',V}}$ for the sup
  over $n$ of $\Red^n_{\am{\qarray,\qlist,M},\am{q',\qlist',V}}$.
  Finally, we define the \define{total probability}
  $\Halt_{\am{\qarray,\qlist,M}}$ of $\am{\qarray,\qlist,M}$
  converging to any value as
  $
  \sum_{\am{\qarray',\qlist',V}\in{\Val}}
  \Red_{\am{\qarray,\qlist,M},\am{\qarray',\qlist',V}}^\infty.
  $
\end{notation}

%%%%%%%%%%%%%%%%%%%%%%%%%%%%%%%%%%%%%%%%%%%%%%%%%%%%%%%%%%%%
%%%%%%%%%%%%%%%%%%%%%%%%%%%%%%%%%%%%%%%%%%%%%%%%%%%%%%%%%%%%
%%%%%%%%%%%%%%%%%%%%%%%%%%%%%%%%%%%%%%%%%%%%%%%%%%%%%%%%%%%%
\section{Denotational semantics}\label{sec:sem}

We interpret the quantum lambda calculus in a suitable extension
$\freecat[\ccpms]$ of the category
$\CPM$ described in Section~\ref{sec:background}.
What $\CPM$ essentially misses is the linear logic exponential $\oc
A$, and our plan is to introduce it via the equation
\begin{equation}\label{eq:bang_biproduct}
\oc A\ass\bigoplus_{k=0}^\infty A^{\odot k},
\end{equation}
where $\bigoplus_{k=0}^\infty$ is the infinite biproduct of the family
$\{A^{\odot k}\}_k$, each $A^{\odot k}$ being the symmetric $k$-fold
tensor power of $A$, i.e., the equalizer of the $k!$ symmetries of the
$k$-ary tensor $A^{\otimes k}\ass A\otimes\dots\otimes A$.

The category $\CPM$ cannot express this equation
because it lacks both infinite biproducts and a convenient definition
of symmetric tensor
powers. The category $\freecat[\ccpms]$ is in some sense the minimal
extension of $\CPM$ having these two missing
ingredients. 

The plan of the section is as follows.
Section~\ref{sect:preli_cpm} presents some preliminary material.
Section~\ref{sect:thecategory} defines
$\freecat[\ccpms]$ and Section~\ref{sect:lafont_ccpms} develops the categorical
structure allowing us to interpret the quantum
lambda calculus. Section~\ref{Sect:soundness} sketches the proof of
the soundness of the model with respect to the operational semantics.
Finally, Section~\ref{sect:examples}
discusses the denotations of the programs {\bf qlist} and {\bf
  teleport}.

%-----------------------------------------------------------
\subsection{Preliminaries: from \texorpdfstring{\CPM{}}{CPM} to 
  \texorpdfstring{$\ccpms$}{bar(CPMs)}}\label{sect:preli_cpm}

\paragraph{Permutation groups.} Let $\symgroup_n$ be the symmetric
group of degree $n$, i.e., the group of permutations of
$n=\{0,\dots,n-1\}$. Any permutation $g\in\symgroup_n$ gives rise to a
matrix $P_g\in\C^{n\times n}$, defined by $P_g(e_i)=e_{g(i)}$, where
$e_i$ is the $i$th standard basis vector. We define an action of $g$
on $\C^{n\times n}$ by $g\cdot M := P_gMP_g\inv$. Moreover, for a
subgroup $G\subseteq\symgroup_n$, we define
\begin{equation}\label{eq:action_permutation}
	G\cdot M := \frac{1}{\#G}\sum_{g\in G}g\cdot M,
\end{equation}
where $\#G$ is the number of elements of $G$.

\begin{lemma}
  Given a subgroup $G\subseteq\symgroup_n$, its action on $\C^{n\times
    n}$ is idempotent (i.e., $G\cdot G\cdot M=G\cdot M$ for all $M$)
  and completely positive.
\end{lemma}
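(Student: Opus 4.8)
The plan is to verify each of the two properties directly from the definition \eqref{eq:action_permutation} of the action $G\cdot M = \frac{1}{\#G}\sum_{g\in G} g\cdot M$, where $g\cdot M = P_g M P_g^{-1}$.

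For \emph{idempotence}, first I would observe that $g\mapsto (M\mapsto g\cdot M)$ is a group action: since $P_{gh}=P_gP_h$ (the assignment $g\mapsto P_g$ is a group homomorphism, as $P_g(e_i)=e_{g(i)}$ and permutation matrices are orthogonal so $P_g^{-1}=P_{g^{-1}}$), we get $g\cdot(h\cdot M) = P_gP_h M P_h^{-1}P_g^{-1} = P_{gh}MP_{gh}^{-1} = (gh)\cdot M$. Then I would compute
\[
G\cdot(G\cdot M) = \frac{1}{\#G}\sum_{g\in G} g\cdot\Bigl(\frac{1}{\#G}\sum_{h\in G} h\cdot M\Bigr)
= \frac{1}{(\#G)^2}\sum_{g\in G}\sum_{h\in G}(gh)\cdot M.
\]
For each fixed $g\in G$, as $h$ ranges over $G$ the product $gh$ ranges over all of $G$ exactly once (left translation is a bijection of $G$), so the inner sum equals $\sum_{k\in G} k\cdot M$. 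Hence the double sum equals $\#G\cdot\sum_{k\in G}k\cdot M$, and dividing by $(\#G)^2$ gives $\frac{1}{\#G}\sum_{k\in G}k\cdot M = G\cdot M$.

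For \emph{complete positivity}, I would first note that for a fixed $g$, the map $M\mapsto g\cdot M = P_g M P_g^{-1} = P_g M P_g^*$ is completely positive: it is of the form $M\mapsto AMA^*$ with $A=P_g$ (a Kraus form with a single Kraus operator), and such maps are the paradigmatic completely positive maps — for any $k$, $(M\mapsto AMA^*)\otimes\id_k$ equals $M\mapsto (A\otimes I_k)M(A\otimes I_k)^*$, which preserves positivity. Then $G\cdot(-)$ is a convex combination (coefficients $\frac{1}{\#G}$, summing to $1$) of the completely positive maps $g\cdot(-)$, and finite convex combinations — indeed arbitrary sums — of completely positive maps are completely positive, since positivity is preserved under sums and nonnegative scaling and these operations commute with tensoring by $\id_k$. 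This gives complete positivity of $G\cdot(-)$.

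I do not anticipate a serious obstacle here: both parts are essentially unwinding the definitions together with the elementary facts that $g\mapsto P_g$ is a homomorphism into orthogonal (hence unitary) matrices and that conjugation by a single operator is completely positive. The only point requiring a moment's care is the reindexing $\sum_{g}\sum_{h}(gh)\cdot M$ in the idempotence computation, which uses that left multiplication by $g$ permutes $G$; everything else is routine.
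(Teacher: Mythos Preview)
Your proof is correct and follows essentially the same approach as the paper's: idempotence via the observation that left translation by $g$ permutes $G$ (the paper phrases this as $gG=G$), and complete positivity via the fact that each conjugation $M\mapsto P_gMP_g^{-1}$ is completely positive and such maps are closed under convex combinations. You simply spell out in more detail what the paper compresses into two lines.
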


\begin{proof}
  For the idempotence, notice that for every $g\in G$, $gG=G$,
  therefore: $G\cdot G\cdot M = \frac{1}{\#G}\sum_{g\in G} gG\cdot M = 
  G\cdot M$.
The complete positivity of $G$ is derived from the complete positivity
of each map $M\mapsto g\cdot M = P_gMP_g\inv$.
\end{proof}

In the sequel, we use the notation $G$ both for a subgroup of
$\symgroup_n$ and for the completely positive map defined by it. The
above Lemma allows us to define the set of completely positive maps
from $\C^{n\times n}$ to $\C^{m\times m}$ invariant under the actions
of two subgroups $G\subseteq\symgroup_n$, $H\subseteq\symgroup_m$ by
\[
\cpms(G,H)\ass\{f\in\CPM(n,m)\such G\pc f\pc H=f\},
\]
where $f;g$ is the diagrammatic composition $(f;g)(x) = g(f(x))$, and
$\CPM(n,m)$ is the set of completely positive maps from $\C^{n\times
  n}$ to $\C^{m\times m}$.

\paragraph{Completion of the L\"owner positive cone.} The set
$\cpms(G,H)$ is a module over the semi-ring $\Rp$ of the non-negative
real numbers. The L\"owner order $\sqleq$ on completely positive
maps~\cite{Selinger04}{} endows this module with the structure of a
\emph{bounded} directed complete partial order (bdcpo), i.e., there is
a minimum element (the zero function $\mathbf 0$), and any directed
set $D$ that is bounded (i.e., such that there exists $f\in
\cpms(G,H)$ such that for all $g\in D$, $g\sqleq f$) has a least upper
bound $\bigvee D\in\cpms(G,H)$. However there exist unbounded directed
subsets in $\cpms(G,H)$. We therefore need to complete $\cpms(G,H)$ to
a dcpo.

The relevant construction is the {\em D-completion} of
{\cite{ZhaoFan2010}}, which we briefly recall. Given any poset $P$,
say that a subset $X$ is {\em Scott-closed} if it is down-closed and
for every directed $I\seq S$, if the least upper bound 
$\bigvee I$ exists in $P$, then $\bigvee
I\in S$. We say that a monotone function between posets $f:P\to Q$ is
{\em Scott-continuous} if it preserves all {\em existing} least upper
bounds of directed subsets. Let $\Gamma(P)$ be the set of Scott-closed
subsets of $P$; this forms a dcpo under the subset ordering. The {\em
  D-completion} $\cs(P)$ is defined to be the smallest sub-dcpo of
$\Gamma(P)$ containing all sets of the form $\down x$. Then $\cs(P)$
is a dcpo, and there is a canonical injective Scott-continuous map
$\iota:P\to\cs(P)$, defined by $\iota(x)=\down x$, which allows us to
regard $P$ as a subset of $\cs(P)$. The D-completion preserves all
existing least upper bounds of directed sets, is idempotent, and
satisfies the following universal property: given any other dcpo $E$
and Scott-continuous map $f:P\to E$, there exists a unique
Scott-continuous $g:\cs(P)\to E$ such that $f=\iota\pc g$. It follows
that the D-completion is functorial. Moreover, if $P$ is a bounded
directed complete partial order, then $P$ is an initial subset of
$\cs(P)$, i.e., the only new elements added by the completion are ``at
infinity''. We call these the {\em infinite} elements of $\cs(P)$.

The homset $\cpms(G,H)$ is then extended by D-completion, namely,
$\overline\cpms(G,H) :=\cs(\cpms(G,H))$. The categorical operations
are extended in the unique Scott-continuous way, using the universal
property of D-completion.  This allows us to define indexed sums over
$\ccpms(G,H)$, as follows.  If $\{f_i\}_{i\in
  I}\subseteq\overline\cpms(G,H)$ is a (possibly infinite) indexed
family, $\sum_{i\in I}f_i$ is defined as $\dirsup
_{F\subseteq_{\mathrm{fin}}I}\bigl(\sum_{i\in F}f_i\bigr)$.  Indeed,
the set $\{\sum_{i\in F}f_i\;;\;F\subseteq_{\mathrm{fin}}I\}$ is
always directed, so has a least upper bound in the order completion
$\overline\cpms(G,H)$ of $\cpms(G,H)$.

%-----------------------------------------------------------
\subsection{The category
  \texorpdfstring{$\freecat[\ccpms]$}{bar(CPMs)+}}
\label{sect:thecategory}
Given a set $A$ and $a,a'\in A$, define the \emph{Kronecker symbol}
$\delta_{a,a'}\in\N$ which takes value $1$ if $a=a'$ and $0$ if $a\neq
a'$.

\begin{description}
\item[Objects] are given by indexed families $\qfin A=\{(\arity{\qfin
    A}_a, \symm{\qfin A}_a)\}_{a\in\web{\qfin A}}$, where the index
  set $\web{\qfin A}$ is called the \emph{web} of $\qfin A$ and, for
  every $a\in\web{\qfin A}$, $\arity{\qfin A}_a$ is a natural
  non-negative integer, and $\symm{\qfin A}_a$ a subgroup of
  permutations of degree~$\arity{\qfin A}_a$, called respectively the
  \emph{dimension} and the \emph{permutation group} of $\qfin A_a$.
\item[Morphisms] from $\qfin A$ to $\qfin B$ are matrices $\phi$
  indexed by $\web{\qfin A}\times\web{\qfin B}$ and such that
  $\phi_{a,b}\in\ccpms(\symm{\qfin A}_a, \symm{\qfin B}_b)$.
\item[Composition] of $\phi\in\freecat[\ccpms](\qfin A,\qfin B)$ and
  $\psi\in\freecat[\ccpms](\qfin B,\qfin C)$ is the matrix
  $\phi\pc\psi$ defined by, for $a\in\web{\qfin A}$ and
  $c\in\web{\qfin C}$, $ (\phi\pc\psi)_{a,c}\ass\sum_{b\in\web{\qfin
      B}}\phi_{a,b}\pc\psi_{b,c}.  $
\item[Identity] is the diagonal matrix built with the symmetries of
  $\qfin A$, i.e., for $a,a'\in\web{\qfin A}$, $\id^\qfin
  A_{a,a'}\ass\delta_{a,a'}\symm{\qfin A}_a$.
\end{description}

The description of the objects and the morphisms as indexed families
is crucial for inferring the structure of a compact closed Lafont
category (Section~\ref{sect:lafont_ccpms}). However, it is worthwhile
to notice that $\freecat[\ccpms]$ can also be presented as a concrete
category of modules and linear maps between modules. Let us sketch
such an alternative presentation.

Let $\qfin A$ be an object of $\freecat[\ccpms]$. We define a module
$\pmatr(\qfin A)$ over $\overline{\Rp}=\Rp\cup\{\infty\}$ as
follows. For every $a$ in $\web{\qfin A}$, let us write $\pmatr(a)$
for the cone of the positive matrices in $\symm{\qfin
  A}_a(\C^{\arity{\qfin A}_a\times \arity{\qfin A}_a})$, this latter
being the subspace of the matrices in $\C^{\arity{\qfin A}_a\times
  \arity{\qfin A}_a}$ invariant under $\symm{\qfin A}_a$. This
positive cone $\pmatr(a)$ is an $\Rp$-module. We then define:
\begin{equation}\label{eq:pos_obj}
\pmatr(\qfin A)\ass
\bigoplus_{a\in\web{\qfin A}}(\cs(\pmatr(a))\}).
\end{equation}
In fact, we have that $\pmatr(a)\simeq\cpms(\symgroup_1,\symm{\qfin
  A}_a)$ and $\pmatr(\qfin A)\simeq\bigoplus_{a\in\web{\qfin
    A}}\ccpms(\symgroup_1,\symm{\qfin A}_a)$. Hence, $\pmatr(\qfin A)$
is a continuous module over $\overline{\Rp}$: addition and scalar
multiplication are defined pointwise and are continuous operations
with respect to the L\"owner order.

Let $f:\pmatr(\qfin A)\to \pmatr(\qfin B)$ be a continuous
module homomorphism.  We say that $f$ is \define{completely positive}
if all the module homomorphisms $f_{a,b} = \inj{a}\pc f\pc\proj b$ are
completely positive maps, for all $a\in\web{\qfin A}$ and
$b\in\web{\qfin B}$. (Indeed, since the positive matrices span the
complex vector space of square matrices (of corresponding size), one
can canonically extend the definition of complete positivity to module
homomorphisms $\pmatr(a)\to\pmatr(b)$).

\begin{proposition}\label{prop:concrete_cpms_pi}
  There is an isomorphism between the homset
  $\homof{\freecat[\ccpms]}{\qfin A,\qfin B}$ and the continuous
  module homomorphisms from $\pmatr(\qfin A)$ to $\pmatr(\qfin B)$
  that are completely positive. \qed
\end{proposition}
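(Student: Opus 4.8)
The plan is to exhibit the isomorphism componentwise and then check that it respects composition and identities. First I would observe that a morphism $\phi \in \freecat[\ccpms](\qfin A, \qfin B)$ is, by definition, a matrix $(\phi_{a,b})$ with $\phi_{a,b} \in \ccpms(\symm{\qfin A}_a, \symm{\qfin B}_b)$. On the other side, the remark immediately preceding the proposition already records the key identification $\pmatr(a) \simeq \cpms(\symgroup_1, \symm{\qfin A}_a)$, and after D-completion $\cs(\pmatr(a)) \simeq \ccpms(\symgroup_1, \symm{\qfin A}_a)$; together with \eqref{eq:pos_obj} this gives $\pmatr(\qfin A) \simeq \bigoplus_{a \in \web{\qfin A}} \ccpms(\symgroup_1, \symm{\qfin A}_a)$. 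So a continuous module homomorphism $f : \pmatr(\qfin A) \to \pmatr(\qfin B)$ which is completely positive is determined by its matrix $(f_{a,b})$ of components $f_{a,b} = \inj a \pc f \pc \proj b$, each of which is a completely positive map between the corresponding positive cones, hence an element of $\ccpms(\symm{\qfin A}_a, \symm{\qfin B}_b)$. The assignment $\phi \mapsto (\phi_{a,b})$ is thus a bijection on the level of matrices; the content is to check this bijection is well-defined in both directions and functorial.

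The key steps, in order, are: (1) Given $\phi$, define $\hat\phi : \pmatr(\qfin A) \to \pmatr(\qfin B)$ on the biproduct by $\hat\phi = (\inj a \pc \phi_{a,b} \pc \proj b)_{a,b}$, i.e.\ by the usual matrix action; verify it is a well-defined module homomorphism, that it is continuous (each $\phi_{a,b}$ is Scott-continuous by construction of $\ccpms$ via D-completion, and the biproduct sum/colimit is taken Scott-continuously as explained in Section~\ref{sect:preli_cpm}), and that it is completely positive in the sense defined just before the proposition — this last point is immediate since its $(a,b)$-component is exactly $\phi_{a,b}$, which is completely positive. (2) Conversely, given a completely positive continuous module homomorphism $f$, recover the matrix $(f_{a,b})$ and check each $f_{a,b}$ lies in $\ccpms(\symm{\qfin A}_a, \symm{\qfin B}_b)$: complete positivity is assumed, Scott-continuity is inherited from $f$ and the continuity of the injections/projections, and invariance under the permutation groups $\symm{\qfin A}_a$, $\symm{\qfin B}_b$ holds because the codomain of $\proj b$ (resp.\ domain of $\inj a$) is the invariant cone $\pmatr(b)$ (resp.\ $\pmatr(a)$), i.e.\ the image of the idempotent $\symm{\qfin B}_b$ (resp.\ $\symm{\qfin A}_a$), so $\symm{\qfin A}_a \pc f_{a,b} \pc \symm{\qfin B}_b = f_{a,b}$ automatically. (3) Check the two assignments are mutually inverse: $f$ is determined by its components because $\pmatr(\qfin A)$ is the biproduct $\bigoplus_a \cs(\pmatr(a))$ and a module homomorphism out of a (continuous) biproduct is the ``sum'' of its components — this uses that the $\pmatr(a)$ generate $\pmatr(\qfin A)$ under the continuous module structure, which is exactly the content of \eqref{eq:pos_obj}. (4) Check functoriality: the composition $(\phi \pc \psi)_{a,c} = \sum_{b} \phi_{a,b} \pc \psi_{b,c}$ of $\freecat[\ccpms]$ matches the ordinary composition of the associated module homomorphisms (matrix multiplication on the biproduct), and the identity matrix $\id^{\qfin A}_{a,a'} = \delta_{a,a'} \symm{\qfin A}_a$ maps to the identity module homomorphism of $\pmatr(\qfin A)$ — here one uses that $\symm{\qfin A}_a$ acts as the identity on the invariant cone $\pmatr(a)$, which is precisely its image.

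The main obstacle I anticipate is step (3), specifically making precise that a continuous module homomorphism out of the \emph{infinite} biproduct $\bigoplus_{a \in \web{\qfin A}} \cs(\pmatr(a))$ is uniquely and freely determined by its components on the summands. For finite biproducts this is routine, but for infinite ones it relies on the interplay between the D-completion and the biproduct: one must show that every element of $\pmatr(\qfin A)$ is the Scott-limit of finite partial sums of ``single-coordinate'' elements, and that $f$, being Scott-continuous and additive, is therefore determined on all of $\pmatr(\qfin A)$ by its restrictions to the $\pmatr(a)$. This is exactly where the indexed-sum construction $\sum_{i \in I} f_i = \dirsup_{F \subseteq_{\mathrm{fin}} I} \sum_{i \in F} f_i$ from Section~\ref{sect:preli_cpm} is needed, and one must also verify that the infinite matrix sum $\sum_b \phi_{a,b} \pc \psi_{b,c}$ defining composition in $\freecat[\ccpms]$ converges to (and agrees with) the composite of the module homomorphisms; convergence is guaranteed since the relevant directed set of finite partial sums is bounded — it is dominated by the composite module homomorphism evaluated appropriately — so the $\dirsup$ exists in the D-completion. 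Once this bookkeeping is in place, the remaining verifications are straightforward diagram-chasing.
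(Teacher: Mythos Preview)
The paper does not actually give a proof of this proposition --- the \texttt{\textbackslash qed} at the end of the statement signals that the argument is omitted. Your plan is a correct and careful elaboration of the expected biproduct/matrix correspondence, and is exactly the kind of routine verification the paper is leaving to the reader.

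Two minor remarks. First, your step (4) on functoriality (composition and identities) goes beyond the literal statement, which only claims a bijection of homsets; what you check is of course the natural strengthening, and harmless. Second, in your discussion of the obstacle in step (3), you argue that the infinite sum $\sum_b \phi_{a,b}\pc\psi_{b,c}$ converges because the partial sums are bounded by the composite homomorphism. This is unnecessary: as the paper notes just before introducing indexed sums in Section~\ref{sect:preli_cpm}, after D-completion the homsets $\ccpms(G,H)$ are dcpos, so \emph{every} directed set of finite partial sums has a supremum --- no bound is required. (Boundedness was the issue only in the un-completed $\cpms(G,H)$.) With that simplification, the bookkeeping you outline goes through without further subtlety.
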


%-----------------------------------------------------------
\subsection{\texorpdfstring{$\freecat[\ccpms]$}{bar(CPMs)+} 
  as a model of the quantum lambda calculus}
\label{sect:lafont_ccpms}

A compact closed category is a special case of symmetric monoidal
closed category.  A symmetric monoidal closed category with finite
products, such that each object has a corresponding free commutative
comonoid, is called a \emph{Lafont category}, which is known to be a
model of intuitionistic linear logic
\cite{lafont:these,melliespanorama}. The category $\freecat[\ccpms]$
can be endowed with such a structure, as we will show in
Sections~\ref{subsubsect:biprod}--\ref{subsubsect:exp} below.  We can
therefore interpret the quantum lambda calculus in $\freecat[\ccpms]$.

The denotation $\denot{A}$ of a type $A$ is an object of
$\freecat[\ccpms]$. In case $A$ is the ground type (i.e., $\tunit$,
$\qubit$), its denotation is:
\begin{align*}
\web{\denot{\qubit}}&\ass\{\star\},
&\arity{\denot{\qubit}}_\star&\ass 2,
&\symm{\denot{\qubit}}_\star\ass\{\id\},
\\
\web{\denot{\tunit}}&\ass\{\star\},
&\arity{\denot{\tunit}}_\star&\ass 1,
&\symm{\denot{\tunit}}_\star\ass\{\id\}.
\end{align*}
The denotation of the other types is given by structural induction,
following the compact closed Lafont structure of
$\freecat[\ccpms]$. We note in particular that the permutation groups
play a role only when interpreting $\oc$-formulas.

Let $\Gamma=x_1{:}A_1,\dots,x_n{:}A_n$. The denotation of a typing
judgement $\Gamma\entail M:A$ is a morphism $\denot{M}^{\Gamma\entail
  A}:\denot{A_1\otimes\cdots\otimes A_n}\to\denot{A}$.
The definition is by structural induction on the unique type
derivation $\pi$ of $\Gamma\entail M:A$
(see Proposition~\ref{prop:unicity_derivation}). 
\begin{table}
\[
\denot{\meas}^{\oc\Delta\vdash\qubit\multimap\bit}_{\vec m,(\ast,b)}=
(\begin{smallmatrix}
   \alpha & \beta\\
   \gamma & \delta
   \end{smallmatrix})\mapsto
   \begin{cases}
   \alpha&\text{if $\vec m=\vec{[\,]}$ and $b=\ffalse$,}\\
   \delta&\text{if $\vec m=\vec{[\,]}$ and $b=\ttrue$,}\\
   0&\text{otherwise.}
   \end{cases}
\]
\[
\denot{\new}^{\oc\Delta\vdash\bit\multimap\qubit}_{\vec m,(b,\ast)}=
\alpha\mapsto
   \begin{cases}
   (\begin{smallmatrix}
   \alpha & 0\\
   0 & 0
   \end{smallmatrix})&\text{if $\vec m=\vec{[\,]}$ and $b=\ffalse$,}\\
   (\begin{smallmatrix}
   0 & 0\\
   0 & \alpha
   \end{smallmatrix})&\text{if $\vec m=\vec{[\,]}$ and $b=\ttrue$,}\\
   \mathbf 0&\text{otherwise.}
   \end{cases}
\]
\[
\denot{U}^{\oc\Delta\vdash\qubit^{\otimes n}
  \multimap\qubit^{\otimes n}}_{\vec m,(\vec\ast,\vec\ast)}=
M\mapsto
   \begin{cases}
   U M U^{-1}&\text{if $\vec m=\vec{[\,]}$,}\\
   \mathbf 0&\text{otherwise.}
   \end{cases}
\]
\caption{\footnotesize Interpretation of the quantum constants. 
The writing $\vec m$ stands for a sequence of multisets in 
$\web{\denot{\oc\Delta}}$, the equality $\vec m=\vec{[\,]}$ 
meaning that each of these multisets is empty. $U$ and $M$ have 
the same dimension $\C^{2^n\times 2^n}$, $U$ being unitary.}
\label{table:denotation_quantum}
\end{table}
\begin{table*}
\centering
\def\myscale{0.84}
\centering
\scalebox{\myscale}{
\centering
\subfloat[$\oc\Delta,x:A\vdash x:A$]{
\centering
\parbox{100pt}{
\centering
\[
\xymatrix@R=0pt@C=1pc{
   \oc\Delta\otimes A\ar[r]^-{\weak\otimes\id}&
   \tunit\otimes A\simeq A
}
\]
\vspace{-10pt}
}
}
\subfloat[$\oc\Delta,x:\oc A\vdash x:A$]{
\centering
\parbox{100pt}{
\centering
\[
\xymatrix@R=0pt@C=1pc{
   \oc\Delta\otimes \oc A\ar[r]^-{\weak\otimes \der}&
   \tunit\otimes A\simeq A
}
\]
\vspace{-10pt}
}
}
\subfloat[$\oc\Delta,\vdash V:\oc A$]{
\centering
\parbox{120pt}{
\centering
\[
\xymatrix@R=0pt@C=1pc{
   \oc\Delta\ar[r]^{\dig}&\oc\oc\Delta\ar[r]^-{\bierman}&
   \oc(\oc\Delta)\ar[r]^-{\oc\phi}&\oc A
}
\]
\vspace{-10pt}
}
}
\subfloat[$\oc\Delta,\Gamma\vdash \lambda x^A.M:A\multimap B$]{
\centering
\parbox{150pt}{
\centering
\[
\xymatrix@R=0pt@C=2pc{
   \oc\Delta\otimes\Gamma\ar[r]^-{\cmatrix{\phi}}&A\multimap B
}
\]
\vspace{-10pt}
}
}
}

\vspace{-2ex}
\scalebox{\myscale}{
\centering
\subfloat[$\oc\Delta,\Gamma,\Sigma\vdash MN:B$]{
\centering
\parbox{240pt}{
\centering
\[
\xymatrix@R=0pt@C=1pc{
   \oc\Delta\otimes\Gamma\otimes\Sigma\ar[r]^-{\contr\otimes\id}&
   \oc\Delta\otimes\Gamma\otimes\oc\Delta\otimes
   \Sigma\ar[r]^-{\phi\otimes\psi}&
   A\otimes A\multimap B\ar[r]^-{\eval}&
   B
}
\]
\vspace{-10pt}
}
}
\subfloat[$\oc\Delta\vdash \punit:\tunit$]{
\centering
\parbox{60pt}{
\centering
\[
\xymatrix@R=0pt@C=4ex{
   \oc\Delta\ar[r]^{\weak}&\tunit
}
\]
\vspace{-10pt}
}
}
\subfloat[$\oc\Delta,\Gamma,\Sigma\vdash \letunitterm{M}{N}:A$]{
\centering
\parbox{265pt}{
\[
\xymatrix@R=0pt@C=1pc{
   \oc\Delta\otimes\Gamma\otimes\Sigma\ar[r]^-{\contr\otimes\id}&
   \oc\Delta\otimes\Gamma\otimes\oc\Delta\otimes
   \Sigma\ar[r]^-{\phi\otimes\id}&
   \tunit\otimes\oc\Delta\otimes\Sigma\simeq
   \oc\Delta\otimes\Sigma\ar[r]^-{\psi}&
   A
}
\]
\vspace{-10pt}
}
}
}

\vspace{-2ex}

\scalebox{\myscale}{
\subfloat[$\oc\Delta,\Gamma,\Sigma\vdash \tensterm MN:A\otimes B$]{
\centering
\parbox{200pt}{
\centering
\[
\xymatrix@R=0pt@C=.5pc{
   \oc\Delta\otimes\Gamma\otimes\Sigma\ar[rr]^-{\contr\otimes\id}&&
   \oc\Delta\otimes\Gamma\otimes\oc\Delta\otimes
   \Sigma\ar[rr]^-{\phi\otimes\psi}&&
   A\otimes B
}
\]
\vspace{-10pt}
}
}
\subfloat[$\oc\Delta,\Gamma,\Sigma\vdash \lettensterm{x^A}{y^B}{M}{N}:C$]{
\centering
\parbox{251pt}{
\centering
\[
\xymatrix@R=0pt@C=.5pc{
   \oc\Delta\otimes\Gamma\otimes\Sigma
   \ar[rr]^-{\contr\otimes\id}&&
   \oc\Delta\otimes\Gamma\otimes\oc\Delta
   \otimes\Sigma\ar[rr]^-{\phi\otimes\id}&&
   A\otimes B\otimes\oc\Delta\otimes\Sigma\ar[r]^-{\psi}&
   C
}
\]
\vspace{-10pt}
}
}
\subfloat[$\oc\Delta,\Gamma\vdash \injl M:A\oplus B$]{
\centering
\parbox{133pt}{
\centering
\[
\xymatrix@R=0pt@C=1pc{
   \oc\Delta\otimes\Gamma\ar[r]^-{\phi}&
   A\ar[r]^-{\inj \ell{}}&
   A\oplus B
}
\]
\vspace{-10pt}
}
}
}

\vspace{-2ex}

\scalebox{\myscale}{
\subfloat[$\oc\Delta,\Gamma\vdash \injr M:A\oplus B$]{
\centering
\parbox{140pt}{
\centering
\[
\xymatrix@R=0pt@C=1pc{
   \oc\Delta\otimes\Gamma\ar[r]^-{\phi}&
   B\ar[r]^-{\inj r{}}&
   A\oplus B
}
\]
\vspace{-10pt}
}
}
\subfloat[$\oc\Delta,\Gamma,\Sigma\vdash \match{M}{x^A}{N}{y^B}{L}:C$
          ]{
\centering
\parbox{449pt}{
\centering
\[
\xymatrix@R=0pt@C=.5pc{
   \oc\Delta\otimes\Gamma\otimes\Sigma\ar[rr]^-{\contr\otimes\id}&&
   \oc\Delta\otimes\Gamma\otimes\oc\Delta
   \otimes\Sigma\ar[rr]^-{\psi\otimes\id}&&
   (A\oplus B)\otimes\oc\Delta\otimes\Sigma\ar[rr]^-{\pdistr}&&
   (A\otimes\oc\Delta\otimes\Sigma)\oplus(B\otimes\oc\Delta\otimes\Sigma)
   \ar[rr]^-{\phi_A\oplus\phi_B}&&
   C
}
\]
\vspace{-10pt}
}
}
}

\vspace{-2ex}

\scalebox{\myscale}{
\subfloat[$\oc\Delta,\Gamma\vdash M:\tlist A$]{
\centering
\parbox{240pt}{
\centering
\[\xymatrixcolsep{1pc}
\xymatrix{
   \oc\Delta\otimes\Gamma\ar[r]^-{\phi}&
   \tunit\oplus(A\otimes\tlist A)\ar[rr]^-{\id\oplus\pdistr}&&
   \tunit\oplus(\bigoplus_{n=1}^\infty A^{\otimes n})=\tlist A
}
\]
\vspace{-15pt}
}
}
\subfloat[$\bang\Delta,\Gamma\entail\letrec{f}{x}{M}{N}:C$]{
\centering
\parbox{350pt}{
\centering
\[
\oc\Delta\tensor\Gamma
\xrightarrow{\contr}
\oc\Delta\tensor\Gamma\tensor\oc\Delta
\xrightarrow{\id\tensor\fixpoint(\dig;\bierman;\oc{(\Lambda\phi)})}
\oc\Delta\tensor\Gamma\tensor\oc{(A\loli B)}
\xrightarrow{\psi}
C
\]
\vspace{-10pt}
}}
}
\caption{\footnotesize Sketch of the interpretation of the typing 
  judgements, using the Lafont structure of $\freecat[\ccpms]$
  defined in Section~\ref{sect:lafont_ccpms}. The
  morphisms $\phi$, $\psi$, $\phi_A$, $\phi_B$ refer to the denotation
  of the premises of the unique derivation concluding a typing
  judgement. In (c) and (n), the morphism $\bierman$ stands for 
  $\bierman^\unit$ or the suitable sequence of $\bierman^\otimes$,
  depending on the context $\oc\oc\Delta$.}\label{table:denotation_rule}
\end{table*}
The denotations of the constants $\meas$, $\new$ and the unitary
transformations are given in
Table~\ref{table:denotation_quantum}. Table~\ref{table:denotation_rule}
briefly recalls the denotation of the usual linear logic rules. Here,
the morphisms $\phi, \psi, \phi_A, \phi_B$ refer to the denotation of
the premises of the last rule of $\pi$, which are uniquely defined
given $\Gamma\entail M:A$.

In the interpretation of the $\mathtt{letrec}$ constructor, the fixed
point operator $\fixpoint$ is defined as follows. Let $\phi$ be a
morphism in the set $\homof{\freecat[\ccpms]}{\oc{C}\tensor\oc{A},\oc A}$. By
induction on $n$, we define the morphism
$\phi^n\in\homof{\freecat[\ccpms]}{\oc{C},\oc A}$: $
\phi^0\ass\oc{C}\xrightarrow{\weak;\oc{\bf 0}}\bang{A}$,
$\phi^{n+1}\ass\oc{C}\xrightarrow{\contr}\bang{C}\tensor\bang{C}
\xrightarrow{\id\tensor
  \phi^n} \bang{C}\tensor\bang{A}\xrightarrow{\phi}\bang{A}$.  Since
$\phi$ can be regarded as a continuous module homomorphism (in
particular it is monotone), the set $\{\phi^n\}$ is directed
complete. We define $\fixpoint(\phi)$ as its least upper bound.

%-----------------------------------------------------------
\subsubsection{Biproduct (\texorpdfstring{$\qfin A\oplus\qfin B$}{A + B})}
\label{subsubsect:biprod}  

Let $I$ be a (possibly infinite) set of indexes. The biproduct
$\bigoplus_{i\in I}\qfin A_i$ of a family $\{\qfin A_i\}_{i\in I}$ of
objects in $\freecat[\ccpms]$ is defined by
\[
\web{\bigoplus_{i\in I}\qfin A_i}
\ass\!\bigcup_{i\in I}\{i\}\times\web{\qfin A_i},\;\;
\arity{\bigoplus_{i\in I}\qfin A_i}_{(j,a)}\!\!\ass\arity{\qfin A_j}_a,
\;\;
\symm{\bigoplus_{i\in I}\qfin A_i}_{(j,a)}\!\!\ass\symm{\qfin A_j}_a.
\]
The corresponding projections and injections are denoted respectively by
$\proj j$ and $\inj j$ and defined as:
\begin{align*}
\proj j_{(i,a),a'}&\ass\inj j_{a',(i,a)}\ass
\delta_{j,i}\delta_{a,a'}\symm{\qfin A_i}_a.
\end{align*}
The tupling $\fprod{\phi_i}{i\in I}$ (resp.\ (co)-tupling
$\fcoprod{\psi_i}{i\in I}$) of a family of morphisms
$\phi_i$ elements of $\homof{\freecat[\ccpms]}{\qfin A,\qfin B_i}$ (resp.\
$\psi_i$ elements of $\homof{\freecat[\ccpms]}{\qfin A_i,\qfin B}$)
is defined by
$(\fprod{\phi_i}{i\in
    I})_{a,(j,b)}\ass(\phi_{j})_{a,b}$ (resp. $(\fcoprod{\psi_i}{i\in
    I})_{(j,a),b}\ass(\psi_{j})_{a,b}$).

\begin{example}\label{ex:bool_sem}
  Recall that in Notation~\ref{notation:syntax}, the type $\bit$ is
  interpreted as the biproduct $\denot{\tunit}\oplus\denot{\tunit}$,
  which is the two-element family
  $\{(1,\{\id\})_{\ttrue},(1,\{\id\})_{\ffalse}\}$.  The positive
  cones associated with $\tunit$ and $\bit$ are:
  $\pmatr(\denot{\tunit})=\overline{\Rp}$ and
  $\pmatr(\denot{\bit})=\overline{\Rp}^2$.

The typing judgement $\vdash \ttrue:\bit$ is interpreted as the right
injection, which can be seen both as a family of two completely
positive maps from $\C$ to $\C$ (i.e., $\denot{\ttrue}^{\vdash
  \bit}_{\star,\ttrue}=p\mapsto p$ and $\denot{\ttrue}^{\vdash
  \bit}_{\star,\ffalse}=p\mapsto 0$) and as a quantum compatible and
completely positive map sending $p\in\overline{\Rp}$ to
$(0,p)\in\overline{\Rp}^2$. Symmetrically,
$\denot{\ffalse}^{\vdash\bit}$ is the map $p\mapsto (p,0)$.

As an example of a term with free variables, consider
$\texttt{Neg}_x\ass\iftermx{x}{\ffalse}{\ttrue}$. The denotation of
$x:\bit\vdash \texttt{Neg}_x:\bit$ can be seen both as a family of
four constant maps $\denot{\texttt{Neg}_x}^{\bit\vdash\bit}_{b,b'}$
from $\C$ to $\C$ of value $1$ if $b\neq b'$ and $0$ otherwise, and as
a single map from $\overline\Rp^2$ to $\overline\Rp^2$ sending
$(p,p')$ to $(p',p)$.
\end{example}

%-----------------------------------------------------------
\subsubsection{Symmetric monoidal structure (\texorpdfstring{$\qfin
    A\otimes \qfin B$}{A tensor B},
  \texorpdfstring{$\unit$}{1} and 
  \texorpdfstring{$\tlist{\qfin A}$}{list(A)})}
\label{subsubsect:smc}

The bifunctor
$\tensor:\freecat[\ccpms]\times\freecat[\ccpms]\to\freecat[\ccpms]$ is
defined on objects $\qfin A, \qfin B$ by:
\[
\web{\qfin A\tensor\qfin B}\ass\web{\qfin A}\times\web{\qfin B},
\qquad
\arity{\qfin A\tensor\qfin B}_{(a,b)}\ass\arity{\qfin A}_a
\times\arity{\qfin B}_b,\]
\[
\symm{\qfin A\tensor\qfin B}_{(a,b)}\ass\{(g,h)\;;\;g\in
\symm{\qfin A}_a, h\in\symm{\qfin B}_b\},
\]
\noindent where $\arity{\qfin A}_a\times\arity{\qfin B}_b$ is the
multiplication of the two numbers $\arity{\qfin A}_a$ and
$\arity{\qfin B}_b$, which can be seen as the lexicographically
ordered set of pairs $(i,j)$, for $i<\arity{\qfin A}_a$,
$j<\arity{\qfin B}_b$. Hence, the action of a permutation
$(g,h)\in\symm{\qfin A\tensor\qfin B}_{(a,b)}$ on $\arity{\qfin
  A}_a\times\arity{\qfin B}_b$ can be described as
$(i,j)\mapsto(g(i),h(j))$.

The bifunctor $\tensor$ on morphisms is defined componentwise, using
the standard tensor of the category $\CPM$ extended to the infinite
elements by the universal property of the D-completion
(Section~\ref{sect:preli_cpm}).
The tensor unit is the object $\denot{\tunit}$ interpreting the unit type.

The associativity, unit, and symmetry isomorphisms are defined
componentwise from the corresponding isomorphisms in $\CPM$, composed
with the actions of the groups of the objects. E.g., the symmetry is $
\sigma^{\qfin A,\qfin
  B}_{(a,b),(b',a')}\!\ass\delta_{a,a'}\delta_{b,b'}\symm{\qfin
  A\otimes\qfin B}_{(a,b)}\pc\swap^{\arity{\qfin A}_a,\arity{\qfin
    B}_b}
$, 
where $\swap^{\arity{\qfin A}_a,\arity{\qfin B}_b}$ is the symmetry in
\CPM{} between $\C^{\arity{\qfin A}_a\times\arity{\qfin
    A}_a}\otimes\C^{\arity{\qfin B}_b\times\arity{\qfin B}_b}$ and
$\C^{\arity{\qfin B}_b\times\arity{\qfin B}_b}\otimes\C^{\arity{\qfin
    A}_a\times\arity{\qfin A}_a}$.  Notice that it is sufficient to
pre-compose $\swap^{\arity{\qfin A}_a,\arity{\qfin B}_b}$ with
$\symm{\qfin A\otimes\qfin B}_{(a,b)}$ (or, symmetrically,
post-compose with $\symm{\qfin B\otimes\qfin A}_{(b,a)}$), in order to
have a map invariant under both the permutation groups $\symm{\qfin
  A\otimes\qfin B}_{(a,b)}$ and $\symm{\qfin B\otimes\qfin
  A}_{(b,a)}$. This is because $\symm{\qfin A\otimes\qfin
  B}_{(a,b)}\pc\swap^{\arity{\qfin A}_a,\arity{\qfin B}_b}=\symm{\qfin
  A}_{a}\otimes\symm{\qfin B}_{b}\pc\swap^{\arity{\qfin
    A}_a,\arity{\qfin B}_b}=\swap^{\arity{\qfin A}_a,\arity{\qfin
    B}_b}\pc \symm{\qfin B}_{b}\otimes\symm{\qfin
  A}_{a}=\swap^{\arity{\qfin A}_a,\arity{\qfin B}_b}\pc\symm{\qfin
  B\otimes\qfin A}_{(b,a)}$. Similar simplifications will be done
henceforth without explicitly mentioning it.

\begin{example}\label{ex:ifthen_sem}
  The denotation of $\qubit\otimes\qubit$ is the singleton web family
  $\{(4,\{\id\})_\star\}$. This object is associated with the cone of
  positive matrices of dimension $4\times 4$ plus the infinite
  elements needed to complete the L\"owner order.  The
  denotation of $\bit\otimes\bit$ instead has a web of cardinality
  $4$, i.e.,
  $\{(\ffalse,\ffalse),(\ffalse,\ttrue),(\ttrue,\ffalse),(\ffalse,\ffalse)\}$,
  and, for each index $b\in\web{\denot{\bit\otimes\bit}}$, we have
  $\arity{\denot{\bit\otimes\bit}}_b=1$ and
  $\symm{\denot{\bit\otimes\bit}}_b=\{\id\}$. This object is
  associated with the biproduct
  $\overline\Rp\oplus\overline\Rp\oplus\overline\Rp\oplus\overline\Rp$.
\end{example}

Notice that in the above example the tensor product distributes over
the biproducts:
$\denot{\bit\otimes\bit}=\denot{(\tunit\oplus\tunit)
\otimes(\tunit\oplus\tunit)}=\denot{\tunit\oplus\tunit\oplus
\tunit\oplus\tunit}$. This
is true in general: the isomorphism between $\qfin
A\otimes(\bigoplus_{i\in I}\qfin B_i)$ and $\bigoplus_{i\in I}(\qfin
A\otimes\qfin B_i)$ is
\[
\pdistr_{(a,(i,b)),(i',(a',b'))}\ass\delta_{i,i'}\delta_{a,a'}
\delta_{b,b'}\symm{\qfin A\otimes\qfin B_i}_{(a,b)}.
\]
This isomorphism allows us to define the list constructor as the
infinite biproduct of tensor powers
$
\tlist{\qfin A}\ass\bigoplus_{n=0}^\infty\qfin A^{\otimes n}.
$
In fact, we have $\tlist{\qfin A}\simeq\unit\oplus(\qfin A\otimes
\tlist{\qfin A})$.

\begin{example}\label{ex:list}
  The denotation of the unit type list is:
  $\web{\denot{\tlist\unit}}=\N$ and, for every $n\in\N$,
  $\arity{\denot{\tlist\unit}}_n=1$,
  $\symm{\denot{\tlist\unit}}_n=\{\id\}$. This object can be
  associated with the module $\overline{\Rp}^\N$ and is suitable for
  denoting the numerals in unary notation. Indeed, writing $\underline
  n$ for the list $\cons{\punit}{\dots\cons{\punit}{\nil}}$ of length
  $n$, we have $\denot{\underline
    n}^{\vdash\tlist\unit}=p\mapsto(\underbrace{0,\dots,0}_{n-1\text{
      times}},p,0,\dots)$.
\end{example}

%-----------------------------------------------------------
\subsubsection{Compact closure (\texorpdfstring{$\qfin A^\perp, \qfin
    A\multimap B$}{dual(A) and A -o B})}
\label{subsubsect:cc}  

Dual objects coincide: we have 
$
	\dual{\qfin A}\ass\qfin A.
$
The unit $\eta^{\qfin A}\in\freecat[\ccpms](\unit,\dual{\qfin
  A}\otimes \qfin A)$ and co-unit $\epsilon^{\qfin
  A}\in\freecat[\ccpms](\qfin A\otimes\dual{\qfin A},\unit)$ are
defined componentwise composing the unit and co-unit of $\CPM$ with
the correspondent permutation group. Writing $E_{i,j}$ for the matrix
that has $0$ everywhere except $1$ at $(i,j)$, we have:
\begin{align*}
  \eta^{\qfin A}_{\star,(a,a')}&
  \ass 1\mapsto \sum_{i,j<\arity{\qfin A}}\symm{\qfin
    A}_a(E_{i,j})\otimes
  \symm{\qfin A}_a(E_{i,j})\\
  \epsilon^{\qfin A}_{(a,a'),\star}
  &\ass(E_{i,j}\otimes E_{i',j'})
  \mapsto\!\!\!\!\sum_{g,g'\in \symm{\qfin A}_a}\frac{1}{
    \#\symm{\qfin A}_a}\delta_{g(i),g'(i')}\delta_{g(j),g'(j')}.
\end{align*}

Compact closed categories are monoidal closed. Let us recall the
monoidal closure structure, which is needed to model the abstraction
and the application of the quantum lambda calculus.
The internal hom object is defined as 
$
\qfin A\multimap\qfin B\ass(\dual{\qfin A}\otimes \qfin B)=
\qfin A\otimes \qfin B.
$
The evaluation morphism $\Eval^{\qfin A,\qfin
  B}:\homof{\freecat[\ccpms]}{(\qfin A\multimap\qfin B)\otimes\qfin A,
  \qfin B}$ and the currying isomorphism $\Lambda(\--)$ from
$\homof{\freecat[\ccpms]}{\qfin C\otimes\qfin A,\qfin B}$ to
$\homof{\freecat[\ccpms]}{\qfin C,\qfin A\multimap\qfin B}$ are,
\[
\Eval^{\qfin A,\qfin
  B}\!\!\!\ass\sigma\pc\alpha\pc(\epsilon\tensor\id)\pc
\lambda,~~\cmatrix{\phi}\ass\lambda^{\!\!-1}\!\pc(\eta\otimes\id)\pc
\alpha^{\!\!-1}\!\pc(\id\otimes(\sigma;\phi)),
\]
\noindent where $\alpha$, $\lambda$, and $\sigma$ are the associative,
left unit and symmetric isomorphisms associated with $\otimes$.

\begin{example}\label{ex:lambda_sem}
  Let us consider the abstraction $\lambda x.\texttt{Neg}_x$ of the
  term $\texttt{Neg}_x$ discussed in Example~\ref{ex:bool_sem}. The
  denotation $\denot{\lambda x.\texttt{Neg}_x}^{\vdash
    \bit\multimap\bit}$ is obtained from
  $\denot{\texttt{Neg}_x}^{x:\bit\vdash\bit}$ just by shifting the
  matrix indexes: $\denot{\lambda x.\texttt{Neg}_x}^{\vdash
    \bit\multimap\bit}_{\star,(b,b')}=
  \denot{\texttt{Neg}_x}^{x:\bit\vdash\bit}_{b,b'}$.
  Looking at this matrix as a module homomorphism, the map
  $\denot{\lambda x.\texttt{Neg}_x}^{\vdash \bit\multimap\bit}$ is
  $p\mapsto (0,p,p,0)$, which is a map from $\overline\Rp$ to
  $\overline\Rp_{(\ffalse,\ffalse)}\oplus
  \overline\Rp_{(\ffalse,\ttrue)}\oplus
  \overline\Rp_{(\ttrue,\ffalse)}\oplus\overline\Rp_{(\ttrue,\ttrue)}$,
  where we make explicit the correspondence between the web elements
  of $\denot{\bit\multimap\bit}$ and the components of the biproduct
  associated with.

Application corresponds basically to matrix multiplication. For example,
$\denot{(\lambda x.\texttt{Neg}_x)(\meas\,
  y)}^{y:\qubit\vdash\bit}_{\star,b}$ is the function defined as
$
\sum_{b'\in\{\ttrue,\ffalse\}}\denot{\lambda
  x.\texttt{Neg}_x}_{\star,(b',b)}^{\vdash\bit\multimap\bit}\denot{\meas\,
  y}_{\star,b'}^{y:\qubit\vdash\bit}$,
which is sending $
(\begin{smallmatrix}
   \alpha & \beta\\
   \gamma & \delta
   \end{smallmatrix})$
to $\delta$ if $b=\ffalse$, $\alpha$ if $b=\ttrue$, and $0$ otherwise.
\end{example}

%-----------------------------------------------------------
\subsubsection{Free commutative comonoids (\texorpdfstring{$\qfin
    A^{\odot k}$}{symmetric powers}, \texorpdfstring{$\oc \qfin A$}{!A})}
\label{subsubsect:exp} 

Let us now focus on the crucial structure modeling the linear logic
modality $\oc$. We first define the notion of $k$-th symmetric power
of an object and then we show how the biproduct of all such symmetric
powers yields an exponential structure.

\begin{notation}\label{notation:multisets}
  Given a set $X$, a \emph{multiset} $\mu$ over $X$ is a function
  $X\mapsto\N$. The \emph{support} of $\mu$ is the set
  $\supp\mu=\{a\mid\mu(a)\neq 0\}\subseteq X$, the \emph{disjoint
    union} is $(\mu\uplus\nu)(a)=\mu(a)+\nu(a)$, and the \emph{empty
    multiset} is the zero constant function. The \emph{cardinality} of
  $\mu$ is $\sum_{a\in X}\mu(a)\in\N\cup\{\infty\}$. A multiset is
  finite if it has finite cardinality. $\M_k(X)$ (resp.\ $\M_f(X)$) is
  the set of the multisets over $X$ with cardinality $k$ (resp.\
  finite). Finite multisets can be denoted by listing the occurrences
  of their elements between square brackets, i.e., $\mu=[a,a,b]$ is
  $\mu(a)=2$, $\mu(b)=1$ and zero on the other elements, and $[\,]$ is
  the empty multiset.
\end{notation}

In a symmetric monoidal category, given a natural number $k$, the
\emph{$k$-th symmetric power} of an object $\qfin A$ is a pair $(\qfin
A^{\odot k},\eq{\qfin A^{\odot k}})$ of an object $\qfin A^{\odot k}$
and a morphism $\eq{\qfin A^{\odot k}}$ from $\qfin A^{\odot k}$ to
$\qfin A^{\otimes k}$, which is an equalizer of the $k!$ symmetries of
the $k$-ary tensor $\qfin A^{\otimes k}$. Such equalizers do not exist
in general, but they do exist in $\freecat[\ccpms]$ and can be
concretely represented using the multisets notation, as follows:
\begin{gather*}
  \web{\qfin A^{\odot k}}\ass\M_k(\web{\qfin A}),\quad \arity{\qfin
    A^{\odot k}}_\mu\ass\prod_{a\in\supp\mu}(\arity{\qfin
    A}_a)^{\mu(a)},
  \\
  \symm{\qfin A^{\odot
      k}}_\mu\ass\{(h_a,g_a^1,\dots,g_a^{\mu(a)})_{a\in\supp\mu}\;;\;
  h_a\in\symgroup_{\mu(a)}, g_a^i\in\symm{\qfin A}_a\},
\end{gather*}
where $(h_a,g_a^1,\dots,g_a^{\mu(a)})_{a\in\supp\mu}$ is a
$\supp\mu$-indexed family of sequences of permutations and
$\symm{\qfin A^{\odot k}}_\mu$ is a group (composition being defined
componentwise) whose action on $\C^{\arity{\qfin A^{\odot
      k}}_\mu\times\arity{\qfin A^{\odot k}}_\mu}$ can be described by
seeing $\arity{\qfin A^{\odot k}}_\mu$ as the set of families of
sequences of the form $(i_a^1,\dots,i_a^{\mu(a)})_{a\in\supp\mu}$,
with $i_a^j<\arity{\qfin A}_a$ for every $j\leq\mu(a)$. Then, the
action of $(h_a,g_a^1,\dots,g_a^{\mu(a)})_{a\in\supp\mu}$ on such
families is:
\[
(i_a^1,\dots,i_a^{\mu(a)})_{a\in\supp\mu}\mapsto
(g_a^1(i_a^{h_a(1)}),\dots,g_a^{\mu(a)}(i_a^{h_a(\mu(a))}))_{a\in\supp\mu}.
\]
The morphism $\eq{\qfin A^{\odot k}}$ is given by
\[
\eq{\qfin A^{\odot k}}_{\mu,(a_1,\dots, a_k)}\ass
	\begin{cases}
		\symm{\qfin A^{\odot k}}_\mu&\text{if $\mu=[a_1,\dots,a_k]$,}\\
		\mathbf 0&\text{otherwise.}
	\end{cases}
\]

\begin{remark}
  The object $\denot{A}^{\odot k}$ describes $k$ {\em unordered} uses
  of an element of type $A$. The fact that our model uses the
  symmetric tensor power $\qfin A^{\odot k}$ instead of the $k$-fold tensor
  $\qfin A^{\otimes k}$ means operationally that the behavior of a program
  calling its input $k$ times does not depend on the order of the calls.
\end{remark}

\begin{example}\label{ex:symmetric_two_power}
  In Example~\ref{ex:ifthen_sem}, we have seen that
  $\denot{\qubit}^{\otimes 2}=\{(4,\{\id\})_\star\}$. The symmetric
  $2$-power $\denot{\qubit}^{\odot 2}$ is instead the singleton web
  family $\{(4,\{\id,\swap\})_\star\}$, where $4$ is represented as
  the lexicographically ordered set $\{(0,0),(0,1),(1,0),(1,1)\}$ and
  the permutation $\swap$ acts on it by $(b,b')\mapsto(b',b)$. The
  group of permutations $\{\id,\swap\}$ shrinks the set of possible
  morphisms to or from $\denot{\qubit}^{\odot 2}$. For example, the
  matrix $N_c$ associated with the controlled-not gate
  (Equation~\eqref{eq:h_and_nc}) defines a complete positive endo-map
  of $\C^{4\times 4}$, which is an endo-morphism of
  $\denot{\qubit}^{\otimes 2}$ but not of $\denot{\qubit}^{\odot 2}$,
  because $N_c$ is not invariant under the action of $\{\id,\swap\}$:
\[
\{\id,\swap\}(N_c)=\frac12(\id(N_c)+\swap(N_c))=
\frac12
\left(\begin{smallmatrix}
2&0&0&0\\
0&1&0&1\\
0&0&1&1\\
0&1&1&0
\end{smallmatrix}
\right)\neq N_c.
\]
Concerning the module associated with symmetric tensor powers,  
$\pmatr(\denot{\qubit}^{\odot 2})$ is the D-completion of
\[\left\{
	\left(
	\begin{smallmatrix}
	\alpha_1&\alpha_2&\alpha_2&\alpha_3\\
	\alpha_4&\alpha_5&\alpha_6&\alpha_7\\
	\alpha_4&\alpha_6&\alpha_5&\alpha_7\\
	\alpha_8&\alpha_9&\alpha_9&\alpha_{10}
	\end{smallmatrix}
	\right)\text{ positive}\;;\;\forall i, \alpha_i\in\C
\right\}
\]
which is a subcone of the positive cone of $\C^{4\times 4}$ of
dimension $10$.

Concerning biproducts, the denotation of $\qubit\oplus\qubit$ is given
by $\{(2,\{\id\})_{\ttrue}, (2,\{\id\})_{\ffalse}\}$, while its
symmetric tensor power $\denot{\qubit\oplus\qubit}^{\odot 2}$ is given
by the three-element family
$\{(4,\{\id,\sigma\})_{[\ttrue,\ttrue]},(4,\{\id\})_{[\ttrue,
  \ffalse]},(4,\{\id,\sigma\})_{[\ffalse,\ffalse]}\}$. Notice the
difference between the pair $(4,\{\id\})$ associated with $[\ttrue,
\ffalse]$ and the pair $(4,\{\id,\sigma\})$ associated with the two
multisets of singleton support.
\end{example}

The biproduct $\oc\qfin A\ass\bigoplus_{k=0}^\infty\qfin A^{\odot k}$
of all symmetric powers of $\qfin A$ can be defined as
\begin{align*}
\web{\oc\qfin A}&=\M_f(\web{\qfin A}),&
\arity{\oc\qfin A}_\mu&=\arity{\qfin A^{\odot k}}_\mu\!\!,&
\symm{\oc\qfin A}_\mu&=\symm{\qfin A^{\odot k}}_\mu&
\!\!\!\!\text{($\mu\in\M_k(\web{\qfin A})$)}
\end{align*}
This object yields a concrete representation of the free commutative
comonoid generated by $\qfin A$. The counit (also called
\emph{weakening}) $\weak\in\homof{\freecat[\ccpms]}{\oc\qfin A,\unit}$
and the comultiplication (or \emph{contraction})
$\contr\in\homof{\freecat[\ccpms]}{\oc\qfin A,\oc\qfin
  A\otimes\oc\qfin A}$ are:
\begin{align*}
\weak_{\mu,\star}&\ass\delta_{\mu,[\,]}\symm{\oc\qfin A}_{[]},&
\contr_{\mu,(\mu',\mu'')}&\ass\delta_{\mu,\mu'+\mu''}\symm{\oc\qfin A}_\mu.
\end{align*}

The freeness of the comonoid gives the structure of exponential
comonad. The functorial promotion maps an object $\qfin A$ to
$\oc\qfin A$ and a morphism $\phi\in\homof{\freecat[\ccpms]}{\qfin
  A,\qfin B}$ to $\oc\phi\in\homof{\freecat[\ccpms]}{\oc\qfin
  A,\oc\qfin B}$ defined by, for $\mu\in\M_f(\web{\qfin A})$ and
$\nu=[b_1,\dots,b_k]\in\M_f(\web{\qfin B})$,
\[
\oc\phi_{\mu,\nu}\ass\sum_{
  \substack{
    (a_1,\dots,a_k),\text{ st}\\
    [a_1,\dots,a_k]=\mu
  }
}
\symm{\oc\qfin{A}}_{\mu}\pc\bigotimes_{i=1}^k\phi_{a_i,b_i}
\pc\symm{\oc\qfin
  B}_\nu.
\]
The counit of the comonad (or \emph{dereliction})
$\der\in\homof{\freecat[\ccpms]}{\oc\qfin A,\qfin A}$ and the
comultiplication (or \emph{digging})
$\dig\in\homof{\freecat[\ccpms]}{\oc\qfin A,\oc\oc\qfin A}$ are
\begin{align*}
\der_{\mu,a}&\ass\delta_{\mu,[a]}\symm{\qfin A}_a,&
\dig_{\mu,M}&\ass\delta_{\mu,\sum M}\symm{\qfin \oc\qfin A}_\mu,
\end{align*}
where $M\in\web{\oc\oc\qfin A}$ is a multiset of multisets $\nu$ over
$\web{\qfin A}$ and $\sum M\in\web{\oc\qfin A}$ is the multiset union
of such $\nu$'s, i.e., for every $a\in\web{\qfin A}$, $\sum
M(a)=\sum_{\nu\in\supp M}\nu(a)^{M(\nu)}$.

Finally, the last two morphisms that are essential to interpret our
calculus are Bierman's
$\bierman^\otimes\in\homof{\freecat[\ccpms]}{\oc\qfin A\otimes\oc\qfin
  B,\oc(\qfin A\otimes\qfin B)}$ and
$\bierman^\unit\in\homof{\freecat[\ccpms]}{\unit,\oc\unit}$, given by
$
\bierman^\otimes_{(\mu,\nu),\eta}\ass\delta_{\eta,\mu\times\nu}\symm{\oc(\qfin
  A\otimes\qfin B)}_\eta$ and $
\bierman^\unit_{\star,\mu}\ass\delta_{\mu,[\star]}\symm{\unit}_\mu$,
where $\mu\times\nu$ is the multiset in $\web{\oc{(\qfin A\otimes\qfin
    B)}}$ defined by, $\mu\times\nu(a,b)\ass\mu(a)\nu(b)$.

\begin{example}\label{ex:bang_sem}
  Using the isomorphism between $\M_f(\{\star\})$ and the set $\N$, and
  between $\M_f(\{\ttrue, \ffalse\})$ and $\N\times\N$, the free
  commutative comonoids associated with $\denot{\tunit}$ and
  $\denot{\bit}$ are $\oc\denot{\tunit}=\{(1,\{\id\})_{n}\}_{n\in\N}$,
  and $\oc\denot{\bit}=\{(1,\{\id\})_{(n,m)}\}_{n,m\in\N}$.
  In general, notice that all constructions of the Lafont category
  preserve the underlying pair $(1,\{\id\})$ and act only at the level
  of webs. For more involved examples, one should look for objects with
  larger dimension, like $\denot{\qubit}$. For example,
  $
  \oc\denot{\qubit}=\{(2^n,\symgroup_n)_n\}_{n\in\N}.
  $
  Notice that $\oc\tunit$, $\oc\bit$ and $\oc\qubit$ are not allowed
  by our type grammar. In fact, $\oc\qubit$ is meaningless because of
  the no-cloning constraint on quantum bits. However, such spaces
  should exist in the model since they are isomorphic to the
  denotations of legal types, like $\oc(\tunit\multimap\tunit)$,
  $\oc(\tunit\multimap\bit)$ and $\oc(\tunit\multimap\qubit)$.
\end{example}

%-----------------------------------------------------------
\subsection{The soundness theorem}\label{Sect:soundness}

The soundness of $\freecat[\ccpms]$ with respect to the operational
semantics given in Figure~\ref{table:reduction} is an easy consequence
of the fact that the category gives a (dcpo-enriched) model of linear
logic. In fact, the operational semantics is a trivial extension of a
head-reduction strategy of linear logic cut-elimination.

\begin{proposition}\label{prop:lafont}
  The category $\freecat[\ccpms]$ is a dcpo-enriched compact closed
  Lafont category, hence $\freecat[\ccpms]$ is a model of linear
  logic.
\end{proposition}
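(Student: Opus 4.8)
The plan is to assemble the proposition from the structure already laid out in Sections~\ref{subsubsect:biprod}--\ref{subsubsect:exp}, checking in four stages that the pieces cohere.

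First I would dispatch the dcpo-enrichment. By construction each homset $\homof{\freecat[\ccpms]}{\qfin A,\qfin B}$ is the product $\prod_{(a,b)}\ccpms(\symm{\qfin A}_a,\symm{\qfin B}_b)$ of D-completed positive cones, hence a dcpo under the pointwise L\"owner order with directed suprema taken coordinatewise. Composition, the bifunctor $\otimes$, and all the structural isomorphisms are built from matrix multiplication, Kronecker product, and averaging over the finite groups $\symm{\qfin A}_a$, each of which is Scott-continuous for the L\"owner order on $\CPMs$; by the universal property of the D-completion they extend uniquely to Scott-continuous operations on the completed homsets, and the (possibly infinite) indexed sums occurring in composition and in $\oc$ are the suprema of their directed sets of finite partial sums, as already observed in Section~\ref{sect:preli_cpm}. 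Hence composition is separately continuous, the category is dcpo-enriched, and the fixpoint operator $\fixpoint$ of Section~\ref{sect:lafont_ccpms} is well defined.

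Next I would establish the monoidal, closed, and product structure. The pentagon, triangle, and hexagon coherence diagrams for $(\otimes,\unit)$, and the two zig-zag identities for $\eta^{\qfin A}$ and $\epsilon^{\qfin A}$, all reduce componentwise on webs to the corresponding equations in $\CPM$, which hold by \cite{Selinger04}; the only new ingredient is the pre- and post-composition with the permutation groups, and this is harmless precisely because each $\symm{\qfin A}_a$ is a \emph{group} (so $g\,\symm{\qfin A}_a=\symm{\qfin A}_a$ for $g\in\symm{\qfin A}_a$) and its action is idempotent and completely positive --- exactly the simplifications already flagged at the end of Section~\ref{subsubsect:smc}. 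This makes $\freecat[\ccpms]$ compact closed with $\dual{\qfin A}=\qfin A$, hence symmetric monoidal closed with internal hom $\qfin A\otimes\qfin B$, $\Eval$ and $\Lambda$ as displayed. I would then record that the biproducts of Section~\ref{subsubsect:biprod} are genuine finite (indeed countable) products and coproducts, by checking $\proj i\pc\inj j=\delta_{i,j}\id$ and $\sum_j\inj j\pc\proj j=\id$ componentwise using convergence of the indexed sums; so $\freecat[\ccpms]$ has finite products, with $\otimes$ distributing over them through $\pdistr$.

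The substantive stage is the last one: producing a \emph{free} commutative comonoid for every $\qfin A$. I would first prove that $(\qfin A^{\odot k},\eq{\qfin A^{\odot k}})$ really is an equalizer of the $k!$ symmetries of $\qfin A^{\otimes k}$: the averaging map $\tfrac1{k!}\sum_{\sigma\in\symgroup_k}\sigma$ is an idempotent on $\qfin A^{\otimes k}$, whose splitting is computed by the multiset web $\M_k(\web{\qfin A})$ together with the groups $\symm{\qfin A^{\odot k}}_\mu$, so that any morphism into $\qfin A^{\otimes k}$ invariant under all coordinate permutations factors uniquely through it --- the biproducts and the idempotence and complete positivity of the group actions are what guarantee these equalizers exist. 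With that in hand I would invoke the explicit description of the free exponential as the biproduct of symmetric tensor powers \cite{MelliesTT09,lafont:these}: in a symmetric monoidal category with countable biproducts in which all symmetric tensor powers exist, $\bigoplus_{k\geq 0}\qfin A^{\odot k}$, equipped with $\weak$ and $\contr$ as in Section~\ref{subsubsect:exp}, is the free commutative comonoid on $\qfin A$ with counit $\der$; one then matches the abstract comonad operations against the concrete $\der$, $\dig$, $\bierman^\otimes$, $\bierman^\unit$ given by the multiset formulas. I expect this to be the main obstacle --- not any single deep fact, but the bookkeeping with the nested permutation groups $\symm{\oc\qfin A}_\mu=\{(h_a,g^1_a,\dots,g^{\mu(a)}_a)_{a\in\supp\mu}\}$, namely checking that the $\symgroup_{\mu(a)}$ ``internal'' symmetries make the candidate maps well typed (invariant under both the source and the target groups) and that the equalizer property and the comonoid and comonad laws survive the averaging. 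Once the free comonoid is in place, $\freecat[\ccpms]$ is a Lafont category, hence a model of linear logic \cite{lafont:these,melliespanorama}, and the dcpo-enrichment of the first stage, being compatible with every piece of this structure, yields the dcpo-enriched model claimed.
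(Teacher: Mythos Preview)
Your proposal is correct, but it takes a different route from the paper. The paper's proof is essentially a citation: it observes that $\freecat[\ccpms]$ arises from $\CPMs$ by a three-step categorical construction --- (i) pass to a full subcategory of the Karoubi envelope to obtain symmetric tensor powers, (ii) D-complete the homsets to get dcpo-enrichment, (iii) take the free biproduct completion --- and then invokes \cite{Girard99coherentbanach,MelliesTT09,LairdMM12,LairdMMP13} for the fact that this sequence of constructions, applied to a compact closed category, yields a dcpo-enriched compact closed Lafont category. Your approach instead verifies the Lafont axioms directly against the concrete descriptions of Sections~\ref{subsubsect:biprod}--\ref{subsubsect:exp}: dcpo-enrichment from the D-completion, the coherence and zig-zag equations componentwise in $\CPM$ modulo the group averaging, biproducts by the Kronecker identities, and the free comonoid by exhibiting $\qfin A^{\odot k}$ as a split equalizer and then applying the biproduct-of-symmetric-powers formula. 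Your idempotent-splitting argument for the equalizers is precisely the Karoubi-envelope step (i) of the paper, done by hand; the rest is the content of the cited theorems unfolded in this instance. The paper's approach is shorter and modular; yours is self-contained and makes visible exactly where each hypothesis (complete positivity and idempotence of the group actions, Scott-continuity of the $\CPM$ operations, existence of countable biproducts) is consumed.
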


\begin{proof}[Proof (Sketch)]
  This basically amounts to showing that $\freecat[\ccpms]$ is the
  result of a categorical construction applied to $\CPMs$ which is
  known to give, under certain circumstances, a dcpo-enriched Lafont
  category and to preserve the compact closed structure of
  $\CPMs$. This construction was sketched in
  \cite{Girard99coherentbanach} and detailed
  in~\cite{MelliesTT09,LairdMM12,LairdMMP13}. It consists in moving:
  (i) from $\CPMs$ to a category $\cpms$ with symmetric tensors, which
  is actually a full sub-category of the Karoubi envelope of $\CPMs$;
  (ii) to a dcpo-enriched category $\ccpms$ using the D-completion
  defined in {\cite{ZhaoFan2010,Keimel2009}}; and, finally, (iii)
  constructing the free biproduct completion $\freecat[\ccpms]$ of
  $\ccpms$ and applying Equation~\eqref{eq:bang_biproduct}.
\end{proof}

Given a linking $\ell=\ket{y_1,\dots,y_m}$, we write $\ell\vdash M:A$
for the judgement $y_1:\qubit,\dots,y_m:\qubit\vdash M:A$.

\begin{proposition}[Invariance of the interpretation]
  \label{prop:semantic_invariance}
  Let $\ell$ be the linking $\ket{y_1,\dots,y_m}$, and assume
  $\ell\vdash M:A$. If $M$ is not a value, then for all quantum states
  $\qarray\in{\C^{2^m}}$,
\begin{equation}\label{eq:soundness}
\denot{M}^{\ell\vdash A}(\qarray\qarray^\ast)=\sum_{
   \begin{subarray}{c}
   \am{\qarray,\ell,M}\redto[p]\am{\qarray',\ell',N}
   \end{subarray}
   }p\cdot\denot{N}^{\ell'\vdash A}(\qarray'{\qarray'}^\ast).
\end{equation}
\end{proposition}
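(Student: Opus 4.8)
The plan is to prove Equation~\eqref{eq:soundness} by induction on the structure of the term $M$ (equivalently, on its unique type derivation $\pi$), following exactly the case analysis of the reduction rules in Table~\ref{table:reduction}. The statement is a one-step invariance, so for each syntactic form of $M$ we must check that the denotation of $M$ applied to the input density matrix $\qarray\qarray^\ast$ equals the $p$-weighted sum of the denotations of the one-step reducts. The argument splits naturally into three blocks mirroring Definition~\ref{def:rw}: the purely classical redexes, the quantum redexes ($U$, $\new$, $\meas$), and the congruence (evaluation-context) rules.

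For the classical redexes (the rules in Table~\ref{table:reduction}\subref{subtable:reduction_classical}), each reduction is deterministic ($p=1$), so we must show $\denot{M}^{\ell\vdash A} = \denot{M'}^{\ell\vdash A}$ as morphisms in $\freecat[\ccpms]$. This is precisely soundness of $\beta$-like reduction in a Lafont category: for $(\lambda x^A.M)V$ one unfolds the composite given in Table~\ref{table:denotation_rule}(e) and uses the Substitution Lemma (Lemma~\ref{lemma:substitution}) together with the standard fact that $\cmatrix{\phi}$ followed by $\eval$ recovers $\phi$; for $\lettensterm{x}{y}{V\otimes W}{N}$, $\letunitterm{\punit}{N}$, the two $\mathtt{match}$ cases, $\splitlist V$, and $\mathtt{letrec}$ unfolding, one uses the corresponding equalizer/biproduct/comonad equations (naturality of $\pdistr$, the $\beta$-rules for $\oplus$, the unit laws, and the fixed-point equation $\fixpoint(\phi)=\phi(\dots,\fixpoint(\phi))$). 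A subtle point in the $\mathtt{letrec}$ case is that $\fixpoint$ is defined as a directed least upper bound, so one invokes dcpo-enrichment (Proposition~\ref{prop:lafont}) to commute the composition with the sup. For the congruence rules (Table~\ref{table:reduction}\subref{subtable:reduction_congruence}), the key observation is that a call-by-value evaluation context $C[-]$ is interpreted as post-composition with a fixed morphism; since composition in $\freecat[\ccpms]$ is linear (in the module sense) and Scott-continuous, $\denot{C[M]}(\qarray\qarray^\ast) = \denot{C[-]}\bigl(\denot{M}(\qarray\qarray^\ast)\bigr)$, and the induction hypothesis applied to the sub-closure $\am{\qarray,\ell|_M,M}$ pushes the sum out through $\denot{C[-]}$; the bookkeeping with the disjoint part $\ell_0$ of the linking function is routine because the denotation of variables not occurring in the active subterm factors through the tensor as an identity.

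The quantum cases are where the explicit matrix definitions of Table~\ref{table:denotation_quantum} must be matched against the operational rules of Definition~\ref{def:rw}. For $U(x_1\otimes\dots\otimes x_k)$ one checks that $\denot{U}$, which sends $M$ to $UMU^{-1}$ when the $\oc\Delta$-indices are empty, composed with the interpretation of the variable tuple (a permutation/projection onto the relevant qubits), computes exactly $\qarray'\qarray'^\ast$ where $\qarray' = (\sigma\circ(U\otimes\id)\circ\sigma^{-1})(\qarray)$; this is a direct calculation using that conjugation by $U$ on density matrices is the Kraus form of the superoperator, and that the permutation $\sigma$ is interpreted by conjugation with $P_\sigma$. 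For $\new$, one checks $\denot{\new}$ appends $\ket0\bra0$ or $\ket1\bra1$, matching $\qarray\mapsto\qarray\otimes\ket b$ at the density-matrix level. The genuinely interesting case is $\meas\,x$: here the single reduction of the closure is replaced by \emph{two} probabilistic reducts with weights $\abs\alpha^2,\abs\beta^2$, and one must verify that
\[
\denot{\meas}^{\ell\vdash A}(\qarray\qarray^\ast) = \abs\alpha^2\,\denot{\ffalse\text{-reduct}} + \abs\beta^2\,\denot{\ttrue\text{-reduct}}
\]
by decomposing $\qarray = \alpha\qarray_0+\beta\qarray_1$ as in \eqref{eq:def-op-meas1}, expanding $\qarray\qarray^\ast$, and observing that the off-diagonal cross terms $\alpha\bar\beta\,\qarray_0\qarray_1^\ast$ are annihilated because $\denot{\meas}$ keeps only the diagonal blocks indexed by $\ffalse$ and $\ttrue$ (the ``$0$ otherwise'' clause), while the diagonal terms $\abs\alpha^2\qarray_0\qarray_0^\ast$ and $\abs\beta^2\qarray_1\qarray_1^\ast$ project to $\qarray_0'\qarray_0'^\ast$ and $\qarray_1'\qarray_1'^\ast$ after dropping the measured qubit. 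This vanishing of the cross terms is exactly the denotational counterpart of decoherence, and I expect it to be the main obstacle, not because it is deep but because it requires carefully lining up the tensor-factor bookkeeping of \eqref{eq:def-op-meas1}--\eqref{eq:def-op-meas2} (the positions $\phi_j$, $\psi_j$, and the measured qubit at index $\ell(x)$) with the way $\denot{\qubit^{\otimes m}}$ is built as a single $\C^{2^m\times 2^m}$ block in $\freecat[\ccpms]$, and with how the linking function reindexes variables after the measured qubit is consumed. Once each case is dispatched, the proposition follows by assembling the cases according to which rule of Table~\ref{table:reduction} applies to the (non-value) term $M$, using Type Safety (Proposition~\ref{prop:safety}) to know that at least one rule applies and that the probabilities of the reducts sum to $1$.
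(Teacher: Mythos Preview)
Your proposal is correct and follows essentially the same approach as the paper: a case analysis on the reduction rule applied to $\am{\qarray,\ell,M}$, dispatching the classical redexes via the dcpo-enriched Lafont structure (Proposition~\ref{prop:lafont}), the quantum redexes by direct computation from Table~\ref{table:denotation_quantum}, and the congruence rules by induction on $M$ using linearity of the category. The paper's own proof is only a terse sketch of exactly this outline; the one small point it makes explicit that you leave implicit is the preliminary well-definedness check---invoking Proposition~\ref{prop:subject_reduction} and Lemma~\ref{lemma:totality} to ensure each reduct $\am{\qarray',\ell',N}$ is a typable total closure so that $\denot{N}^{\ell'\vdash A}(\qarray'{\qarray'}^\ast)$ makes sense.
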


\begin{proof}
  By hypothesis, $\am{\qarray,\qlist,M}$ is a typable total closure,
  and so, by Proposition~\ref{prop:subject_reduction} and
  Lemma~\ref{lemma:totality}, all of its reducts
  $\am{\qarray',\qlist',N}$ are typable total closures, so that
  $\denot{N}^{\ell'\vdash A}(\qarray'{\qarray'}^\ast)$ is
  well-defined.

  Equation~\ref{eq:soundness} is proven by cases, depending on the
  rule applied to $\am{\qarray,\ell,M}$. The cases of
  Table~\ref{table:reduction}\subref{subtable:reduction_classical}
  follows from the fact that $\freecat[\ccpms]$ is a dcpo-enriched
  model of linear logic. The quantum rules
  (Table~\ref{table:reduction}\subref{subtable:reduction_quantum}) are
  trivial consequences of Table~\ref{table:denotation_quantum}, and
  the congruence rules of
  Table~\ref{table:reduction}\subref{subtable:reduction_congruence}
  are done by induction on $M$, using the fact that the category
  $\freecat[\ccpms]$ is linear.
\end{proof}

\begin{corollary}\label{cor:invariance_interpretation}
  We have
  $\denot{M}^{\vdash\tunit}_{\ast}\geq\Halt_{\am{\ket{\,},\ket{\,},M}}$.
\end{corollary}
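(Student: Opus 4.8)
The plan is to iterate the invariance equation of Proposition~\ref{prop:semantic_invariance} and then pass to the supremum over the number of reduction steps. The core is the following claim, proved by induction on $n\in\N$: for every typable total closure $\am{\qarray,\qlist,M}$ with $\qlist\vdash M:A$,
\[
\denot{M}^{\qlist\vdash A}(\qarray\qarray^\ast)\ \geq\
\sum_{\am{\qarray',\qlist',V}\in\Val}
\Red^n_{\am{\qarray,\qlist,M},\am{\qarray',\qlist',V}}\cdot
\denot{V}^{\qlist'\vdash A}(\qarray'{\qarray'}^\ast),
\]
where all the quantities live in the dcpo $\pmatr(\denot A)$, so that the (possibly infinite) sum on the right is the directed supremum of its finite partial sums and $\geq$ is the extended L\"owner order. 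For the base case $n=0$ the definition of $\Red^0$ makes the right-hand side equal to $\denot{M}^{\qlist\vdash A}(\qarray\qarray^\ast)$ when $M$ is a value and to $\mathbf 0$ otherwise; in both cases it is bounded by the left-hand side, since $\denot{M}^{\qlist\vdash A}$ is completely positive (Proposition~\ref{prop:concrete_cpms_pi}) and $\qarray\qarray^\ast$ is positive.

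For the induction step, if $M$ is a value both sides coincide (a value has only the $0$-step reduction sequence). If $M$ is not a value, Proposition~\ref{prop:safety} ensures $\am{\qarray,\qlist,M}$ reduces; Proposition~\ref{prop:semantic_invariance} rewrites the left-hand side as the finite sum $\sum_{\am{\qarray,\qlist,M}\redto[p]\am{\qarray',\qlist',N}}p\cdot\denot{N}^{\qlist'\vdash A}(\qarray'{\qarray'}^\ast)$; each reduct $\am{\qarray',\qlist',N}$ is again typable and total by Proposition~\ref{prop:subject_reduction} and Lemma~\ref{lemma:totality}, so the induction hypothesis applies to it; and the identity $\Red^{n+1}_{\am{\qarray,\qlist,M},c}=\sum_{\am{\qarray,\qlist,M}\redto[p]\am{\qarray',\qlist',N}}p\cdot\Red^n_{\am{\qarray',\qlist',N},c}$ (valid since $M$ is not a value) reassembles the bound; reordering the non-negative summands is harmless in the continuous module $\pmatr(\denot A)$.

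To conclude, specialise to a closed term with $\vdash M:\tunit$, taking $\qlist=\ket{\,}$ and $\qarray$ the unit scalar, so that $\qarray\qarray^\ast=1$ and $\denot{M}^{\vdash\tunit}(1)=\denot{M}^{\vdash\tunit}_\ast$. The only value inhabiting $\tunit$ in a context of $\qubit$-variables is the constant $\punit$: by inspection of the value grammar, abstractions, pairs and injections carry non-unit types and a variable would be typed $\qubit$; since $\punit$ is closed, Proposition~\ref{prop:subject_reduction} and Lemma~\ref{lemma:totality} force every value-closure reachable from $\am{\ket{\,},\ket{\,},M}$ to be exactly $\am{\ket{\,},\ket{\,},\punit}$, and $\denot{\punit}^{\vdash\tunit}_\star=1$ (the $\tunit$-introduction rule interprets $\punit$ over the empty context as $\id_{\denot{\tunit}}$). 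Hence the claim collapses to $\denot{M}^{\vdash\tunit}_\ast\geq\Red^n_{\am{\ket{\,},\ket{\,},M},\am{\ket{\,},\ket{\,},\punit}}$ for every $n$, and taking the supremum over $n$ yields $\denot{M}^{\vdash\tunit}_\ast\geq\Red^\infty_{\am{\ket{\,},\ket{\,},M},\am{\ket{\,},\ket{\,},\punit}}=\Halt_{\am{\ket{\,},\ket{\,},M}}$, the last equality because $\Halt$ sums $\Red^\infty$ over all value-closures and only $\am{\ket{\,},\ket{\,},\punit}$ contributes.

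The substantive ingredient, Proposition~\ref{prop:semantic_invariance}, is already in hand, so I expect no real obstacle: the only points requiring care are the bookkeeping that closures stay typable and total along reductions (so every denotation in the inductive claim is defined), the branching introduced by measurement inside the finite sums, and the exchange of the directed, non-negative infinite sums with the supremum over $n$ — all routine given that $\freecat[\ccpms]$ is dcpo-enriched with continuous sums.
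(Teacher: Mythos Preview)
Your proof is correct and follows essentially the same route as the paper: prove by induction on $n$, using Proposition~\ref{prop:semantic_invariance}, that the denotation dominates the $n$-step reduction probability, then take the supremum over $n$. The only difference is that you carry out the induction at an arbitrary type $A$ with the weighted sum $\sum_V \Red^n\cdot\denot{V}(\qarray'{\qarray'}^\ast)$ and then specialise to $\tunit$ (arguing that the only reachable value-closure is $\am{\ket{\,},\ket{\,},\punit}$ with $\denot{\punit}=1$), whereas the paper works directly at type $\tunit$ so that this factor is $1$ from the outset; this is a harmless and mild generalisation of the same argument.
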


\begin{proof}
By induction on $n$ and using Proposition~\ref{eq:soundness} we can show that 
$
\denot{M}^{\ell\vdash\tunit}_\ast(\qarray\qarray^\ast)
$
is greater or equal to 
$
\sum_{\am{\qarray'\!,\ell'\!,V}}\Red^{n}_{\am{\ell,\qarray,M},
\am{\qarray'\!,\ell'\!,V}}$. 
Then $\denot{M}^{\ell\vdash\tunit}_{\ast}(\qarray\qarray^\ast)
\geq\Halt_{\am{\qarray,\ell,M}}$ follows by taking the limit as 
$n\to\infty$,and
invoking the monotonicity of $\{\Red^{n}\}_n$.
\end{proof}

%-----------------------------------------------------------
\subsection{The denotations of {\bf qlist} and {\bf teleport}}
\label{sect:examples}

\begin{example}\label{ex:term_sem}
  Recall the terms of Example~\ref{ex:term_type}. The web of
  $\denot{\tlist\qubit}$ is $\N$, while
  $\denot{\tlist\qubit}_n{=}(2^n,\{\id\})$. Note that
  $\pmatr(\denot{\tlist\qubit})$ is equivalent to 
  the D-completion of $\bigoplus_nP(\C^{2^n\times 2^n})$ where 
  the set
  $P(\C^{2^n\times 2^n})$ is
  the cone of $2^n\times2^n$ positive matrices.  The denotation of the
  term {\bf qlist} is a morphism in
  $\freecat[\ccpms](\qubit,\tlist\qubit)$, that is, a map sending a
  $2\times 2$ positive matrix onto $\bigoplus_nP(\C^{2^n\times 2^n})$.
  The program {\bf qlist} is defined using recursion: its semantics is
  the limit of the morphisms $f_n$ sending
  $(\begin{smallmatrix}a&b\\c&d\end{smallmatrix})$ to the 
  infinite sequence $ ({\bf
    0},\frac12e_1,\ldots\!,\frac1{2^n}e_n,{\bf 0},{\bf 0},\ldots)$ 
  where $e_i$ is the $2^i{\times}2^i$ positive matrix
\begin{center}
\vspace{-1ex}
\scalebox{0.6}{$\begin{pmatrix}
  a&0&\cdots&0&b
  \\
  0&0&\cdots&0&0
  \\[-1ex]
  \vdots&\vdots&\ddots&\vdots&\vdots
  \\[0.4ex]
  0&0&\cdots&0&0
  \\
  c&0&\cdots&0&d
\end{pmatrix}$}.
\end{center}
\vspace{-1ex}
This limit is the map sending 
$(\begin{smallmatrix}a&b\\c&d\end{smallmatrix})$ to the sequence of
infinitely increasing matrices
$
({\bf 0},\frac12e_1,\ldots,\frac1{2^n}e_n,\ldots)
$.
Note that the first element of the sequence is ${\bf 0}$, as the
program {\bf qlist} never return the empty list. Also note that all
the positive matrices in the sequence represent {\em entangled states
  of arbitrary sizes}. Our semantics is the first one to be able to
account for such a case: in~\cite{GoIquantum}, only fixed sizes
were allowed for entangled states.
\end{example}

\begin{example}\label{ex:telep-denot}
  We claim in the introduction that the model is expressive enough to
  describe entanglement at higher-order types. As we discuss in
  Example~\ref{ex:telep-term}, the encoding of the 
  quantum teleportation algorithm produces two entangled, mutually inverse
  functions: $f:\qubit\loli\bit\tensor\bit$ and
  $g:\bit\tensor\bit\loli\qubit$.
  
  The term $({\bf teleport}\,\punit)$ of type $(\qubit \loli
  \bit\otimes\bit)\tensor(\bit\otimes\bit \loli \qubit)$ is one
  instance of such a pair of functions. Its denotation is a finite sequence
  of $16$ square matrices of size $4\times 4$. Using a lexicographic
  convention, we can lay them out as in Fig.~\ref{tab:Axyzt}.
  Because of the convention, morally each row corresponds to an
  element of type $\bit\otimes\bit \loli \qubit$ whereas each column
  corresponds to an element of type $\qubit\loli
  \bit\otimes\bit$. Picking a row, i.e., a choice of two left-sided
  booleans, amounts to choosing the two booleans that will be passed to
  the function $g$. Picking a column, i.e., a choice of two right-sided
  booleans, amounts to deciding on the probabilistic result we get
  from the function $f$.  The intersection of a column and a row is
  therefore the representation of a map $\qubit\loli\qubit$. This map
  is a description of a possible path in the control flow of the
  algorithm. 
  
  The matrices on the diagonal correspond to a run of the algorithm
  as it was intended: applying $g$ to the result of $f$. Since they
  are supposed to be the identity on $\qubit$, we can therefore deduce
  that the matrices $A_{00,00}$, $A_{01,01}$, $A_{10,10}$ and
  $A_{11,11}$ are all equal to $\left(\begin{smallmatrix}
      1&0&0&1\\
      0&0&0&0\\
      0&0&0&0\\
      1&0&0&1
  \end{smallmatrix}\right).
  $
  Since this matrix cannot be written as the tensor of two $2\times2$
  matrices, we conclude that the denotation $A$ of $({\bf
    teleport}\,\punit)$ is indeed entangled.
  
  We can compute the other matrices $A_{xy,zt}$ using the same
  argument: in general, $A_{xy,zt}$ is a composition of $f$ and $g$,
  except that instead of applying $g$ to $(x,y)$, we apply it to 
  $(z,t)$. We therefore get a function $\qubit\to\qubit$ constructed
  out of the $U_{--}$ that might (if $xy=zt$) or might not be the
  identity. In general, the matrix $A_{xy,zt}$ is the denotation of
  the unitary $U_{zt}U_{xy}^*$. The denotation $A$ is given in full
  detail in Table~\ref{tab:Axyzt}.
\end{example}

\begin{remark}
  Example~\ref{ex:telep-denot} is a good illustration of what we
  claimed in the introduction: the model  reflects the
  juxtaposition of quantum and classical structures, even at
  higher-order types. Here, the control-flow is handled by
  the biproduct structure, and the quantum part of the algorithm is
  split across the list of $4{\times}4$~matrices.
\end{remark}

\begin{table*}
  \[
  \begin{array}{
      r@{{}={}}c@{\quad}c@{{}={}}c@{\quad}c@{{}={}}c@{\quad}c@{{}={}}cc}
    A \quad=\quad\frac14\bigg(\quad
    A_{00,00}&
    \left(\begin{smallmatrix}
        1&0&0&1\\
        0&0&0&0\\
        0&0&0&0\\
        1&0&0&1
      \end{smallmatrix}\right),
    &
    A_{00,01}&
    \left(\begin{smallmatrix}
        0&0&0&0\\
        0&1&1&0\\
        0&1&1&0\\
        0&0&0&0
      \end{smallmatrix}\right),
    &
    A_{00,10}&
    \left(\begin{smallmatrix}
        1&0&0&\textrm{-}1\\
        0&0&0&0\\
        0&0&0&0\\
        \textrm{-}1&0&0&1
      \end{smallmatrix}\right),
    &
    A_{00,11}&
    \left(\begin{smallmatrix}
        0&0&0&0\\
        0&1&\textrm{-}1&0\\
        0&\textrm{-}1&1&0\\
        0&0&0&0
      \end{smallmatrix}\right),
    \\
    A_{01,00}&
    \left(\begin{smallmatrix}
        0&0&0&0\\
        0&1&1&0\\
        0&1&1&0\\
        0&0&0&0
      \end{smallmatrix}\right),
    &
    A_{01,01}&
    \left(\begin{smallmatrix}
        1&0&0&1\\
        0&0&0&0\\
        0&0&0&0\\
        1&0&0&1
      \end{smallmatrix}\right),
    &
    A_{01,10}&
    \left(\begin{smallmatrix}
        0&0&0&0\\
        0&1&\textrm{-}1&0\\
        0&\textrm{-}1&1&0\\
        0&0&0&0
      \end{smallmatrix}\right),
    &
    A_{01,11}&
    \left(\begin{smallmatrix}
        1&0&0&\textrm{-}1\\
        0&0&0&0\\
        0&0&0&0\\
        \textrm{-}1&0&0&1
      \end{smallmatrix}\right),
    \\
    A_{10,00}&
    \left(\begin{smallmatrix}
        1&0&0&\textrm{-}1\\
        0&0&0&0\\
        0&0&0&0\\
        \textrm{-}1&0&0&1
      \end{smallmatrix}\right),
    &
    A_{10,01}&
    \left(\begin{smallmatrix}
        0&0&0&0\\
        0&1&\textrm{-}1&0\\
        0&\textrm{-}1&1&0\\
        0&0&0&0
      \end{smallmatrix}\right),
    &
    A_{10,10}&
    \left(\begin{smallmatrix}
        1&0&0&1\\
        0&0&0&0\\
        0&0&0&0\\
        1&0&0&1
      \end{smallmatrix}\right),
    &
    A_{10,11}&
    \left(\begin{smallmatrix}
        0&0&0&0\\
        0&1&1&0\\
        0&1&1&0\\
        0&0&0&0
      \end{smallmatrix}\right),
    \\
    A_{11,00}&
    \left(\begin{smallmatrix}
        0&0&0&0\\
        0&1&\textrm{-}1&0\\
        0&\textrm{-}1&1&0\\
        0&0&0&0
      \end{smallmatrix}\right),
    &
    A_{11,01}&
    \left(\begin{smallmatrix}
        1&0&0&\textrm{-}1\\
        0&0&0&0\\
        0&0&0&0\\
        \textrm{-}1&0&0&1
      \end{smallmatrix}\right),
    &
    A_{11,10}&
    \left(\begin{smallmatrix}
        0&0&0&0\\
        0&1&1&0\\
        0&1&1&0\\
        0&0&0&0
      \end{smallmatrix}\right),
    &
    A_{11,11}&
    \left(\begin{smallmatrix}
        1&0&0&1\\
        0&0&0&0\\
        0&0&0&0\\
        1&0&0&1
      \end{smallmatrix}\right)&\quad\bigg).
  \end{array}
  \]
  \caption{\footnotesize The denotation of the quantum
    teleportation algorithm.}
  \label{tab:Axyzt}
\end{table*}

%%%%%%%%%%%%%%%%%%%%%%%%%%%%%%%%%%%%%%%%%%%%%%%%%%%%%%%%%%%%
%%%%%%%%%%%%%%%%%%%%%%%%%%%%%%%%%%%%%%%%%%%%%%%%%%%%%%%%%%%%
%%%%%%%%%%%%%%%%%%%%%%%%%%%%%%%%%%%%%%%%%%%%%%%%%%%%%%%%%%%%
\section{Adequacy}\label{sec:adequacy}

In the following, we prove the adequacy of $\freecat[\ccpms]$
(Theorem~\ref{th:adequacy}). This amounts to achieving the converse
inequality of Corollary~\ref{cor:invariance_interpretation}. The proof
uses a syntactic approach, following~\cite{GoIquantum}. We introduce a
bounded {\tt letrec}$^n$, which can be unfolded at most $n$ times. On
the one hand, the language allowing only bounded {\tt letrec} is
strongly normalizing (Lemma~\ref{lem:SN-ext}), hence the adequacy for
it can be easily achieved by induction on the longest reduction
sequence of a term (Corollary~\ref{cor:finit_adequacy}). On the other
hand, the unbounded {\tt letrec} can be expressed as the supremum of
its bounded approximants, both semantically
(Lemma~\ref{lem:finitary0}) and syntactically
(Lemma~\ref{lem:finitary2}). We then conclude the adequacy for the
whole quantum lambda calculus by continuity.

\begin{definition}\label{def:letrecn}
  Let us extend the grammar of terms (Table~\ref{table:terms_grammar})
  by adding: (i) a new term $\Omega^A$; (ii) a family of new term
  constructs $\letrecn{n}{f^{A\loli B}}{x}{M}{N}$ indexed by natural
  numbers $n\geq 0$.
    
  The typing rules for these new constructs are
  \[
  \infer{
    \bang\Delta\entail\Omega^A:A
  }{}
  \qquad
  \infer{
    \bang\Delta,\Gamma\entail\letrecn{n}{f^{A\loli B}}{x}{M}{N}:C
  }{
    \begin{array}{l}
    \bang\Delta,f:\bang{(A\loli B)},x:A
    \entail
    M:B
    \\
    \bang\Delta,\Gamma,f:\bang{(A\loli B)}
    \entail
    N:C
    \end{array}
  }
  \]
  Their denotations are given, respectively, by the map $\bf 0$ and
  the family of maps
  \[
  \oc\Delta\tensor\Gamma
  \xrightarrow{\contr}
  \oc\Delta\tensor\Gamma\tensor\oc\Delta
  \xrightarrow{\!\!\id\tensor(\dig;\bierman;\oc{(\Lambda\phi)})^n\!\!}
  \oc\Delta\tensor\Gamma\tensor\oc{(A\loli B)}
  \xrightarrow{\psi}
  C,
  \]
  where $\phi\in\freecat[\ccpms](\oc\Delta\otimes\oc(A\multimap
  B)\otimes A,B)$ and
  $\psi\in\freecat[\ccpms](\oc\Delta\otimes\Gamma\otimes\oc(A\multimap
  B),C)$ are the denotations of the premises and
  $(\dig;\bierman;\oc{(\Lambda\phi)})^n
  \in\freecat[\ccpms](\oc\Delta,\oc(A\multimap
  B))$ is defined in a similar fashion as in
  Table~\ref{table:denotation_rule}.
  
  The reduction rules are updated as follows.
  \[
  \begin{array}{@{}l@{}}
    \am{q,\ell,\letrecn{0}{f^{A\loli B}}{x}{M}{N}} \xredto[1]
    \am{q,\ell,N\{(\lambda x^A.\Omega^B)/f\}}
    \\[1ex]
    \am{q,\ell,\letrecn{n+1}{f^{A\loli B}}{x}{M}{N}} \\
    ~\qquad~\xredto[1]
    \am{q,\ell,N\{(\lambda x^A.\letrecn{n}{f^{A\loli B}}{x}{M}{M})/f\}}.
  \end{array}
  \]
\end{definition}

The additions to the language do not modify the properties of the
language: subject reduction (Proposition~\ref{prop:subject_reduction})
and totality (Lemma~\ref{lemma:totality}) hold as they are stated,
while type safety (Proposition~\ref{prop:safety}) and soundness
(Proposition~\ref{prop:semantic_invariance}) are satisfied, with the
proviso of considering the set of normal forms to consist of the set
of values \emph{and} the set of terms containing $\Omega$ in
evaluating position.
\begin{definition}\label{def:finitary}
  A term is called \define{finitary} when it does not contain any
  occurrence of the un-indexed {\tt letrec} construct. It can however
  contain $\Omega$ and any of the indexed {\tt letrec}${}^n$. We call a
  closure \define{finitary} when its term is finitary.
\end{definition}

\begin{lemma}[Strong normalization]\label{lem:SN-ext}
  If $\am{\qarray_1,\qlist_1,M_1}$ is finitary and typable, then every
  reduction sequence of the form
  $
  \am{\qarray_1,\qlist_1,M_1}\xredto[p_1]
  \am{\qarray_2,\qlist_2,M_2}\xredto[p_2]
  \am{\qarray_3,\qlist_3,M_3}\xredto[p_3]
  \cdots
  $
  is finite.
\end{lemma}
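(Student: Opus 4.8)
The plan is to define a suitable reduction-size measure on finitary closures that strictly decreases along every reduction step, and then argue that it is well-founded. The key difficulty is that ordinary term size does not decrease: the classical reduction rules in Table~\ref{table:reduction}\subref{subtable:reduction_classical} copy values (e.g.\ substitution $M\{V/x\}$), and the congruence rules reduce subterms that may themselves expand. So I would instead use a lexicographic combination of two quantities. The first and dominant component counts the total number of ${\tt letrec}^n$-unfoldings that can still occur, weighted by the indices $n$; the second component is a more refined size measure handling the non-\texttt{letrec} reductions.

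First I would make precise how $\Omega^A$ and the bounded ${\tt letrec}^n$ behave. Unfolding $\letrecn{n+1}{f}{x}{M}{N}$ replaces $f$ inside $N$ by $\lambda x^A.\letrecn{n}{f}{x}{M}{M}$; each subsequent call of that abstraction triggers a $\letrecn{n}{\cdots}$ reduction with strictly smaller index, and $\letrecn{0}{\cdots}$ produces $\lambda x^A.\Omega^B$, which can no longer generate recursive calls. Hence to each finitary term $M$ I would assign an ordinal (in fact a natural number suffices, but an ordinal bound makes the bookkeeping cleaner) $r(M)$ that over-approximates the number of future ${\tt letrec}^n$-unfoldings: define $r$ by structural recursion with $r(\Omega^A)=0$, $r(\letrecn{n}{f}{x}{M}{N})$ bounded in terms of $n$, $r(M)$, $r(N)$ (so that one unfolding of a ${\tt letrec}^{n+1}$ strictly decreases $r$ even after the duplication of $M$ into the body), and $r$ additive-ish over the other constructors, with the crucial property that $r(M\{V/x\})\le r(M)+(\text{number of free occurrences of }x)\cdot r(V)$ and, since $V$ is a value, $r(V)=0$ whenever $V$ contains no ${\tt letrec}^n$ — but values \emph{can} contain ${\tt letrec}^n$ inside a $\lambda$-body, so I would instead arrange that abstraction does \emph{not} charge for the body in a way that can be duplicated: put $r(\lambda x^A.M) = 0$ and account for the body only when the abstraction is applied. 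This is exactly the standard trick for reducibility/termination arguments with call-by-value recursion.

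Then I would show $r$ is non-increasing under every reduction rule, and \emph{strictly} decreasing under the two ${\tt letrec}^n$-unfolding rules and the $\letrecn{0}$ rule; for all the purely classical rules (beta, let-tensor, let-unit, match, split) and the quantum rules (which do not touch terms structurally except shrinking them) and the congruence rules, $r$ stays the same or decreases. For the congruence case I would note that $r$ of the whole closure is, by construction, monotone in $r$ of the active subterm, so a non-increase there propagates. Consequently, in any infinite reduction sequence $r$ is eventually constant, and from that point on only non-\texttt{letrec} steps occur. Finally I would invoke strong normalization of the $\Omega$-and-values-extended \emph{linear-logic--like} calculus without recursion: once no ${\tt letrec}^n$ can fire, the remaining system is essentially simply-typed $\lambda$-calculus with products, coproducts, lists viewed as finite data, and the quantum constants, each reduction of which strictly decreases ordinary term size (the quantum rules replace $U(x_1\otimes\cdots\otimes x_k)$, $\new$, $\meas$ by strictly smaller terms, and the classical rules decrease a standard multiset/size measure on simply-typed terms). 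A second, ordinary size measure $s(M)$ that strictly decreases on every non-\texttt{letrec} step then finishes the argument: the pair $(r(M), s(M))$ decreases lexicographically at every step, so no infinite reduction sequence exists.

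The main obstacle I expect is the definition of $r$ handling the value-duplication in the substitution rules without $r$ ever increasing: abstraction bodies must be ``invisible'' to $r$ until applied, yet a ${\tt letrec}^{n+1}$-unfolding must strictly decrease $r$ even though it both duplicates $M$ (as the new recursive body) and substitutes a $\lambda$-abstraction into $N$. I would resolve this by charging $r(\letrecn{n}{f}{x}{M}{N})$ something like $(n{+}1)\cdot(1+r(M)) + r(N)$ while keeping $r(\lambda\_.\_)=0$, so the post-reduction term $N\{(\lambda x.\letrecn{n}{f}{x}{M}{M})/f\}$ has $r$-value $r(N)$ (the substituted abstraction contributes $0$), which is strictly below $r$ of the redex; one checks the $\letrecn{0}$ rule similarly ($r$ drops to $r(N)$) and all remaining rules by the routine case analysis sketched above. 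Once $r$ is correctly defined the rest is bookkeeping, and the strong normalization of the recursion-free fragment is standard.
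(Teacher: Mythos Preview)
Your lexicographic measure has a genuine gap at the $\beta$-step. With $r(\lambda x^A.M)=0$ and $r$ additive on application, the redex $(\lambda x^A.M)\,V$ has $r$-value $r(V)$, while its contractum $M\{V/x\}$ has $r$-value at least $r(M)$; if $M$ contains a $\texttt{letrec}^n$ then $r(M)>0$ even when $r(V)=0$, so $r$ strictly \emph{increases}. Concretely, take $M=\letrecn{0}{g^{\tunit\loli\tunit}}{y}{y}{\punit}$ and $V=\lambda w^\tunit.w$, with $x:\oc(\tunit\loli\tunit)$ weakened in $M$: the redex has $r=0$ and the contractum has $r=1$. Hence ``from that point on only non-\texttt{letrec} steps occur'' is false---a bounded recursion hidden under a $\lambda$ resurfaces precisely when that $\lambda$ is applied. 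The tension you already identify (either $r$ looks inside abstractions, and then the $\texttt{letrec}^{n+1}$ unfolding may raise $r$; or it does not, and then $\beta$ raises $r$) is not resolvable by a first-order syntactic weight, because values of $\oc$-type are freely duplicable under substitution.

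The same obstruction breaks your second component $s$: there is no ``standard multiset/size measure'' that strictly decreases under $\beta$ in the presence of duplicable variables---for instance $(\lambda f^{\oc(A\loli A)}.f(f\,z))\,V\redto V(V\,z)$ is larger whenever $V$ is non-trivial. Strong normalization of even the recursion-free simply-typed fragment already needs a logical-relations (Tait reducibility) argument, not an elementary size bound. The paper accordingly does not attempt a direct measure: it erases the quantum state, replacing closures by their underlying terms and the quantum rules of Table~\ref{table:reduction}\subref{subtable:reduction_quantum} by dummy non-deterministic rules on a distinguished free variable~$\bullet$, obtaining a simply-typed non-deterministic calculus (still with the bounded $\texttt{letrec}^n$), and then appeals to the standard reducibility-based SN proof for that calculus. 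Your scheme could be salvaged only by replacing the pair $(r,s)$ with a type-indexed reducibility predicate, at which point it coincides with the paper's route.
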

\begin{proof}[Proof (Sketch)]
  We reduce the finitary quantum lambda calculus to a simply typed
  non-deterministic language without quantum states, for which a
  standard proof technique can be used. The terms of this language are
  the terms of the extended quantum lambda calculus, minus the {\tt
    letrec} construct. The operational semantics is obtained from
  Table~\ref{table:reduction} and the rules for {\tt letrec}$^n$ by
  replacing closures with the respective terms and the rules of
  Table~\ref{subtable:reduction_quantum} by dummy reduction rules:
  like $ U(\bullet\otimes\dots\otimes\bullet) \redto
  \bullet\otimes\dots\otimes\bullet$, or $ \new~\ffalse \redto
  \bullet$. The symbol $\bullet$ denotes a distinct term variable,
  which, by convention, it is never bound by an abstraction. Clearly,
  the strong normalization of this language implies that of the
  finitary quantum lambda calculus.
\end{proof}

\begin{corollary}[Finitary adequacy]\label{cor:finit_adequacy}
  Let $M$ be a closed finitary term of unit type. Then $
  \denot{M}^{\vdash\tunit}_\ast=\Halt_{\am{\ket{\,},\ket{\,},M}}.  $
\end{corollary}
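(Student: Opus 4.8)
The plan is to prove, by well-founded induction on the closure, the stronger statement
\[
\denot{M}^{\ell\vdash\tunit}_\ast(\qarray\qarray^\ast)=\Halt_{\am{\qarray,\ell,M}}
\]
for \emph{every} total, typable, finitary closure $\am{\qarray,\ell,M}$ with $\ell\vdash M:\tunit$; the corollary is the instance $\ell=\ket{\,}$, $\qarray=1$, for which $\qarray\qarray^\ast=1$ and the left-hand side is exactly $\denot{M}^{\vdash\tunit}_\ast$. One inequality, $\geq$, is already Corollary~\ref{cor:invariance_interpretation}, but the induction below delivers equality uniformly. Well-founded induction is legitimate because the reduction relation restricted to finitary typable closures is strongly normalizing by Lemma~\ref{lem:SN-ext}; moreover, along any reduction step the class of total, typable, finitary closures of type $\tunit$ is preserved --- by Lemma~\ref{lemma:totality} (totality), Proposition~\ref{prop:subject_reduction} (typability and preservation of the type $\tunit$), and the fact that the new reduction rules only ever produce $\mathtt{letrec}^{n}$ from $\mathtt{letrec}^{n+1}$, never the un-indexed $\mathtt{letrec}$. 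Hence the induction hypothesis is available at each reduct.

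For the base cases, recall that by type safety in the extended form noted after Definition~\ref{def:letrecn}, $\am{\qarray,\ell,M}$ is either a value, or contains $\Omega$ in evaluating position, or reduces. If $M$ is a value of type $\tunit$, then $M=\punit$ (the unique value of that type, so $M$ is closed), $\ell$ is empty, $\qarray=1$, and both sides are $1$. If $M=E[\Omega^A]$ for an evaluation context $E$, a short sub-induction on $E$ shows $\denot{M}^{\ell\vdash\tunit}=\mathbf 0$: each clause of Table~\ref{table:denotation_rule} corresponding to a congruence rule factors $\denot{E[-]}$ through the denotation of the term in the hole, $\denot{\Omega^A}=\mathbf 0$, and tensoring or composing with $\mathbf 0$ gives $\mathbf 0$ in the module-enriched category $\freecat[\ccpms]$. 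Since such an $M$ is not a value and does not reduce, $\Halt_{\am{\qarray,\ell,M}}=0$ as well, so both sides vanish.

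For the inductive step, $M$ is not a normal form, so soundness (Proposition~\ref{prop:semantic_invariance}, extended form) gives
\[
\denot{M}^{\ell\vdash\tunit}_\ast(\qarray\qarray^\ast)
=\sum_{\am{\qarray,\ell,M}\redto[p]\am{\qarray',\ell',N}}
p\cdot\denot{N}^{\ell'\vdash\tunit}_\ast(\qarray'{\qarray'}^\ast).
\]
Each reduct $\am{\qarray',\ell',N}$ is again a total, typable, finitary closure of type $\tunit$, so by the induction hypothesis its denotation at $\qarray'{\qarray'}^\ast$ equals $\Halt_{\am{\qarray',\ell',N}}$. It then remains to note that, for a non-value $M$, $\Halt_{\am{\qarray,\ell,M}}=\sum_{\am{\qarray,\ell,M}\redto[p]\am{\qarray',\ell',N}}p\cdot\Halt_{\am{\qarray',\ell',N}}$, which follows directly from the definitions of $\Red^\infty$ and $\Halt$, using that reduction is finitely branching and that all quantities are non-negative, so the relevant sums and suprema commute. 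Combining these equalities closes the induction.

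The substantive content --- strong normalization of the finitary fragment --- has already been dispatched in Lemma~\ref{lem:SN-ext}, so what remains is largely bookkeeping; the only step needing real (if small) care is the $\Omega$-stuck base case, where one must check that every evaluation context annihilates $\mathbf 0$, which is precisely where the dcpo- (indeed $\Rp$-module-) enrichment of the homsets of $\freecat[\ccpms]$ enters.
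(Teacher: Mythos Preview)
Your proof is correct and follows essentially the same strategy as the paper: strengthen to arbitrary total finitary closures of type $\tunit$, use strong normalization (Lemma~\ref{lem:SN-ext}) to justify an induction along reduction, and invoke the (extended) soundness Proposition~\ref{prop:semantic_invariance} in the step. The paper phrases the induction as ``induction on $m$'' where $m$ bounds the length of reduction sequences, while you use well-founded induction on the reduction relation directly and spell out the two base cases (values and $\Omega$-stuck terms) that the paper leaves implicit; these are equivalent organisations of the same argument.
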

\begin{proof}[Proof (Sketch)]
  We prove that, for any total finitary quantum closure of unit type
  $\am{\qarray,\qlist,M}$ we have $\denot{M}^{\qlist\vdash
    1}(\qarray\qarray^\ast)=\Halt_{\am{\qarray,\qlist,M}}$. In fact,
  by Lemma~\ref{lem:SN-ext}, there exists $m\in\N$ such that
  $\Halt_{\am{\qarray,\qlist,M}}= \sum_{\am{\qarray',\qlist',V}}
  \Red^m_{\am{\qarray,\!\qlist,\!M},\am{\qarray'\!,\!\qlist'\!,\!V}}$. We
  conclude by induction on $m$.
\end{proof}

\begin{definition}\label{def:approx}
  Let $\apprv$ be a relation between finitary terms and general terms
  defined as the smallest congruence relation on terms satisfying, for
  every $M\apprv M'$ and $N\apprv N'$:
  \begin{align*}
  N\{(\lambda x^A.\Omega^B)/f\}&\apprv(\letrec{f}{x}{M'}{N'}),\\
  (\letrecn{n}{f}{x}{M}{N})&\apprv(\letrec{f}{x}{M'}{N'}).
  \end{align*}
\end{definition}
\begin{lemma}\label{lem:finitary0}
If $\Gamma\vdash M:A$, then $
	\denot{M}^{\Gamma\vdash A}\!\!\!\!=\dirsup_{\!\!
		\substack{
		M'\apprv M\\
		M'\text{ finitary}\!\!\!\!\!
		}
	}\denot{M'}^{\Gamma\vdash A}$.\qed
\end{lemma}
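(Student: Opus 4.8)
The plan is to prove the equality by structural induction on the derivation of $\Gamma\vdash M:A$, exploiting two facts: every categorical combinator occurring in the interpretation of the term constructors of Table~\ref{table:denotation_rule} (composition, $\tensor$, the biproduct (co)tuplings, $\pdistr$, currying $\cmatrix{-}$, $\der$, $\dig$, $\bierman$, $\weak$, $\contr$) is Scott-continuous, and the fixed-point operator is \emph{by definition} a directed supremum, $\fixpoint(\xi)=\dirsup_m\xi^m$. For a term $M$, write $M^{(n)}$ for the finitary term obtained by replacing every occurrence of the un-indexed $\mathtt{letrec}$ in $M$ --- recursively, inside both bodies and continuations --- by $\mathtt{letrec}^n$ with the same index $n$. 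By Definition~\ref{def:approx} one has $M^{(n)}\apprv M$, and conversely every finitary $M'\apprv M$ is obtained from $M$ by replacing each un-indexed $\mathtt{letrec}$ either by some $\mathtt{letrec}^k$ or by a term of the form $N'\{(\lambda x^A.\Omega^B)/f\}$, the inner data being themselves finitary approximants.

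First I would show, by induction on $M$, that $\{\denot{M^{(n)}}\}_{n\in\N}$ is a chain cofinal in $\{\denot{M'}\mid M'\apprv M,\ M'\text{ finitary}\}$: namely, for every finitary $M'\apprv M$, if $n$ bounds all indices occurring in $M'$, then $\denot{M'}\leq\denot{M^{(n)}}$. The constructor cases follow from monotonicity of the combinators. For $M=\letrec{f^{A\loli B}}{x}{P}{N}$, the sub-case $M'=\letrecn{k}{f^{A\loli B}}{x}{P'}{N'}$ with $k\leq n$ uses monotonicity of $\xi\mapsto\xi^k$ in $k$ together with the induction hypotheses on $P$ and $N$; the sub-case $M'=N'\{(\lambda x^A.\Omega^B)/f\}$ uses compositionality of $\denot{-}$ under substitution, the identity $\denot{\lambda x^A.\Omega^B}=\mathbf 0$, and the fact that $\mathbf 0$ is the least morphism, so that $\denot{M'}$ is below the denotation of $N'$ with $f$ fed any morphism, in particular below $\denot{\letrecn{n}{f^{A\loli B}}{x}{P^{(n)}}{N^{(n)}}}=\denot{M^{(n)}}$. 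This step simultaneously shows that the supremum on the right-hand side of the lemma is a genuine directed supremum and that it equals $\dirsup_n\denot{M^{(n)}}$.

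Then I would prove $\dirsup_n\denot{M^{(n)}}=\denot{M}$, again by induction on $M$. If $M$ contains no un-indexed $\mathtt{letrec}$, then $M^{(n)}=M$ and there is nothing to do. For the constructors other than $\mathtt{letrec}$, $\denot{M}$ is obtained from the interpretations of the immediate subterms by tensoring with identities and pre-/post-composing with fixed morphisms, and $M^{(n)}$ is built componentwise; Scott-continuity of $\tensor$ and composition then lets the supremum commute inward, and the claim follows from the induction hypotheses. The one substantial case is $M=\letrec{f^{A\loli B}}{x}{P}{N}$, where $\denot{M}=\contr\pc(\id\tensor\fixpoint(\dig\pc\bierman\pc\oc(\cmatrix{\phi})))\pc\psi$ with $\phi,\psi$ interpreting the two premises, and $\denot{M^{(n)}}=\contr\pc(\id\tensor(\dig\pc\bierman\pc\oc(\cmatrix{\phi_n}))^n)\pc\psi_n$ with $\phi_n,\psi_n$ the corresponding data for the $n$-th approximants of the premises. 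Since $\xi\mapsto\dig\pc\bierman\pc\oc(\cmatrix{\xi})$ and all its finite iterates are Scott-continuous, $m\mapsto\xi^m$ is monotone, and $\dirsup_n\phi_n=\phi$, $\dirsup_n\psi_n=\psi$ by the induction hypotheses, the three occurrences of $n$ in $\denot{M^{(n)}}$ form a jointly monotone triple-indexed net; taking its supremum along the diagonal (which equals the iterated supremum) yields $\dirsup_n\denot{M^{(n)}}=\contr\pc(\id\tensor\dirsup_m(\dig\pc\bierman\pc\oc(\cmatrix{\phi}))^m)\pc\psi=\denot{M}$ by the definition of $\fixpoint$.

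Combining the two inductions gives the lemma. The step I expect to be the main obstacle is precisely this cofinality/diagonalisation package: one must check that a single index $n$ can simultaneously bound all the (possibly deeply nested) fixed-point unfoldings of an arbitrary finitary approximant --- which is exactly why the $\Omega$-substitution sub-case, and the appeal to compositionality of $\denot{-}$ under substitution, are unavoidable --- and that the interpretation of $\mathtt{letrec}$ is \emph{separately} Scott-continuous in each of its fixed-point parameters, so that the several copies of $n$ can be pushed to infinity independently and then recombined on the diagonal. Everything else is routine continuity bookkeeping, supported by the dcpo-enriched compact closed Lafont structure of $\freecat[\ccpms]$ established in Proposition~\ref{prop:lafont}.
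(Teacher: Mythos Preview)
The paper gives no proof of this lemma at all --- it is stated with a bare $\qed$. Your proposal is correct and is exactly the argument one expects the authors to have had in mind: structural induction on the typing derivation, using the uniform approximants $M^{(n)}$ as a cofinal chain, Scott-continuity of every combinator appearing in Table~\ref{table:denotation_rule} (guaranteed by the dcpo-enrichment of Proposition~\ref{prop:lafont}), and a diagonalisation to handle the three simultaneous occurrences of $n$ in the $\mathtt{letrec}$ case. Your identification of the two delicate points --- the $\Omega$-substitution sub-case of cofinality (which indeed needs the semantic substitution lemma implicit in the proof of Proposition~\ref{prop:semantic_invariance}) and the separate continuity/monotonicity needed for the diagonal argument --- is accurate, and both go through as you describe.
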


\begin{lemma}\label{lem:finitary2}
  If $M\apprv M'$, then
  $\Halt_{\am{\qarray,\qlist,M}}\leq\Halt_{\am{\qarray,\qlist,M'}}$.
\end{lemma}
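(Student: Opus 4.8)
The plan is to read off the inequality from a \emph{probabilistic simulation} of the approximant $M$ by the approximated term $M'$ along the reduction relation, and then to sum over reduction paths.

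\textbf{Preliminary facts about $\apprv$.} First I would record two routine properties. (1) \emph{Substitutivity}: if $M\apprv M'$ and $V\apprv V'$ then $M\{V/x\}\apprv M'\{V'/x\}$, proved by induction on the derivation of $M\apprv M'$, checking that substitution commutes with the two generating clauses of Definition~\ref{def:approx} and with the congruence rules. (2) $\Omega^A\apprv P$ for every term $P:A$ of the appropriate type (the least-element clause implicit in Definition~\ref{def:approx}: $\Omega$ approximates anything). Combining (1) and (2) gives the bookkeeping identities driving the whole argument: whenever $P\apprv P'$, one has $\letrecn{n}{f}{x}{P}{P}\apprv\letrec{f}{x}{P'}{P'}$ and $\Omega^B\apprv\letrec{f}{x}{P'}{P'}$, hence $(\lambda x^A.\letrecn{n}{f}{x}{P}{P})\apprv(\lambda x^A.\letrec{f}{x}{P'}{P'})$ and $(\lambda x^A.\Omega^B)\apprv(\lambda x^A.\letrec{f}{x}{P'}{P'})$; so an unfolding of an un-indexed $\mathtt{letrec}$ on the right is matched either by the analogous unfolding of $\mathtt{letrec}^n$ (or $\mathtt{letrec}^0$) on the left, or by \emph{no left step at all} when the left term has already collapsed that recursion to an $\Omega$, and in both cases $\apprv$ is preserved.

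\textbf{Value forcing and the Simulation Lemma.} I would next prove the \emph{value-forcing} fact: if $V$ is a finitary value and $V\apprv T$, then $\am{\qarray,\qlist,T}$ reduces by a finite sequence of probability-$1$ steps, none of them a $\meas$-step, to a value closure $\am{\qarray,\qlist,V'}$ with $V\apprv V'$. This is a structural induction on $V$, using that $V\apprv T$ with $V$ a value forces $T$ either to have the same outermost constructor as $V$ (recurse into subterms via the congruence rules for $\tensor,\injl,\injr,\dots$), or to be $\letrec{f}{z}{P'}{N'}$ with $V=N\{(\lambda z.\Omega)/f\}$ and $N\apprv N'$, in which case one unfolds the $\mathtt{letrec}$, observes $V\apprv N'\{(\lambda z.\letrec{f}{z}{P'}{P'})/f\}$ by (1) and (2), and finishes because the leading $\mathtt{letrec}$-chain of $T$ is finite. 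The core step is the \emph{Simulation Lemma}: if $M\apprv M'$ and $\am{\qarray,\qlist,M}\redto[p]\am{\qarray_1,\qlist_1,N}$, then there is a sequence $\am{\qarray,\qlist,M'}\redto[p_1]\cdots\redto[p_k]\am{\qarray_1,\qlist_1,N'}$ with $k\geq 0$, $\prod_i p_i\geq p$, $N\apprv N'$, and none of the $\redto[p_i]$ a measurement except the single step mirroring $M$'s step (so if $p=1$ all $p_i=1$). The proof writes $M=E[\Delta]$ with $E$ a call-by-value evaluation context and $\Delta$ the fired redex; $\apprv$ preserves the shapes of $E$ and $\Delta$ up to replacing $\mathtt{letrec}$'s by $\mathtt{letrec}^n$'s or $\Omega$-collapses and up to extra $\mathtt{letrec}$-wrappers on the evaluation path. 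To simulate, $M'$ first does the finitely many probability-$1$ unfoldings removing those wrappers and, by value forcing, reduces to values the subterms in the value-slots of $E$ (these are values in $M$, since $\Delta$ is active there), reaching $E''[\Delta'']$ with $E\apprv E''$ and $\Delta''$ a redex of the same kind as $\Delta$ — a $\meas$-redex on the same qubit with the same outcome-probability $p$ in the measurement case, since the auxiliary steps leave the quantum store and the linking of qubit-variables untouched, and a probability-$1$ redex otherwise. Firing $\Delta''$ and applying (1) and (2) yields $N\apprv N'$; the $\mathtt{letrec}^n$/$\mathtt{letrec}^0$ and $\Omega$-collapse cases are exactly where the identities above are used.

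\textbf{Conclusion.} Finally I would iterate. For each $n$, a reduction path of $\am{\qarray,\qlist,M}$ of length $\leq n$ ending in a value closure is turned, by concatenating the simulations of its steps and appending one last block of value-forcing steps, into a reduction path of $\am{\qarray,\qlist,M'}$ ending in a value closure, of probability $\geq$ the original one and with the \emph{same} sequence of measurement outcomes (all added steps being deterministic and $\meas$-free). Since this assignment is injective on maximal paths (a path is recovered from its outcome sequence) and probability-increasing, $\sum_{\am{\qarray',\qlist',V}}\Red^n_{\am{\qarray,\qlist,M},\am{\qarray',\qlist',V}}\leq\Halt_{\am{\qarray,\qlist,M'}}$; letting $n\to\infty$ gives $\Halt_{\am{\qarray,\qlist,M}}\leq\Halt_{\am{\qarray,\qlist,M'}}$.

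\textbf{Main obstacle.} The delicate point is that unfolding an un-indexed $\mathtt{letrec}$ can \emph{enlarge} the term, so neither term size nor the size of the $\apprv$-derivation decreases under the simulation; the argument must therefore be organised so that each single left step is mirrored by only \emph{finitely many} right steps, bounded by the fixed syntax of $M'$ — which is why value forcing is done by induction on the (finitary, hence genuinely smaller) value and why the Simulation Lemma is stated one step at a time rather than through a global unfolding. The residual care is the probability bookkeeping, which works only because $\meas$ is the unique probabilistic rule and $\apprv$ never rewrites $\meas$-redexes, so the simulation preserves measurement-outcome sequences verbatim.
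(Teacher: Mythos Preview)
Your simulation argument is how the paper's one-line sketch must actually be unpacked, and you are right that $M'$ may need extra deterministic steps: indeed the paper's stated inequality with the \emph{same} $n$ on both sides already fails for $M=\punit\apprv\letrec{f}{x}{P'}{\punit}=M'$ at $n=0$, so your allowing finitely many probability-$1$ unfolding steps on the right is exactly the needed correction.

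There is, however, a genuine gap. Your preliminary fact~(2), that $\Omega^A\apprv P$ for every $P$, is \emph{not} derivable from Definition~\ref{def:approx} as written. The two generating clauses only introduce $\Omega$ wrapped as $\lambda x.\Omega$ and substituted into some $N$; trying to obtain $\Omega^B\apprv\letrec{f}{x}{P'}{P'}$ via the first clause with $N=\Omega$ demands $\Omega\apprv P'$ as a premise, which is circular. This bites precisely where you invoke it: when $M\apprv M'$ is derived by the first clause and you unfold $M'$ to $S'\{(\lambda x.\letrec{f}{x}{R'}{R'})/f\}$, the relation $S\{(\lambda x.\Omega)/f\}\apprv S'\{(\lambda x.\letrec{f}{x}{R'}{R'})/f\}$ is not an instance of $\apprv$ --- substitutivity breaks at the base case $S=S'=f$. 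The fix is easy and local: carry out the entire simulation for the enlarged compatible preorder $\apprv^+$ generated by $\apprv$ together with the axiom $\Omega^A\apprv^+P$; substitutivity and the Simulation Lemma then go through verbatim, value forcing is unaffected because $\Omega$ is not a value so the new axiom never fires with a value on the left, and $\apprv\subseteq\apprv^+$ gives the original statement.
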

\begin{proof}[Proof (Sketch)]
  By induction on $n$, one proves the inequality:
  $\sum_{\am{q',\ell',V}}\Red^n_{\am{q,\ell,M},\am{q',\ell',V}} \leq
  \sum_{\am{q',\ell',V}}\Red^n_{\am{q,\ell,M'},\am{q',\ell',V}}$, from
  which the statement follows trivially.
\end{proof}

\begin{theorem}\label{th:adequacy}
Let $M$ be a program, i.e., a closed term of unit type. Then 
$
\denot{M}^{\vdash\tunit}_\ast=\Halt_{\am{\ket{\,},\ket{\,},M}}.\
$
\end{theorem}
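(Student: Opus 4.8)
The plan is to combine the two halves of the adequacy argument: Corollary~\ref{cor:invariance_interpretation} already supplies the inequality $\denot{M}^{\vdash\tunit}_\ast \geq \Halt_{\am{\ket{\,},\ket{\,},M}}$, so the remaining work is to establish the converse $\denot{M}^{\vdash\tunit}_\ast \leq \Halt_{\am{\ket{\,},\ket{\,},M}}$, by approximating $M$ from below by its finitary approximants.

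First I would invoke Lemma~\ref{lem:finitary0} with empty context and type $\tunit$, which expresses the denotation as a directed supremum in $\overline{\Rp}=\pmatr(\denot{\tunit})$:
\[
\denot{M}^{\vdash\tunit}_\ast \;=\; \dirsup_{\substack{M'\apprv M\\ M'\text{ finitary}}} \denot{M'}^{\vdash\tunit}_\ast .
\]
Each $M'$ occurring in this family is a closed finitary term of unit type, so finitary adequacy (Corollary~\ref{cor:finit_adequacy}) applies and gives $\denot{M'}^{\vdash\tunit}_\ast = \Halt_{\am{\ket{\,},\ket{\,},M'}}$. Then, since $M'\apprv M$, Lemma~\ref{lem:finitary2} yields $\Halt_{\am{\ket{\,},\ket{\,},M'}} \leq \Halt_{\am{\ket{\,},\ket{\,},M}}$. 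Hence every element of the directed family is bounded above by $\Halt_{\am{\ket{\,},\ket{\,},M}}$, so its least upper bound is too, and we conclude $\denot{M}^{\vdash\tunit}_\ast \leq \Halt_{\am{\ket{\,},\ket{\,},M}}$. Together with Corollary~\ref{cor:invariance_interpretation} this is the desired equality.

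At the level of the theorem statement itself there is essentially no obstacle: it is a short assembly of results already in place. The genuine difficulty is concentrated in the supporting lemmas. The hard part is Lemma~\ref{lem:finitary0}, whose proof must show that the interpretation of the unbounded {\tt letrec} is the least upper bound of the interpretations of its bounded approximants, and that the relation $\apprv$ exhausts those approximants up to the Scott-continuous operations of the model; this relies on the construction of $\fixpoint(\phi)$ as the supremum of the $\phi^n$ together with the dcpo-enrichment from Proposition~\ref{prop:lafont}. Corollary~\ref{cor:finit_adequacy} is the second nontrivial ingredient, resting on strong normalization of the finitary fragment (Lemma~\ref{lem:SN-ext}) and an induction on the longest reduction length, whereas Lemma~\ref{lem:finitary2} is a routine induction on the number of reduction steps (unfolding a bounded {\tt letrec}$^n$ or replacing $f$ by $\lambda x.\Omega$ can only delete possible reductions to values). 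Given all three, the theorem follows immediately by the chain of (in)equalities above.
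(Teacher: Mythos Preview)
Your proposal is correct and follows essentially the same route as the paper: one inequality comes from Corollary~\ref{cor:invariance_interpretation}, and the reverse is obtained by writing $\denot{M}^{\vdash\tunit}_\ast$ as the directed supremum over finitary approximants via Lemma~\ref{lem:finitary0}, applying finitary adequacy (Corollary~\ref{cor:finit_adequacy}) termwise, and bounding each $\Halt_{\am{\ket{\,},\ket{\,},M'}}$ by $\Halt_{\am{\ket{\,},\ket{\,},M}}$ using Lemma~\ref{lem:finitary2}. Your remarks on where the real work lies (in Lemma~\ref{lem:finitary0}, Corollary~\ref{cor:finit_adequacy}, and Lemma~\ref{lem:finitary2}) are also accurate.
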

\begin{proof}
  By Corollary~\ref{cor:invariance_interpretation} we have
  $\denot{M}^{\vdash\tunit}_\ast\geq\Halt_{\am{\ket{\,},\ket{\,},M}}$.
  Conversely,
  by Lemma~\ref{lem:finitary0},
  $\denot{M}^{\vdash\tunit}_\ast=\dirsup_{M'\apprv
    M}\denot{M'}^{\vdash\tunit}_\ast$, which is equal to
  $\dirsup_{M'\apprv M}\Halt_{\am{\ket{\,},\ket{\,},M'}}$ by Corollary
  \ref{cor:finit_adequacy}, which is less or equal to
  $\Halt_{\am{\ket{\,},\ket{\,},M}}$ by Lemma~\ref{lem:finitary2}.
\end{proof}

%%%%%%%%%%%%%%%%%%%%%%%%%%%%%%%%%%%%%%%%%%%%%%%%%%%%%%%%%%%%
%%%%%%%%%%%%%%%%%%%%%%%%%%%%%%%%%%%%%%%%%%%%%%%%%%%%%%%%%%%%
%%%%%%%%%%%%%%%%%%%%%%%%%%%%%%%%%%%%%%%%%%%%%%%%%%%%%%%%%%%%
\section{Structure of the sets of representable elements}
\label{subsect:discussion}

We conclude this paper with an analysis of some of the properties of
the denotation of terms. 
Recall that a morphism in $\freecat[\ccpms]$ is an indexed family
of either
completely positive maps, or infinite elements added during D-completion.
We show that (1) all types have a non-zero inhabitant; (2) provided
that the term constant $U$ ranges over arbitrary unitary
matrices, the representable elements of a given homset form a convex
set including ${\bf 0}$; and (3) infinite elements are not part of
any representable map.

We first need two auxiliary definitions.

\newcommand{\consume}[1][]{\overline\omega_{#1}}
\newcommand{\generate}[1][]{\omega_{#1}}

\begin{table*}
  \[
  \begin{array}{@{}l@{{}={}}ll@{{}={}}l@{}}
    \generate[\qubit]
    & 
    \lambda\punit.\new\,\ffalse
    &
    \consume[\qubit]
    &
    \lambda x^\qubit.\iftermx{\meas\,x}{\punit}{\punit}
    \\
    \generate[A\loli B]
    &
    \lambda\punit.\lambda x^A.\letunitterm{(\consume[A]\,x)}{
      (\generate[B]\,\punit)}
    &
    \consume[A\loli B]
    &
    \lambda f^{A\loli B}.\consume[B]\,(f\,(\generate[A]\,\punit))
    \\
    \generate[\oc{(A\loli B)}]
    &
    \lambda\punit.\lambda x^A.(\generate[A\loli B]\,\punit)\,x
    &
    \consume[\oc{(A\loli B)}]
    &
    \mu g f^{\oc{(A\loli B)}}.\iftermx{{\bf c}}{\punit}{
      (\consume[A\loli B]\,f);(g\,f)}
    \\
    \generate[\tunit]
    &
    \lambda\punit.\punit
    &
    \consume[\tunit]
    &
    \lambda\punit.\punit
    \\
    \generate[A\tensor B]
    &
    \lambda\punit.(\generate[A]\,\punit)\tensor(\generate[B]\,\punit)
    &
    \consume[A\tensor B]
    &
    \lambda x^{A\tensor B}.\lettensterm{z_1}{z_2}{x}{
      \letunitterm{(\consume[A]\,z_1)}{(\consume[B]\,z_2)}}
    \\
    \generate[A\oplus B]
    &
    \lambda\punit.\iftermx{{\bf c}}{(\generate[A]\,\punit)}{
      (\generate[B]\,\punit)}
    &
    \consume[A\oplus B]
    &
    \lambda x^{A\oplus B}.\match{x}{z_1^A}{\consume[A]\,z_1}{z_2^B}{
      \consume[B]\,z_2}
    \\
    \generate[\tlist{A}]
    &
    \mu f
    \punit.\iftermx{{\bf c}}{(\punit)}{\cons{(\generate[A]\,\punit)}{
        (f\,\punit)}}
    &
    \consume[\tlist{A}]
    &\begin{array}[t]{@{}l@{}}
      \mu f
      x^{\tlist{A}}.{\tt match}\,\splitlist\,x\,{\tt with}~
      \\
      \hspace{-.43in}(~z_1^\tunit~:~z_1\bor z_2^{A\tensor\tlist{A}}~:~
      {\tt let}~{y_1}\tensor{y_2}={z_2}~{\tt in}~{
        \letunitterm{(\consume[A]\,y_1)}{(f\,y_2)}})
   \end{array}
 \end{array}
  \]
  \caption{\footnotesize Two mutually recursive families of terms}
  \label{tab:consume-generate}
\end{table*}

\begin{definition}
  We define two type-indexed families of terms $\consume[A]$ and
  $\generate[A]$ by mutual induction in
  Table~\ref{tab:consume-generate}. The term ${\bf c}$ represents the
  fair coin toss $\meas\,(H\,(\new\,\ffalse))$ (recall
  Example~\ref{ex:cointoss}) and the notation $\mu f x.M$ stands for
  $\letrec{f}{x}{M}{f}$.
\end{definition}

\begin{lemma}\label{lem:non-zero-image}
  For all types $A$, we have $\entail\generate[A]:\tunit\loli A$ and
  $\entail\consume[A]:A\loli\tunit$. Moreover, the morphisms
  $\denot{\generate[A]}^{\entail\tunit\loli A}$ and
  $\denot{\consume[A]}^{\entail A\loli\tunit}$, seen as indexed families,
  do not contain the zero map.\qed
\end{lemma}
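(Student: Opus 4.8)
The plan is to establish both assertions simultaneously by mutual induction on the type $A$, using the well-founded order fixed in Section~\ref{sec:qlc} (so that $A$, $B$ and $A\loli B$ are all smaller than $\oc(A\loli B)$, and $A^{\otimes n}$ is smaller than $\tlist A$). For the typing part, each clause of Table~\ref{tab:consume-generate} is a direct application of the rules of Figure~\ref{fig:typing rules} to the induction hypotheses $\entail\generate[A']:\tunit\loli A'$ and $\entail\consume[A']:A'\loli\tunit$ for the relevant proper subtypes $A'$: the base cases use $\new$, $\meas$, the unit rules and the if/match sugar of Notation~\ref{notation:syntax}; the $\tensor$, $\oplus$ and $\loli$ clauses thread the hypotheses through the corresponding introduction/elimination rules; and the clauses defining $\generate[\tlist A]$, $\consume[\tlist A]$ and $\consume[\oc(A\loli B)]$ are typed with the $\mathtt{rec}$ rule (recall $\mu f x.M=\letrec fxMf$). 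This part is routine. For the non-degeneracy part, since $\tunit\loli A\simeq A\simeq A\loli\tunit$, it suffices to show that for every $a\in\web{\denot A}$, setting $\rho^A_a:=\denot{\generate[A]}_a(1)$, the matrix $\rho^A_a$ is \emph{nonzero} (it lies automatically in $\pmatr(a)$), and the $a$-indexed component $\denot{\consume[A]}_a$ is a \emph{nonzero} completely positive functional on $\pmatr(a)$.

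For the non-recursive cases I would read the components off Tables~\ref{table:denotation_quantum} and~\ref{table:denotation_rule}. For $\qubit$, $\rho^{\qubit}_\star=\ket{0}\ket{0}^{*}\neq0$ and $\denot{\consume[\qubit]}_\star$ is the trace functional $M\mapsto\tr M$ (the conditional discards the measurement outcome), which is nonzero; for $\tunit$ both are the identity of $\C$. For $A\tensor B$ the component at $(a,b)$ is the tensor of the $a$-component for $A$ and the $b$-component for $B$, and a tensor of nonzero completely positive maps is nonzero. For $A\oplus B$, the interpretations of the coin toss $\mathbf c$ and of the conditional make the component at a left-injection index $a$ (resp.\ right-injection index $b$) equal to $\tfrac12$ times the $a$-component for $A$ (resp.\ $b$-component for $B$). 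For $A\loli B=A\otimes B$: the $(a,b)$-component of $\denot{\generate[A\loli B]}$ corresponds, via the compact-closed structure, to the completely positive map $M\mapsto\denot{\consume[A]}_a(M)\cdot\rho^B_b$, nonzero because $\denot{\consume[A]}_a\neq0$ and $\rho^B_b\neq0$; and the $(a,b)$-component of $\denot{\consume[A\loli B]}$, viewed as a functional on elements $X$ of $\pmatr(a,b)$ (which correspond to group-averaged completely positive maps between the relevant matrix algebras), sends $X$ to $\denot{\consume[B]}_b\bigl(X(\rho^A_a)\bigr)$. Finally $\denot{\generate[\oc(A\loli B)]}$ is, up to the trivial $\lambda\punit$ wrapper, the promotion of $\denot{\generate[A\loli B]}$ (Table~\ref{table:denotation_rule}(c)): its component at a multiset $\mu=[(a_1,b_1),\dots,(a_k,b_k)]$ is the symmetrization of $\bigotimes_{i=1}^{k}\denot{\generate[A\loli B]}_{(a_i,b_i)}$, nonzero by the $\loli$ case, and its component at $\mu=[\,]$ is the scalar $1$.

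The main obstacle will be the $\loli$ case for $\consume$: one must produce an $X\in\cpms(\symm{\denot A}_a,\symm{\denot B}_b)$ with $\denot{\consume[B]}_b(X(\rho^A_a))\neq0$. Since $\rho^A_a\neq0$ there is a vector $v$ with $v^*\rho^A_a v\neq0$, and since $\denot{\consume[B]}_b$ is a nonzero functional there is $\tau\in\pmatr(b)$, $\tau\neq0$, with $\denot{\consume[B]}_b(\tau)\neq0$. The rank-one ``measure-and-prepare'' channel $Y\colon M\mapsto (v^*Mv)\,\tau/(v^*\rho^A_a v)$ is evidently completely positive and satisfies $Y(\rho^A_a)=\tau$; its symmetrization $X:=\symm{\denot A}_a\pc Y\pc\symm{\denot B}_b$ then lies in $\cpms(\symm{\denot A}_a,\symm{\denot B}_b)$, and because $\rho^A_a$ and $\tau$ are already invariant under the respective groups we still get $X(\rho^A_a)=\symm{\denot B}_b(Y(\rho^A_a))=\tau$, so $\denot{\consume[B]}_b(X(\rho^A_a))\neq0$. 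The same observation --- every input $\rho^A_a$ lives in the invariant cone $\pmatr(a)$ and every functional $\denot{\consume[A]}_a$ is invariant, so that inserting a group-averaging never annihilates them --- is what guarantees that the permutation-group idempotents appearing throughout the interpretation do no harm in any of the cases above.

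It then remains to handle the recursive clauses $\generate[\tlist A]$, $\consume[\tlist A]$ and $\consume[\oc(A\loli B)]$, whose denotations are the least upper bounds of the approximants $\phi^n$ of the fixed-point operator $\fixpoint$. It suffices to check that every web-index already receives a nonzero entry from some finite approximant: unfolding $\generate[\tlist A]$ $n$ times produces, with probability $2^{-(k+1)}$, a list of length $k\le n$ whose entry at $(a_1,\dots,a_k)$ is $\rho^A_{a_1}\tensor\cdots\tensor\rho^A_{a_k}\neq0$, while the length-$0$ index receives weight $\tfrac12$; dually $\consume[\tlist A]$ (resp.\ $\consume[\oc(A\loli B)]$) reaches any length-$k$ (resp.\ cardinality-$k$) index after $k+1$ unfoldings, where by the induction hypothesis and the $\loli$ case the entry is a nonzero tensor of $A$-components (resp.\ $A\loli B$-components), while the empty index carries the constant $\tfrac12$ (resp.\ $1$). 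Since every web-index of $\tlist{\denot A}$ and of $\oc\denot{A\loli B}$ is of one of these forms and the chain $\{\phi^n\}$ is increasing, its supremum has a nonzero entry everywhere, completing the induction.
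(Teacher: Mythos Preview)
The paper omits the proof of this lemma entirely. Your mutual structural induction on types is the natural argument and is correct in all essentials; in particular, the construction of a measure-and-prepare witness in the $\loli$ case for $\consume$ is exactly the non-obvious step, and your observation that $\rho^A_a$ and $\tau$ are already group-invariant, so that the idempotent averaging cannot kill the witness, is the right way to dispatch the permutation-group bookkeeping.

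One cosmetic slip: you have the two empty-index constants swapped. For $\consume[\tlist A]$ the $\mathtt{match}$ on $\nil$ succeeds deterministically, so the length-$0$ component is $1$; for $\consume[\oc(A\loli B)]$ only the $\mathtt{true}$ branch of the coin contributes at the empty multiset (dereliction at $[\,]$ vanishes), so that component is $\tfrac12$. Since both are nonzero this does not affect the conclusion.
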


\begin{corollary}
  All types are inhabited by at least one closed value of non-null
  denotation.
\end{corollary}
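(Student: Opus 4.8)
The plan is to use, for each type $A$, the closed term $\generate[A]\,\punit$ as the canonical inhabitant. By Lemma~\ref{lem:non-zero-image} we have $\entail\generate[A]:\tunit\loli A$, hence $\entail\generate[A]\,\punit:A$, and this term is closed. First I would check that its denotation is non-null: by compositionality, $\denot{\generate[A]\,\punit}^{\entail A}$ is obtained from $\denot{\generate[A]}^{\entail\tunit\loli A}$ by evaluation at the unit-type argument $\denot{\punit}$, and since $\tunit$ is the tensor unit this is exactly the canonical isomorphism $\freecat[\ccpms](\unit,\tunit\loli A)\cong\freecat[\ccpms](\unit,A)$ — it merely relabels matrix indices, as in Example~\ref{ex:lambda_sem} — so it carries non-null morphisms to non-null morphisms. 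Since Lemma~\ref{lem:non-zero-image} says $\denot{\generate[A]}^{\entail\tunit\loli A}$ contains no zero map, $\denot{\generate[A]\,\punit}^{\entail A}$ is non-null. This already exhibits a closed \emph{term} of type $A$ with non-null denotation.

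To obtain a closed \emph{value} closure I would then evaluate $\am{\ket{\,},\ket{\,},\generate[A]\,\punit}$, proceeding by structural induction on $A$ and exhibiting, together with the value closure $\am{q,\ell,V}$ (with $\ell\vdash V:A$), a finite reduction path of positive probability reaching it. For $A\in\{\tunit,\qubit\}$ the reduction is deterministic and lands on $\am{\ket{\,},\ket{\,},\punit}$, resp.\ $\am{\ket 0,\ket y,y}$, which are non-null. For arrow and $\oc$-arrow types, $\generate[A]\,\punit$ reduces deterministically to a $\lambda$-abstraction, already a value, whose denotation is non-null because — unwinding the interpretation of abstraction, application and promotion in Table~\ref{table:denotation_rule} — non-nullity there reduces to $\denot{\consume[A_1]}$ being a non-zero map and $\denot{\generate[A_2]\,\punit}$ being non-null, both supplied by Lemma~\ref{lem:non-zero-image} and the previous paragraph. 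For $A_1\tensor A_2$ and $A_1\oplus A_2$, one reduces to $(\generate[A_1]\,\punit)\tensor(\generate[A_2]\,\punit)$, resp.\ an injection of some $\generate[A_i]\,\punit$ (in the $\oplus$ case after taking the branch in which the coin ${\bf c}$ yields the relevant boolean, a step of probability $\frac12$), then applies the induction hypothesis to the subterms via the congruence rules of Table~\ref{table:reduction}\subref{subtable:reduction_congruence}; since those rules carry the whole quantum state along, the independently-created qubits of the two tensor components do not interfere, and non-nullity survives because tensoring and the biproduct injections preserve it. For $A=\tlist{A'}$ it suffices to unfold the \texttt{letrec} once and take the branch where ${\bf c}$ yields $\ttrue$ (probability $\frac12$), reaching $\am{\ket{\,},\ket{\,},\nil}$, whose denotation is the non-null injection into the $0$-th summand of $\tlist{A'}=\bigoplus_n(A')^{\tensor n}$.

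The main obstacle is really the bookkeeping of this extraction rather than any single idea, and two points deserve care. First, $\generate[\tlist{A'}]\,\punit$ is \emph{not} strongly normalizing — the ``always $\ttrue$'' reduction sequence is infinite, though of probability $0$ — so one cannot iterate the soundness equation of Proposition~\ref{prop:semantic_invariance} over the whole reduction tree; one must commit to a single positive-probability finite path, as above. Second, the uses of the induction hypothesis rest on the implication: if a closed term $M$ reduces with positive probability to a non-null value closure then $\denot{M}$ is itself non-null; this follows by the argument of Corollary~\ref{cor:invariance_interpretation} applied at the relevant type, since $\denot{M}$ then dominates a non-zero positive element in the \Lowner order, and in a pointed cone a dominator of a non-zero positive element is non-zero. (A shorter but heavier alternative avoids the reduction analysis: by Lemma~\ref{lem:finitary0} some finitary $M'\apprv\generate[A]$ has non-null denotation, so $M'\,\punit$ is a finitary closed non-null term of type $A$; by Lemma~\ref{lem:SN-ext} it is strongly normalizing, so the soundness equation \emph{can} be iterated to equality down to its normal forms, and since those containing $\Omega$ in evaluating position contribute $\mathbf 0$, some reachable value closure carries the non-null denotation.) In either case the stated corollary follows.
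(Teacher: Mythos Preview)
Your first paragraph is exactly the paper's proof: the paper simply writes ``Immediate with Lemma~\ref{lem:non-zero-image}: for a given type $A$, choose the term $(\generate[A]\,\punit)$'', and stops there. Your observation that the currying/evaluation against $\denot{\punit}$ is just the unit isomorphism, hence carries a non-null family to a non-null one, is the implicit content of that one line.

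Everything from your second paragraph onward is additional work the paper does not do. You noticed, correctly, that $\generate[A]\,\punit$ is an application and hence not a \emph{value} in the sense of Table~\ref{table:terms_grammar}, and you try to repair this by reducing to a value closure. That analysis is essentially sound, but note what it actually produces: for $A=\qubit$ you end at the closure $\am{\ket 0,\ket y,y}$, whose term component is a free variable, not a closed value. Indeed, inspecting the value grammar and the typing of the constants shows there is \emph{no} closed value of type $\qubit$ at all, so the corollary as literally stated cannot be witnessed by a closed value term for every $A$. The paper's own proof implicitly reads ``value'' loosely (the witness it exhibits is a term), and under that reading your first paragraph is already complete; your induction then upgrades the conclusion to ``every type is inhabited by a total value closure of non-null denotation'', which is a correct and slightly stronger statement than what the paper proves, but a different statement from the one printed. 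Your alternative route via Lemmas~\ref{lem:finitary0} and~\ref{lem:SN-ext} is also fine and reaches the same conclusion.
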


\begin{proof}
  Immediate with Lemma~\ref{lem:non-zero-image}: for a given type $A$,
  choose the term $(\generate[A]\,\punit)$.
\end{proof}

\begin{proposition}
  Given a type $A$ and a context $\Gamma$, the denotations
  $\denot{M}^{\Gamma\entail A}$ of valid typing judgements $\Gamma\entail
  M:A$ form a convex set including ${\bf 0}$.
\end{proposition}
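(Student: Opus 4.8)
Write $\Gamma=\oc\Delta,\Sigma$ with $\oc\Delta$ exponential and $\Sigma$ linear. Given two valid judgements $\Gamma\vdash M:A$ and $\Gamma\vdash N:A$ and a scalar $\lambda\in[0,1]$, the plan is to build (i) a term $P_\lambda$ with $\Gamma\vdash P_\lambda:A$ and $\denot{P_\lambda}^{\Gamma\vdash A}=\lambda\,\denot{M}^{\Gamma\vdash A}+(1-\lambda)\,\denot{N}^{\Gamma\vdash A}$, and (ii) a term $M_0$ with $\Gamma\vdash M_0:A$ and $\denot{M_0}^{\Gamma\vdash A}={\bf 0}$. Instantiating $\lambda\in\{0,1\}$ in (i) recovers $M$ and $N$ themselves, so (i) gives convexity of the set of representable maps and (ii) gives that it contains ${\bf 0}$.

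For (i), the one new ingredient is a \emph{biased coin}, and this is precisely where we use the hypothesis that $U$ ranges over arbitrary one-qubit unitaries. For $\lambda\in[0,1]$ let $U_\lambda$ be the unitary with $U_\lambda\ket0=\sqrt{1-\lambda}\,\ket0+\sqrt{\lambda}\,\ket1$, and set $c_\lambda\ass\meas\,(U_\lambda\,(\new\,\ffalse))$, a closed term of type $\bit$. From Table~\ref{table:denotation_quantum} together with compositionality of the interpretation — or, more quickly, from Proposition~\ref{prop:semantic_invariance} and the fact that $\am{\ket{\,},\ket{\,},c_\lambda}$ reduces to $\am{\ket{\,},\ket{\,},\ttrue}$ with total probability $\lambda$ and to $\am{\ket{\,},\ket{\,},\ffalse}$ with total probability $1-\lambda$ — one obtains that $\denot{c_\lambda}^{\vdash\bit}$ is the map $p\mapsto\bigl((1-\lambda)p,\;\lambda p\bigr)$, where, as in Example~\ref{ex:bool_sem}, the first coordinate of $\denot{\bit}=\denot{\tunit}\oplus\denot{\tunit}$ is the $\ffalse$-one. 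I would then put $P_\lambda\ass\match{c_\lambda}{x^\tunit}{(\letunitterm{x}{N})}{y^\tunit}{(\letunitterm{y}{M})}$, i.e.\ the conditional $\iftermx{c_\lambda}{M}{N}$ with its unit-typed pattern variables discarded explicitly. Since $c_\lambda$ is closed, the instance of $\oplus_E$ typing $P_\lambda$ carries the empty linear context on the scrutinee and all of $\Sigma$ on the two branches, so $\Gamma\vdash P_\lambda:A$ is valid. Reading off the interpretation of \texttt{match} from Table~\ref{table:denotation_rule}: the scrutinee contributes $\denot{c_\lambda}$ into $\denot{\tunit}\oplus\denot{\tunit}$, the distributor $\pdistr$ pushes $\denot{\oc\Delta\otimes\Sigma}$ into the two summands, and the co-pairing of the branch morphisms routes the $\ffalse$-summand (weight $1-\lambda$) through $\denot{N}^{\Gamma\vdash A}$ and the $\ttrue$-summand (weight $\lambda$) through $\denot{M}^{\Gamma\vdash A}$; hence $\denot{P_\lambda}^{\Gamma\vdash A}$ is the desired affine combination.

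For (ii), let $\Sigma=z_1{:}C_1,\dots,z_k{:}C_k$ and let $M_0$ be the term that consumes $z_1,\dots,z_k$ in turn via the consumers $\consume[C_i]$ of Table~\ref{tab:consume-generate} (well-typed by Lemma~\ref{lem:non-zero-image}) and then returns $\Omega_A$ — that is, $M_0\ass\letunitterm{(\consume[C_1]\,z_1)}{(\cdots\,\letunitterm{(\consume[C_k]\,z_k)}{\Omega_A}\,\cdots)}$ — where $\Omega_A$ is either $\Omega^A$ of Definition~\ref{def:letrecn} (whose denotation is ${\bf 0}$ by fiat) or, inside the base calculus, the closed divergent term $\letrec{f^{\tunit\loli A}}{w}{f\,w}{f\,\punit}$. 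In the latter case $\denot{\Omega_A}^{\vdash A}={\bf 0}$ because in the definition of $\fixpoint$ every iterate $\phi^n$ equals ${\bf 0}$ (the recursion body merely returns the recursive call, so $\phi^{n+1}$ collapses to $\phi^n$, and $\phi^0={\bf 0}$ by construction); for $A=\tunit$ this is also an instance of adequacy (Theorem~\ref{th:adequacy}), $\Omega_\tunit$ having halting probability $0$. Then $\Gamma\vdash M_0:A$ is valid, and since the interpretation of $\letunitterm{\,}{\,}$ in Table~\ref{table:denotation_rule} post-composes with the denotation of its continuation, an inside-out induction along the nested $\letunitterm{\,}{\,}$'s gives $\denot{M_0}^{\Gamma\vdash A}={\bf 0}$.

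The main obstacle is pure bookkeeping: confirming that $\denot{c_\lambda}$ is indeed the biased-coin map and, above all, that pushing it through the $\oplus_E$ interpretation — with its $\contr$, its distributor $\pdistr$, the branch co-pairing, and the permutation-group decorations carried by every morphism of $\freecat[\ccpms]$ — produces exactly $\lambda\,\denot{M}^{\Gamma\vdash A}+(1-\lambda)\,\denot{N}^{\Gamma\vdash A}$ and not some reindexed variant. Everything is routine once one is careful about which branch of the \texttt{match} is selected by $\injl\punit$ versus $\injr\punit$ and about the $(1,\{\id\})$-decorated web of $\denot{\bit}$.
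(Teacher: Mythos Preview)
Your proposal is correct and follows essentially the same approach as the paper: a biased coin $\meas\,(U_\lambda\,(\new\,\ffalse))$ built from an arbitrary one-qubit unitary, plugged into an $\iftermx{\,}{M}{N}$ for convexity, and a divergent $\mathtt{letrec}$ preceded by consumers $\consume[\,]$ of the (linear) context variables for the $\mathbf 0$ element. Your version is slightly tidier in that you only consume the linear part $\Sigma$ of $\Gamma$ and let the exponential part $\oc\Delta$ be weakened, and you are more explicit about the $\oplus_E$ bookkeeping, but the argument is the same.
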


\begin{proof}
  Suppose that $\Gamma$ is $x_1:A_1,\ldots,x_n:A_n$.  
  A term $M$ mapping to $\bf 0$ is 
  $
  (\consume[A_1] x_1;\ldots;\consume[A_n] x_n;{\bf\Omega})
  $
  where the term ${\bf\Omega}$ is a shortcut for 
  $\letrec{f}{x}{f\,x}{f\,\punit}$, of denotation
  $\bf 0$. 
  
  Now, suppose that $f=\denot{M_1}^{\Gamma\entail A}$ and
  $g=\denot{M_2}^{\Gamma\entail A}$, and choose two non-negative real
  numbers $\rho_1$, $\rho_2$ such that $\rho_1+\rho_2 = 1$.
  There exists an angle $\phi$ such that $(\cos \phi)^2=\rho_1$ and
  that $(\sin \phi)^2=\rho_2$. 
  As the term constants $U$ range over arbitrary unitaries,
  the unitary matrix $V_\phi=
  (\begin{smallmatrix}\cos\phi&-\sin\phi\\\sin\phi&\cos
    \phi\end{smallmatrix})$
  is representable in the quantum lambda calculus. The term
  $
  {\bf c'} = \meas\,(V_\phi\,(\new\,\ffalse))
  $
  has denotation $(\rho_1,\rho_2)$. We then conclude that
  the term 
  $\iftermx{{\bf c'}}{M_1}{M_2}$
  has denotation $\rho_1f+\rho_2g$.
\end{proof}

\begin{proposition}
  If $\Gamma\entail M:A$ is valid, then no infinite element is part of
  the denotation $\denot{M}^{\Gamma\entail A}$ of $M$.
\end{proposition}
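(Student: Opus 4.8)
First I would reduce the statement. A morphism of $\freecat[\ccpms]$ is a matrix whose $(a,b)$-entry lies a priori only in a D-completed homset $\ccpms(G,H)=\cs(\cpms(G,H))$, and its \emph{infinite} elements are by definition those of $\ccpms(G,H)$ not already in $\cpms(G,H)$. Since each $\cpms(G,H)$ is a bounded directed complete partial order, it is (Section~\ref{sect:preli_cpm}) an initial — in particular, down-closed — subset of $\cs(\cpms(G,H))$; hence an entry $\phi_{a,b}$ is genuine (not infinite) as soon as $\phi_{a,b}\sqleq f$ for \emph{some} genuine $f$. In the finite-dimensional setting being bounded in the \Lowner\ order is the same as a trace bound on the image of the identity, so concretely it suffices to dominate every entry of $\denot{M}^{\Gamma\entail A}$ by a genuine completely positive map, equivalently to bound $\tr(\phi_{a,b}(\id))$.

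The plan is then to prove this by induction on the unique type derivation of $\Gamma\entail M:A$ (Proposition~\ref{prop:unicity_derivation}), strengthening the inductive hypothesis to: $\denot{M}$, viewed via Proposition~\ref{prop:concrete_cpms_pi} as a completely positive module homomorphism $\pmatr(\denot\Gamma)\to\pmatr(\denot A)$, is \emph{bounded}, i.e.\ it maps a suitable bounded subset of the source into a bounded subset of the target, with a bound depending only on the types $\Gamma$ and $A$ and not on $M$. The base cases are the axioms and the constants $\meas$, $\new$, $U$, whose denotations (Tables~\ref{table:denotation_quantum} and \ref{table:denotation_rule}) are visibly genuine and bounded, being trace non-increasing. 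For the inductive step, each clause of Table~\ref{table:denotation_rule} builds $\denot{M}$ from the denotations of its premises by a categorical operation of $\freecat[\ccpms]$ — composition, $\otimes$, biproduct (co)tupling and the distributor, currying and $\eval$, promotion, and the comonad and Bierman maps — and one checks that each of these preserves boundedness. The only subtle point is that the a priori infinite sums defining composition and promotion must converge to genuine maps; but a composite (or a promotion) of bounded morphisms is again bounded, so the directed families of finite partial sums defining its entries are \Lowner-bounded, and their suprema therefore stay inside $\cpms$.

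The crux is the recursion case, whose interpretation uses the fixpoint operator $\fixpoint(\phi)=\dirsup_{n}\phi^{n}$, a directed supremum that could in principle be an infinite element. Here the plan is to observe that the approximants are built from $\phi$ — the denotation of a premise, which is bounded by the induction hypothesis — by the fixed finite recipe $\phi^{0}=\weak;\oc{\bf 0}$, $\phi^{n+1}=\contr;(\id\otimes\phi^{n});\phi$ (so the $\phi^{n}$ are the denotations of the bounded unfoldings $\mathtt{letrec}^{n}$ used in the adequacy proof), each step of which preserves boundedness; hence a side induction on $n$ shows that every $\phi^{n}$ is bounded, with the \emph{same} type-indexed bound, so that the chain $\{\phi^{n}\}_{n}$ is \Lowner-bounded. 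Therefore $\fixpoint(\phi)$ already lies in $\cpms$, carries no infinite element, and again satisfies the invariant, since the pairing witnessing boundedness is Scott-continuous and so passes to directed suprema. This closes the induction and yields the proposition.

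The main obstacle is isolating the right form of the boundedness invariant: it must be strong enough that a bounded entry cannot be an infinite element, yet stable under \emph{every} construction of the compact closed Lafont structure of $\freecat[\ccpms]$, including contraction, digging and promotion, along which traces genuinely increase — so it has to be phrased relative to the exponential decomposition $\oc\denot{A}=\bigoplus_{k}\denot{A}^{\odot k}$ rather than as a naive global trace bound. This is the ball/polar-cone preservation property familiar from quantitative models, and checking it for the structural morphisms is essentially what already underlies Proposition~\ref{prop:lafont}. An equivalent organization, bypassing the explicit fixpoint, is to start from Lemma~\ref{lem:finitary0}, $\denot{M}=\dirsup_{M'\apprv M}\denot{M'}$ over finitary approximants, and note that applying the same uniform bound to every finitary $M'$ makes this supremum bounded, hence genuine; the real work, producing that uniform bound, is the same either way.
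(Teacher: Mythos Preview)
Your route is quite different from the paper's, and the paper's is much shorter. The paper argues by contradiction using adequacy: if some entry of $\denot{M}^{x_1{:}A_1,\ldots,x_n{:}A_n\entail A}$ were infinite, then the closed unit-type term $(\lambda x_1\ldots x_n.\consume[A]\,M)(\generate[A_1]\punit)\ldots(\generate[A_n]\punit)$ would have infinite denotation, because by Lemma~\ref{lem:non-zero-image} every entry of $\denot{\generate[A_i]}$ and of $\denot{\consume[A]}$ is nonzero; but Theorem~\ref{th:adequacy} forces that denotation to equal a halt probability, hence to lie in $[0,1]$. No induction on derivations and no boundedness invariant are needed.

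Your plan has a genuine gap: you never specify the invariant, and in this model there is no obvious type-indexed ``ball'' preserved by all the structural maps. The category is compact closed with $\dual{\qfin A}=\qfin A$, so $\qfin A\multimap\qfin B$ and $\qfin A\otimes\qfin B$ are the \emph{same} object of $\freecat[\ccpms]$; yet the natural bound on $\qubit\tensor\qubit$ as data (trace $\leq 1$) and on $\qubit\multimap\qubit$ as a function (partial trace of the Choi matrix $\sqleq I$) disagree. Concretely, $\eta:\unit\to\denot\qubit\otimes\denot\qubit$ sends $1$ to a trace-$2$ matrix, so currying already breaks any ``trace $\leq 1$ at each index'' invariant at the $\loli_I$ rule, exactly where you need it to hold uniformly for the fixpoint argument. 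Your appeal to Proposition~\ref{prop:lafont} does not help: that proposition records only the Lafont/compact-closed algebraic structure and carries no norm or boundedness information. Making your induction go through would require equipping each \emph{type} (not object) with a biorthogonally-closed ball in the style of probabilistic coherence spaces --- extra structure not determined by the objects of $\freecat[\ccpms]$, since distinct types can share an object --- and then verifying every clause of Table~\ref{table:denotation_rule} against it. That is substantial additional work, essentially building a finer model on top of $\freecat[\ccpms]$, which the paper sidesteps entirely by leaning on adequacy.
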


\begin{proof}
  Suppose that one of the infinite elements of the D-comple\-tion were
  to be found in the interpretation of $x_1:A_1,\ldots,x_n:A_n\entail
  M:A$. Then the closed term
  \[
  (\lambda x_1\ldots x_n.\consume[A]\,M)
  (\generate[A_1]\punit)\ldots(\generate[A_n]\punit)
  \]
  of type $\tunit$ has infinite denotation, contradicting
  Theorem~\ref{th:adequacy}.
\end{proof}

This last proposition indicates that infinite elements introduced
during the D-completion are really an
artifact only needed for the categorical construction. The
representable elements in the model are only built out of families of
completely positive maps.

%%%%%%%%%%%%%%%%%%%%%%%%%%%%%%%%%%%%%%%%%%%%%%%%%%%%%%%%%%%%
%%%%%%%%%%%%%%%%%%%%%%%%%%%%%%%%%%%%%%%%%%%%%%%%%%%%%%%%%%%%
%%%%%%%%%%%%%%%%%%%%%%%%%%%%%%%%%%%%%%%%%%%%%%%%%%%%%%%%%%%%
\section{Conclusion}

We presented a higher-order lambda calculus for quantum computation
featuring classical and quantum data, duplication, recursion, and an
infinite parametric type for lists. We then answered a long-standing
open question: the description of a model for the full quantum lambda
calculus. The model we propose is a free construction based on the
known model of completely positive maps, but nevertheless has a
concrete presentation.

One thing that this model explains and illustrates is the distinction
between the quantum and classical parts of the language. The quantum
part is described by completely positive maps (finite dimension), 
whereas the classical
control is given by the Lafont category (i.e., linear logic). The
model demonstrates that the two ``universes'' work well together, but
also -- surprisingly -- that they do not mix too much, even at higher
order types (we always have an {\em infinite} list of {\em finite}
dimensional CPMs). The control flow is completely handled by the
biproduct completion, and not by the CPM structure.
The adequacy result, moreover, validates that the model is a ``good''
representation of the language. 

One should also note that the product and the coproduct coincide in
our model. For example, the model has morphisms that correspond to a
program returning true with probability $1$ and false with probability
$1$.  We would like to point out that our interpretation is not
{\em surjective}. For example, there are also morphisms in the model
corresponding to ``probability 2''.  (Incidentally, adding terms with
such behavior makes it possible to build a term whose denotation is
$\infty$ -- so the fact that this provably does not happen somehow
captures the sanity of the model).  Interpretations in denotational
models are often not surjective. In fact, it is an open problem to
give a non-syntactic characterization of the image of our
interpretation. Similarly, the problem of full-abstraction is still
open.

\bibliographystyle{abbrv}
\bibliography{bib}

\newcommand{\online}[1]{Available at \url{#1}}
\begin{thebibliography}{10}

\bibitem{danosehrhard}
V.~Danos and T.~Ehrhard.
\newblock Probabilistic coherence spaces as a model of higher-order
  probabilistic computation.
\newblock {\em Inform. Comput.}, 2011.

\bibitem{finsp}
T.~Ehrhard.
\newblock Finiteness spaces.
\newblock {\em MSCS}, 15(4):615--646, 2005.

\bibitem{ll}
J.-Y. Girard.
\newblock Linear logic.
\newblock {\em Th. Comp. Sc.}, 50:1--102, 1987.

\bibitem{Girard88c}
J.-Y. Girard.
\newblock Normal functors, power series and lambda-calculus.
\newblock {\em Ann. Pure Appl. Logic}, 37(2):129--177, 1988.

\bibitem{Girard99coherentbanach}
J.-Y. Girard.
\newblock Coherent {Banach} spaces: a continuous denotational semantics.
\newblock {\em Theoretical Computer Science}, 227:297, 1999.

\bibitem{GoIquantum}
I.~Hasuo and N.~Hoshino.
\newblock Semantics of higher-order quantum computation via geometry of
  interaction.
\newblock In {\em Proceedings of LICS}, pages 237--246, 2011.

\bibitem{Keimel2009}
K.~Keimel and J.~D. Lawson.
\newblock {D}-completions and the {\em d}-topology.
\newblock {\em Annals of Pure and Applied Logic}, 159(3):292 -- 306, 2009.

\bibitem{knill}
E.~H. Knill.
\newblock Conventions for quantum pseudocode.
\newblock Technical Report LAUR-96-2724, Los Alamos National Laboratory, 1996.

\bibitem{lafont:these}
Y.~Lafont.
\newblock {\em Logiques, cat\'egories et machines}.
\newblock PhD thesis, Universit\'e Paris 7, 1988.

\bibitem{LagoMZ11}
U.~D. Lago, A.~Masini, and M.~Zorzi.
\newblock Confluence results for a quantum lambda calculus with measurements.
\newblock {\em Electr. Notes Theor. Comput. Sci.}, 270(2):251--261, 2011.

\bibitem{LairdMM12}
J.~Laird, G.~Manzonetto, and G.~McCusker.
\newblock Constructing differential categories and deconstructing categories of
  games.
\newblock {\em Information and Computation}, 222:247--264, 2013.

\bibitem{LairdMMP13}
J.~Laird, G.~McCusker, G.~Manzonetto, and M.~Pagani.
\newblock Weighted relational models of typed lambda-calculi.
\newblock In {\em {LICS'13}}, 2013.

\bibitem{malherbe2010}
O.~Malherbe.
\newblock {\em Categorical models of computation: partially traced categories
  and presheaf models of quantum computation}.
\newblock PhD thesis, University of Ottawa, 2010.

\bibitem{melliespanorama}
P.-A. Melli\`es.
\newblock Categorical semantics of linear logic.
\newblock {\em Panoramas et Synth\`eses}, 12, 2009.

\bibitem{MelliesTT09}
P.-A. Melli{\`e}s, N.~Tabareau, and C.~Tasson.
\newblock An explicit formula for the free exponential modality of linear
  logic.
\newblock In {\em ICALP'09 (2)}, pages 247--260, 2009.

\bibitem{nielsen02quantum}
M.~A. Nielsen and I.~L. Chuang.
\newblock {\em Quantum Computation and Quantum Information}.
\newblock Cambridge University Press, 2002.

\bibitem{Selinger04}
P.~Selinger.
\newblock Towards a quantum programming language.
\newblock {\em Mathematical Structures in Computer Science}, 14(4):527--586,
  2004.

\bibitem{Selinger04b}
P.~Selinger.
\newblock Towards a semantics for higher-order quantum computation.
\newblock In {\em QPL'04}, TUCS General Publication No 33, pages 127--143,
  2004.

\bibitem{selinger06lambda}
P.~Selinger and B.~Valiron.
\newblock A lambda calculus for quantum computation with classical control.
\newblock {\em Mathematical Structures in Computer Science}, 16(3):527--552,
  2006.

\bibitem{valiron06fully}
P.~Selinger and B.~Valiron.
\newblock On a fully abstract model for a quantum linear functional language.
\newblock In {\em QPL'06}, 2008.

\bibitem{SV09}
P.~Selinger and B.~Valiron.
\newblock Quantum lambda calculus.
\newblock In S.~Gay and I.~Mackie, editors, {\em Semantic Techniques in Quantum
  Computation}, chapter~9, pages 135--172. Cambridge University Press, 2009.

\bibitem{valiron08phd}
B.~Valiron.
\newblock {\em Semantics for a higher-order functional programming language for
  quantum computation}.
\newblock PhD thesis, University of Ottawa, 2008.

\bibitem{ZhaoFan2010}
D.~Zhao and T.~Fan.
\newblock Dcpo-completion of posets.
\newblock {\em Th. Comp. Sc.}, 411(22--24):2167--–2173, 2010.

\end{thebibliography}

\end{document}